\newtheorem{theorem}{Theorem}[section]
\newtheorem{lemma}[theorem]{Lemma}
\newtheorem{proposition}[theorem]{Proposition}
\newtheorem{corollary}[theorem]{Corollary}
\newtheorem{mydefinition}[theorem]{Definition}
\newtheorem{myremark}[theorem]{Remark}
\newtheorem{myexample}[theorem]{Example}
\newtheorem{myfact}[theorem]{Fact}
\newtheorem{myoproblem}[theorem]{Problem}
\newtheorem{myconjecture}[theorem]{Conjecture}
\newenvironment{definition}{\begin{mydefinition}\rm}{\end{mydefinition}}
\newenvironment{remark}{\begin{myremark}\rm}{\end{myremark}}
\newenvironment{example}{\begin{myexample}\rm}{\end{myexample}}
\newenvironment{problem}{\begin{myoproblem}\rm}{\end{myoproblem}}
\newcommand{\CC}{\mathbb C}
\newcommand{\NN}{\mathbb N}
\newcommand{\QQ}{\mathbb Q}
\newcommand{\RR}{\mathbb R}
\newcommand{\ZZ}{\mathbb Z}
\newcommand{\bL}{\underline{L}}
\newcommand{\bVP}{\underline{\mathrm{VP}}}
\newcommand{\chara}{{\mathrm{char}\,}} 
\newcommand{\EC}{{\mathit{EC}}}
\newcommand{\grad}{\mathrm{grad}}
\newcommand{\graph}{\mathrm{graph}}
\newcommand{\ini}{\mathrm{in}}
\newcommand{\PER}{\mathrm{PER}}
\newcommand{\PH}{\mathrm{PH}}
\newcommand{\Par}{\mathrm{PAR}}
\newcommand{\Po}{{\mathrm{P}}}      
\newcommand{\res}{{\mathrm{res}}}
\newcommand{\NC}{{\mathrm{NC}}}
\newcommand{\NP}{{\mathrm{NP}}}
\newcommand{\supp}{\mathrm{supp}}
\newcommand{\sus}{\subseteq}       
\newcommand{\UP}{\mathrm{UP}}
\newcommand{\VNP}{\mathrm{VNP}}
\newcommand{\VP}{\mathrm{VP}}
\begin{document}

\title{The Complexity of Factors of Multivariate Polynomials
\thanks{A preliminary version of this work appeared in  
        Proc.\ 42nd FOCS 2001, pp.~378-385, Oct.~14-17, 2001, Las
        Vegas. The full version has been published at J.~FoCM 4(4):
        369--396, 2004. In this version, we have corrected an error in
        the statement and proof of Theorem~5.7.}}
\author{Peter B\"urgisser\\
Dept.\ of Mathematics and Computer Science\\
University of Paderborn\\
D-33095 Paderborn, Germany\\
buergisser@upb.de}
\date{}

\maketitle
\thispagestyle{empty}

\begin{abstract}
The existence of string functions, which are not polynomial time 
computable, but whose graph is checkable in polynomial time, is a basic 
assumption in crypto\-graphy. 
We prove that in the framework of algebraic complexity, there are no such 
families of polynomial functions of polynomially bounded degree over 
fields of characteristic zero.
The proof relies on a polynomial upper bound on the approximative complexity of 
a factor~$g$ of a polynomial~$f$ in terms of the (approximative) complexity of~$f$
and the degree of the factor~$g$. This extends a result by Kaltofen (STOC 1986). 
The concept of approximative complexity allows to cope with the case that a factor has 
an exponential multiplicity, by using a perturbation argument. 
Our result extends to randomized (two-sided error) decision complexity. 
\end{abstract}

\section{Introduction}\label{se:intro}

Checking or verifying a solution to a computational problem might be easier 
than computing a solution. In a certain sense, this is the contents 
of the famous $\Po\ne \NP$ hypothesis. In~\cite{vali:76-1} Valiant made an attempt 
to clarify the principal relationship between the complexity of checking and 
evaluating. In particular, he asked whether any (string) function, for which values can be 
checked in polynomial time, can also be evaluated in polynomial time.
Cryptographers hope that the answer to this question is negative, since it turns out to be 
intimately connected to the existence of one-way functions. 
Indeed, the inverse~$\varphi$ of a one-way function is not polynomial time computable, but 
membership to the graph of $\varphi$ can be decided in polynomial time. 
The converse is also known to be true~\cite{grse:88,selm:92} and equivalent to $\Po\ne \UP$.

The goal of this paper is to investigate the relationship between the complexity of 
computational and decisional tasks in an algebraic framework of computation, a line  
of research initiated by Lickteig~\cite{lick:88,lick:90}. 
Unless stated otherwise, $k$ denotes a fixed field of characteristic zero. 
Are there families of polynomials $(\varphi_n)$ over $k$, for which checking the value can be 
done with a polynomial number of arithmetic operations and tests, but which cannot be 
evaluated with a polynomial number of arithmetic operations?
We do not know the answer to this question. 
However, we will be able to show that the answer is negative under the restriction that
the degree of~$\varphi_n$ grows at most polynomially in~$n$.  
Actually, our result is slightly weaker in the sense that we know it to be true 
only for a notion of approximative complexity. 

\subsection{Decision, Computation, and Factors} 

We discuss a basic relationship between the complexity of decision and computation 
in our algebraic framework of computation and raise some natural open questions. 

By the {\em straight-line complexity}~$L(g)$ of a multivariate polynomial $g$ over~$k$ 
we understand the minimal number of arithmetic operations sufficient to compute 
$g(X_1,\ldots,X_n)$ by a straight-line program without divisions from the variables~$X_i$ 
and constants in $k$. 
The {\em decision complexity} $C(g)$ of~$g$ is defined as the minimal number of 
arithmetic operations and tests sufficient for an algebraic computation tree to decide 
for given points $x$ in $k^n$ whether $g(x)=0$. 
If $k=\RR$, we allow also $\le$-tests. Clearly, $C(g)\le L(g)+1$
(the $1$ accounts for the zero test). 
We define the {\em exclusion complexity} $\EC(g)$ of $g$ similarly as in \cite{lick:90,bukl:93}
$$
 \EC(g) := \min \{ L(f) \mid f\in k[X_1,\ldots,X_n]\setminus \{0\},\ g\mid f \} .
$$
We clearly have $\EC(g)\le L(g)$.
For formal definitions, we refer to~\cite{bucs:96}.

In the following, we assume that $g$ is the irreducible generator of a hypersurface 
in $k^n$ and either $k=\RR$ or $\CC$.  
Let $f$ be a nonzero polynomial multiple of~$g$, say $f=g^e h$ with a polynomial $h$ 
coprime to $g$ and $e\in\NN_{>0}$.   
Then any straight-line program for $f$ can be used to exclude 
membership to the zeroset of $g$: $f(x)\ne 0$ implies $g(x)\ne 0$
and the converse is also true, provided $h(x)\ne 0$. 
Thus we may consider $\EC(g)$ as a ``generic decision complexity'' of $g$. 

The following well-known lemma provides a link between decisional and 
computational complexity (cf.~\cite{buls:92,bcss:95}). The proof is a rather 
straightforward consequence of the Nullstellensatz.  

\begin{lemma}\label{le:de-comp}
Let $g$ be the irreducible generator of a hypersurface in $\RR^n$ or $\CC^n$. 
Then $\EC(g)\le C(g)$. 
\end{lemma}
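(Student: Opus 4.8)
The plan is to take an algebraic computation tree $\mathcal{T}$ of size $C(g)$ that decides membership to the zeroset $Z(g)\subseteq k^n$ (where $k=\RR$ or $\CC$), and to extract from it a single nonzero polynomial $f$ divisible by $g$ with $L(f)\le C(g)$. First I would look at the leaves of $\mathcal{T}$ labelled ``$g(x)=0$''. Along the path to such a leaf, the branching tests impose a system of polynomial equalities $p_i(x)=0$ and inequalities $q_j(x)\ne 0$ (and, if $k=\RR$, sign conditions $r_\ell(x)>0$ or $\ge 0$), where each $p_i, q_j, r_\ell$ is computed by the straight-line code along that path and hence has straight-line complexity at most $C(g)$. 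The key point is the correctness of $\mathcal{T}$: every $x$ in the (Zariski-dense, in the relevant component) set cut out by this path that is a genuine zero of $g$ must end at a ``$g(x)=0$'' leaf, so on the locally closed set defined by the path constraints, vanishing of all the $p_i$ forces vanishing of $g$.

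The heart of the argument is then the Nullstellensatz step. Fix one ``zero'' leaf with equality constraints $p_1,\dots,p_m$. I claim $g$ vanishes on the Zariski closure of $V(p_1,\dots,p_m)\setminus\bigcup_j V(q_j)$; over $\CC$ this is immediate from correctness of the tree together with the fact that a hypersurface complement of a variety is Zariski dense in it, and over $\RR$ one passes to complex points or uses that $Z(g)$ is Zariski dense in its complexification since $g$ is irreducible over $\RR$ (here I would need that $g$ does not become a product of a polynomial with its conjugate in a bad way — but irreducibility as the generator of a real hypersurface is exactly the hypothesis). By the Nullstellensatz, $g$ (being irreducible, hence generating a radical ideal) divides some product $p_{i}\cdot(\text{power of the }q_j\text{'s})$ — more precisely $g \mid q^N p_i$ for a suitable $i$ and the product $q=\prod_j q_j$ and some $N$. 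Since $g$ is irreducible and $g\nmid q_j$ (as $q_j\ne 0$ on the relevant set which meets $Z(g)$), we get $g\mid p_i$ for some $i$ on that path. That $p_i$ is nonzero and satisfies $L(p_i)\le C(g)$, which gives $\EC(g)\le C(g)$.

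I should be slightly careful that some ``zero'' leaf is actually reachable together with $Z(g)\ne\emptyset$: since $g$ is the generator of a genuine hypersurface, $Z(g)$ is nonempty (and over $\RR$, nonempty as a real hypersurface by the standing assumption), so at least one point of $Z(g)$ reaches some ``$g=0$'' leaf, and that leaf's path supplies the desired $p_i$. The main obstacle I anticipate is the real case: ensuring that the real zeroset $Z_\RR(g)$ is Zariski dense in the complex hypersurface (so that a polynomial vanishing on enough real points of a path cell must vanish on $g$), and handling the ordering tests $r_\ell\gtrless 0$ — these do not fit the Nullstellensatz directly, but one circumvents them by noting that the real points satisfying strict sign conditions still form a full-dimensional semialgebraic set inside the path cell, hence Zariski dense in it, so the sign constraints can simply be dropped for the purpose of the vanishing argument. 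Everything else is bookkeeping: the straight-line program for $p_i$ is literally the arithmetic performed along the chosen root-to-leaf path, so no size is lost.
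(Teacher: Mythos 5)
Your high-level plan---extract from the straight-line code along an accept path a test polynomial divisible by $g$---is sound, but the step ``by the Nullstellensatz, $g\mid q^N p_i$ for a suitable $i$'' is where the argument breaks, because the Nullstellensatz simply does not say that. From ``$g$ vanishes on $V(p_1,\dots,p_m)\setminus V(q)$'' you get only that some power $(gq)^M$ lies in the ideal $(p_1,\dots,p_m)$, and this does not force $g$ into any single $p_i$. Concretely, take $p_1=g+h$ and $p_2=g-h$ with $h$ coprime to $g$: then $2g=p_1+p_2\in(p_1,p_2)$, so $g$ vanishes on $V(p_1,p_2)\subseteq Z(g)$, yet $g$ divides neither $p_i$. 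Nothing prevents a correct tree from having an accept leaf whose path tests exactly such polynomials, its cell being a proper subvariety of $Z(g)$; so fixing an arbitrary reachable accept leaf, as you do, is not enough.

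The missing ingredient is to use the irreducibility of $Z(g)$ to single out the \emph{right} accept leaf before invoking any ideal-theoretic fact. Since $g$ is irreducible and (over $\RR$, by hypothesis) generates $I(Z(g))$, the set $Z(g)$ is an irreducible algebraic variety, and it is covered by the finitely many cells $W_\ell$ of the accept leaves; hence some accept leaf has $\overline{W_\ell}=Z(g)$. Along the path to \emph{that} leaf, every polynomial $p$ tested to be zero vanishes on $W_\ell$, hence on $\overline{W_\ell}=Z(g)$, hence lies in $I(Z(g))=(g)$; the only place the Nullstellensatz is really used is in the form $I(V(g))=(g)$ for irreducible $g$. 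Thus $g\mid p$ for a polynomial of straight-line length at most $C(g)$. (One must also check that this path has at least one zero-branch, which holds because a nonempty open cell cannot sit inside a hypersurface.) Your dismissal of the real order tests is also unsafe: for $r=-x_1^2$, the cell of the branch $r\ge 0$ is $\{x_1=0\}$, which contains no points with strict sign, so the strictly-signed points are not dense in the cell; when the cell is cut out by such a boundary, the desired multiple of $g$ is the order-test polynomial rather than some $p_i$. The paper itself gives no proof but points to \cite{buls:92,bcss:95}, where these points are handled.
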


Over the reals, we need both assumptions that $g$ is irreducible 
(see the comment on question~(2) below) and that 
the zeroset of $g$ is a hypersurface (take $g=X_1^{2m}+\ldots + X_n^{2m}$ over $\RR$). 
Over the complex numbers, one can relax these assumptions and show that 
$\EC(g) \le C(g) + r-1$ if $g$ is squarefree with $r$ irreducible factors. 
Moreover, we remark that the conclusion of this lemma remains true over any infinite field~$k$ 
if $g$ is the generator of the graph of a polynomial $\varphi$, 
that is, $g =Y-\varphi(X_1,\ldots,X_n)$.

Under the assumption of the lemma we have $\EC(g)\le C(g) \le L(g)+1$. 
Asking about inequalities in the reverse direction, it 
is natural to raise the following questions:
\begin{eqnarray}\label{eq:1}
 L(g) &\le& \EC(g)^{O(1)} \ ?\\
 L(g) &\le& C(g)^{O(1)} \ ? \label{eq:2}\\
 L(\varphi) &\le& C(\graph(\varphi))^{O(1)} \ ? \label{eq:3} 
\end{eqnarray}
Again, $g$ denotes the irreducible generator of a hypersurface in $\RR^n$ or $\CC^n$
and $\varphi$ denotes a polynomial. 
We have the following chain of implications:
$(\ref{eq:1}) \Rightarrow (\ref{eq:2}) \Rightarrow (\ref{eq:3})$.

We believe that all these three questions have negative answers, but we have been  
unable to prove this for irreducible~$g$. 
However, the following counterexamples are known for questions~(\ref{eq:1}) and (\ref{eq:2}) 
when allowing for reducible polynomials~$g$, and assuming $k=\RR$ for question~(\ref{eq:2}).  

Referring to question~(\ref{eq:1}), 
there exist univariate polynomials $f$ having reducible factors $g$ with a complexity exponential 
in the complexity of $f$, a fact first discovered by Lipton and Stockmeyer~\cite{list:78}.
The simplest known example illustrating this is as follows: 
Consider $f_n = X^{2^n} -1 =\prod_{j<2^n} (X-\zeta^j)$, 
where $\zeta = \exp (2\pi i/2^n)$. By repeated squaring we get $L(f_n) \le n+1$. 
On the other hand, one can prove that for almost all $M\sus\{0,1,\ldots,2^n-1\}$ 
the random factor $\prod_{j\in M} (X - \zeta^j)$ has a complexity which is 
exponential in~$n$, cf.~\cite[Exercise~9.8]{bucs:96}. 
A similar reasoning can be made over the reals using Chebychev polyomials.

Commenting on question~(\ref{eq:2}), we remark that the answer is negative if we drop  
the irreducibility assumption and assume $k=\RR$. This follows from the 
following trivial example from~\cite{buls:92}, which shows that 
$L(g)$ may be exponentially larger than $C(g)$:  
Let $g_n\in\RR[X]$ have $n$ distinct real roots. 
Then $C(g_n) \le \log n$ using binary search, but $L(g_n)\ge n$ if the roots of $g$ 
are algebraically independent over $\QQ$. 

We regard to question~(\ref{eq:3}), we note that its truth would imply that there are 
no ``one-way functions'' in the algebraic setting of computations with polynomials. 

It would be interesting to find out whether the truth of the above questions is 
equivalent to the collapse of some complexity classes, similarly as $\Po= \UP$
in the bit-model.  

\subsection{Main Results} 

The counterexamples discussed above established polynomials~$g$ whose degree 
was exponential in the exclusion or decision complexity of $g$, respectively. 
We restrict now our attention to factors $g$ having a degree polynomially 
bounded in the complexity of~$f$. 

The {\em Factor Conjecture} from~\cite[Conj.~8.3]{buer:00-3} states that 
for polynomials~$g$ 
\begin{equation}\label{eq:fa-con}
 L(g) \ \le\ (\EC(g) + \deg g)^{O(1)} .
\end{equation}
A partial step towards establishing this conjecture is an older result due to 
Kaltofen~\cite{kalt:86-1}, which can be seen as a byproduct of his 
achievements~\cite{kalt:89-1} to factor polynomials given by a straight-line program 
representations (see also~\cite{katr:90}). 
Kaltofen proved that the complexity of any factor~$g$ of $f$ is 
polynomially bounded in the complexity of~$f$ and in the degree and 
the {\em multiplicity} of the factor~$g$. 

Before stating the precise result, let us fix some notation. For the remainder of this paper, 
$M(d)$ denotes an upper bound on the complexity for the 
multiplication of two univariate polynomials of degree~$d$ over~$k$, that is, 
for computing the coefficients of the product polynomial from the 
coefficients of the given polynomials.
It is well-known that $M(d) \le O(d\log d)$ for $k=\RR$ or $\CC$, cf.~\cite{bucs:96}. 
We will assume that $M(d_1)\le M(d_2)$ for $d_1\le d_2$ and the subadditivity property
$M(d_1) + M(d_2) \le M(d_1 + d_2)$.  

Here is the precise statement of Kaltofen's result, which was independently found by 
the author, compare~\cite[Thm.~8.14]{buer:00-3}. 

\begin{theorem}\label{th:fa-I}
Let $f=g^e h$ with coprime polynomials $g,h\in k[X_1,\ldots,X_n]$. 
Let $d\ge 1$ be the degree of~$g$. We suppose that $k$ is a field of 
characteristic zero. Then we have 
$$
 L(g) \ \le\ O\big( M(d^3e) (L(f) + d\log e) \big)   .
$$
\end{theorem}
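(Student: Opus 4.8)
The plan is to reduce the general multivariate, high-multiplicity situation to the case of a univariate polynomial with a \emph{simple} root, where Newton iteration on power series provides an efficient algorithm, and then to recover $g$ from its power-series factor by one further algebraic manipulation. First I would use the standard reduction of multivariate straight-line complexity to a "one-variable-at-a-time" setting: after a generic linear change of coordinates (which costs only $O(n)$ operations and is harmless for characteristic zero), we may assume $g$ is monic in $X_1$ of degree $d$ and that the discriminant of $f$ with respect to $X_1$ does not vanish at the origin of the remaining variables. Substituting generic constants for $X_2,\dots,X_n$ lets me treat $f$ as a univariate polynomial in $X_1$ over the field $K=k(X_2,\dots,X_n)$; Hensel/Newton lifting will be carried out $X_i$-adically, and at the end the needed coefficients live in polynomials of controlled degree so that a final interpolation step reconstructs the genuine multivariate $g$. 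This part is routine bookkeeping of the kind found in Kaltofen's papers, so I would cite it and move on.

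The crucial new ingredient is the \emph{perturbation argument} that kills the exponential multiplicity $e$. Since $f = g^e h$ with $e$ possibly huge, I cannot afford any dependence on $e$ in the exponent. Instead I would introduce a new indeterminate $\varepsilon$ and consider $\tilde f = f + \varepsilon \cdot (\text{something generic})$, or more precisely replace $f$ by a one-parameter family whose generic member is squarefree along the component defined by $g$; the key point is that $g$ still divides a polynomial closely related to $\tilde f$, but now with multiplicity $1$, so that Newton iteration converges. The complexity of $\tilde f$ exceeds that of $f$ by $O(1)$. One then computes, over the power-series ring $k[[\varepsilon]]$ (truncated to the precision dictated by the degree bounds, which is polynomial in $d$ and $e$ — this is where the $M(d^3e)$ and $d\log e$ factors enter), the factor of $\tilde f$ vanishing on the relevant component, and finally \emph{specializes} $\varepsilon\to 0$ to recover $g$ up to a scalar. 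The specialization is legitimate because $g$, being the honest factor, has no pole at $\varepsilon=0$; this is precisely the subtlety that earlier approaches handled by an approximative-complexity detour, but here the degree bound on $g$ lets one do it exactly.

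Concretely the algorithm runs as follows. Compute by Newton iteration a power-series root (or more efficiently a power-series factorization) $\tilde f \equiv \tilde g \cdot \tilde h \pmod{X_1^{N}}$ with $\tilde g$ monic of degree $d$ in $X_1$ — each Newton step doubles the precision and costs $O(M(\text{current degree}))$ arithmetic operations, and also requires the modular inverse of the cofactor, whose complexity is again $O(M(\cdot))$; after $O(\log N)$ steps one reaches precision $N$, which must be taken large enough (polynomial in $d$ and $e$, hence the $d^3 e$) that the true factor $g$ is the \emph{unique} degree-$d$ polynomial congruent to $\tilde g$ and dividing $\tilde f$. Reconstructing $g$ from the truncated series $\tilde g$ is a Hermite–Padé / linear-algebra step that one does with fast polynomial arithmetic in $O(M(N)\log N)$ operations, and this dominates. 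Finally reverse the coordinate change, substitute back the variables $X_2,\dots,X_n$ by interpolation, and specialize $\varepsilon\to0$. Summing the per-stage costs and using subadditivity of $M$ gives the bound $O\!\bigl(M(d^3e)(L(f)+d\log e)\bigr)$, the factor $L(f)$ appearing because each arithmetic step in the iteration must itself evaluate the straight-line program for $\tilde f$.

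The main obstacle I anticipate is making the perturbation genuinely produce a \emph{simple} factor while keeping all degrees polynomially bounded: one must verify that for the perturbed family the component corresponding to $g$ really does occur with multiplicity one (so Newton lifting applies), that the precision $N$ needed for unique reconstruction is only polynomial in $d$ and $e$ and not, say, in $\deg h$, and that the final specialization $\varepsilon\to 0$ does not introduce a division by a quantity vanishing at $\varepsilon=0$. Controlling these three points simultaneously — which is exactly what went wrong in the erroneous earlier version mentioned in the abstract — is the delicate heart of the argument; everything else is a careful but standard accounting of fast-arithmetic costs.
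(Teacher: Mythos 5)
Your proposal imports the $\varepsilon$-perturbation machinery of the paper's \emph{main} theorem (Theorem~\ref{th:main}) into the proof of the classical Kaltofen bound, and that does not work here. The step you call ``finally specialize $\varepsilon\to 0$ to recover $g$'' and declare ``legitimate because $g$ has no pole at $\varepsilon=0$'' is exactly the gap: for straight-line complexity~$L$ it is the \emph{intermediate} results of the computation, not the final output, that must avoid poles at $\varepsilon=0$, and after a perturbation they generically do not. Example~\ref{ex:warning} in the paper exhibits a Newton iterate $\Phi_1$ over $k(\epsilon)[X]$ whose fourth-order Taylor coefficient already has a pole at $\epsilon=0$ even though $(\Phi_1)_{\epsilon=0}$ is a polynomial; and Theorem~\ref{th:OA} shows that the order of approximation one would need to clear such poles can be exponential in the complexity, so truncating the $\epsilon$-series at polynomial precision does not rescue the argument. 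This is precisely why the paper introduces the approximative measure $\bL$ for Theorem~\ref{th:main} and cannot simply assert an $L$-bound there. You cannot use the detour and then claim to exit it for free.

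The intended proof of Theorem~\ref{th:fa-I} uses no perturbation at all, and, relatedly, your opening motivation (``I cannot afford any dependence on $e$'') misreads the statement: the bound $O\bigl(M(d^3e)(L(f)+d\log e)\bigr)$ \emph{does} depend polynomially on the multiplicity~$e$, and that dependence is the price one pays for staying within exact complexity. After a linear change making $g$ monic of degree $d$ in $Y:=X_n$ and choosing a good specialization $\xi$ so that $g^{(0)}=g(\xi,Y)$ and $h^{(0)}=h(\xi,Y)$ are coprime, one Hensel-lifts the \emph{entire} factorization $f=g^e h$ with respect to the total degree in the remaining $X$-variables, using the recursion~(\ref{eq:hensel-rec}) to extract each homogeneous part $g^{(s+1)}$ from the $(e-1)$th $g^{(0)}$-adic coefficient of $u\Delta^{(s+1)}$. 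The factor $M(d^3e)$ comes from carrying the $g^{(0)}$-adic expansion up to order about $de$ for bivariate data of degree $O(d)$ in $Y$ and $O(d)$ in the $X$-degree; the $d\log e$ comes from $\log e$ squarings at each of the $d$ lifting stages. Keeping $e$ explicit and lifting the full $e$-th power is what makes the derivative term $e(g^{(0)})^{e-1}h^{(0)}$ invertible modulo $g^{(0)}$ in characteristic zero, so Newton/Hensel applies without any perturbation.
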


Thus our Factor Conjecture claims that the dependence on the multiplicity can be omitted.
It is known~\cite{kalt:86-1} that this is true in the case $f=g^e$, in which case 
$L(g) \le O(M(d)L(f))$ with $d=\deg g$, see Proposition~\ref{pro:root} in the appendix.

The main result of this paper states that the 
dependence on the multiplicity can indeed be omitted when switching to an approximative complexity
measure. The approximative complexity $\bL(g)$ of a polynomial~$g$ is the minimal cost of
``approximative straight-line programs'' computing approxi\-mations of~$g$ with any 
precision required. A formal definition will be given in Section~\ref{se:approx}.

The precise formulation is as follows: 

\begin{theorem}\label{th:main}
Let $k$ be a field of characteristic zero. 
Assume that $n$ by $n$ matrices over~$k$ can be multiplied with 
$O(n^\gamma)$ arithmetic operations in~$k$.
For $g\in k[X_1,\ldots,X_n]$ of degree~$d$ we have 
$$
 \bL(g) \ \le\ O\big(M(d)M(d^4)\EC(g) + d^{2\gamma}M(d)^2\big) .
$$
\end{theorem}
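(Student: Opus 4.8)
The plan is to reduce the problem of bounding $\bL(g)$ to Kaltofen's Theorem~\ref{th:fa-I} by means of a perturbation argument that kills the dependence on the multiplicity $e$. We start from the definition of $\EC(g)$: there is a nonzero $f$ with $g\mid f$ and $L(f)\le\EC(g)$. Write $f=g^eh$ with $g,h$ coprime. Theorem~\ref{th:fa-I} already gives a bound on $L(g)$, but with a factor $M(d^3e)$ that is useless when $e$ is exponential. The key idea is to perturb $f$ by a new indeterminate (the ``approximation parameter'') $\varepsilon$ so that the perturbed polynomial $\tilde f$ is squarefree in the factor corresponding to $g$, or at least has bounded multiplicity, while $\tilde f$ still has small straight-line complexity and still ``sees'' $g$ in the limit $\varepsilon\to 0$. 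This is exactly the setting in which $\bL$, rather than $L$, is the natural measure.

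The concrete steps I would carry out are as follows. First, I would recall (or set up) the formalism of approximative complexity from Section~\ref{se:approx}: $\bL(g)\le c$ means there is a polynomial $G\in k(\varepsilon)[X_1,\dots,X_n]$, computable with $c$ operations over $k(\varepsilon)$ (or $k[\varepsilon]_{(\varepsilon)}$-operations) such that $G=g+\varepsilon\cdot(\text{something regular at }\varepsilon=0)$. Second, starting from $f=g^eh$ with $L(f)\le \EC(g)$, I would construct an auxiliary polynomial in which the offending high multiplicity is removed. The cleanest route: consider $f(X)+\varepsilon$ (or a suitable perturbation $f(X)-\varepsilon\,u(X)$ for a cheap polynomial $u$); one shows that $f+\varepsilon$ is squarefree over $k(\varepsilon)$, and—crucially—that its irreducible factors specialize, as $\varepsilon\to0$, to the radical of $f$, so that one of them specializes to (a scalar multiple of) $g$. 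Third, I would bound the degree of the relevant factor of $f+\varepsilon$: it is at most $\deg f$, but to get the stated bound I need it controlled in terms of $d=\deg g$ alone, so I would instead work with a projection/restriction of the variables or with a generic substitution that reduces to an essentially univariate or low-degree situation, keeping $\deg\le O(d)$ and $L\le O(\EC(g)) + (\text{lower-order terms})$. Fourth, apply Theorem~\ref{th:fa-I} over the field $k(\varepsilon)$ to this squarefree perturbation: now $e=1$, so the bound reads $L_{k(\varepsilon)}(\tilde g)\le O(M(d^3)(L(\tilde f)+d))$, and then convert $k(\varepsilon)$-complexity back to approximative $k$-complexity and specialize $\varepsilon\to0$ to recover $g$. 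The matrix-multiplication exponent $\gamma$ and the extra term $d^{2\gamma}M(d)^2$ enter through the linear-algebra cost of the factorization/Hensel-lifting machinery (solving for the coefficients of the factor, or a Padé/interpolation step) that is implicit in Kaltofen's construction applied with the extra parameter $\varepsilon$.

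The main obstacle I anticipate is the second/third step: arranging a perturbation that simultaneously (a) is cheap to compute (adds only $O(d^{2\gamma}M(d)^2)$-type overhead, not more), (b) genuinely reduces the multiplicity to $1$ without introducing spurious high-degree factors, and (c) specializes back to $g$ (and not to $g\cdot(\text{junk})$ or to a proper power) as $\varepsilon\to0$. Controlling the degree of the relevant irreducible factor of the perturbed polynomial is delicate, because a generic perturbation of $f=g^eh$ typically has an irreducible factor of degree close to $\deg f=ed+\deg h$, which is far too large; so one must perturb cleverly—e.g.\ perturb only within the ``$g$-direction'' by working modulo a generic point of the hypersurface $g=0$, or use the Newton-polygon/Puiseux structure to isolate a factor of degree $O(d)$. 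Making this precise, and checking that the approximative straight-line program one obtains has the claimed cost, is the technical heart of the argument; the appeal to Theorem~\ref{th:fa-I} and the specialization $\varepsilon\to0$ are then comparatively routine. A secondary point to be careful about is that $\bL$ must be defined so that specialization $\varepsilon\to0$ is legitimate (no poles at $\varepsilon=0$ in the intermediate results), which typically requires clearing denominators and tracking powers of $\varepsilon$ through the whole construction.
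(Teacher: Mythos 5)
Your high-level instinct is right: perturb $f$ by a new parameter $\epsilon$, use the lower semicontinuity built into $\bL$ to pass to the limit $\epsilon\to 0$, and thereby erase the dependence on the multiplicity $e$. But the concrete route you propose --- make $\tilde f := f+\epsilon(\cdots)$ squarefree over $k(\epsilon)$ and then invoke Theorem~\ref{th:fa-I} with $e=1$ --- runs into exactly the obstacle you flag yourself and cannot be pushed through. The irreducible factors of a generic perturbation of $g^e h$ have degrees on the order of $\deg f = ed + \deg h$, and Theorem~\ref{th:fa-I} then only gives a bound polynomial in $\deg f$, not in $d$. There is no obvious local surgery (``perturb only in the $g$-direction,'' ``use Newton-polygon structure'') that both keeps the relevant factor of degree $O(d)$ and keeps the straight-line cost of the perturbed polynomial near $\EC(g)$; this is a genuine gap, not a routine technicality.

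The paper sidesteps the degree problem by never factoring the perturbed polynomial at all. After a linear change of coordinates one arranges $g(0,0)=0$, $h(0,0)\ne 0$, $\partial_Y g(0,0)\ne 0$ and $g$ monic of degree $d$ in $Y$. The perturbation is $F(X,Y):=f(X,Y+\epsilon)-f(0,\epsilon)$; it does not make $F$ squarefree, but it does make $\partial_Y F(0,0)$ a nonzero element of $k(\epsilon)$ (roughly $e\lambda^e h(0,0)\,\epsilon^{e-1}+O(\epsilon^e)$). This is all that Newton iteration needs. One iterates Newton on $F$ directly to obtain a power series $\Phi\in k(\epsilon)[[X]]$ with $F(X,\Phi)=0$, $\Phi(0)=0$; the nontrivial claim (Proposition~\ref{pro:graph}) is that each iterate $\Phi_\nu$ is defined over the local ring $R$ at $\epsilon=0$ and that the homogeneous parts of $\Phi_\nu$ up to degree $<2^\nu$ specialize, as $\epsilon\to 0$, to those of the unique power series $\varphi$ with $g(X,\varphi)=0$, $\varphi(0)=0$. (Example~\ref{ex:warning} shows one must argue about the truncations, not about $(\Phi_\nu)_{\epsilon=0}$.) This yields $\bL(\varphi^{(1)},\ldots,\varphi^{(D)})\le O(M(D^2)\bL(f))$. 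In the second step one reconstructs $g$ as the minimal polynomial of $\varphi$: Lemma~\ref{le:minpol} shows, via a resultant bound of the LLL type, that the first $2d^2$ homogeneous parts of $\varphi$ determine the bihomogeneous coefficients $g_i^{(\alpha)}$ ($i,\alpha\le d$) as the unique solution of a linear system over $k(X)$, which is solved with $O(d^{2\gamma})$ operations and then made division-free at an extra $M(d)$ factor. A final $M(d)$ factor accounts for eliminating the degree-$\le d$ field extension used to find the base point. Setting $D=2d^2$ gives the stated bound. The point you missed is that only $O(d^2)$ Taylor coefficients of the local parametrization are needed, and $d=\deg g$ is part of the hypothesis --- so the degree of the perturbed $F$ and the multiplicity $e$ never enter the final count.
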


We remark that the ``exponent~$\gamma$ of matrix multiplication'' may be chosen as $2\le\gamma < 2.38$, 
see~\cite{cowi:90,bucs:96}. 

\begin{remark}\label{re:main-supp}
There is certainly room for improvement in this bound. 
In fact, the proof of Theorem~\ref{th:main} yields better estimates in the following cases.
\begin{enumerate}
\item If $g$ is the generator of the graph of a polynomial function $\varphi$, we obtain 
$\bL(g) \le O\big(M(d^2)\EC(g)\big)$. 
\item We have 
$\bL(g) \le O\big(M(d^4)\EC(g) + d^{2\gamma}M(d)\big)$
if $g$ is the irreducible generator of a hypersurface in $\RR^n$ or $\CC^n$. 
\end{enumerate}
\end{remark}

An interesting consequence is the following degree bounded version of question~(\ref{eq:3}):

\begin{corollary}\label{cor:main}
The approximative complexity $\bL(\varphi)$ of a polynomial~$\varphi$ is polynomially 
bounded in the decision complexity of the graph of~$\varphi$ and the degree $d$ of $\varphi$,
namely 
$\bL(\varphi) \le O\big(M(d^2)\,C(\graph(\varphi))\big)$.  
This remains true if we allow randomization with two-sided error. 
\end{corollary}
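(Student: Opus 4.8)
The plan is to derive Corollary~\ref{cor:main} from Theorem~\ref{th:main} (more precisely, from the sharper estimate in part~(1) of Remark~\ref{re:main-supp}) by applying it to the graph generator $g := Y - \varphi(X_1,\ldots,X_n)$, and to combine this with the inequality $\EC(g) \le C(\graph(\varphi))$. First I would note that $g$ is the generator of the graph of $\varphi$ and that $\deg g = \max(1,d) = O(d)$, so Remark~\ref{re:main-supp}(1) gives $\bL(g) \le O(M(d^2)\,\EC(g))$. Next, since $g = Y - \varphi$ has the special form covered by the last comment after Lemma~\ref{le:de-comp}, the exclusion complexity of the graph generator is bounded by the decision complexity of the graph: $\EC(\graph(\varphi)) = \EC(g) \le C(g) = C(\graph(\varphi))$, valid over any infinite field and in particular over our characteristic-zero field~$k$. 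Finally, from an approximative straight-line program for $g = Y - \varphi$ one recovers one for $\varphi$ at the cost of at most one extra subtraction (compute $g$, then $Y - g = \varphi$, or simply read off $\varphi$ from the perturbed polynomial), so $\bL(\varphi) \le \bL(g) + O(1)$. Chaining these three steps yields $\bL(\varphi) \le O(M(d^2)\,C(\graph(\varphi)))$, which is the first assertion.

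For the randomized version, I would invoke the standard fact that a two-sided error randomized algebraic computation tree deciding membership in $\graph(\varphi)$ can be derandomized, or at least converted into a deterministic tree, with only a polynomial blow-up in size when the object being decided is an irreducible hypersurface (here $Y-\varphi=0$). The key point is that $\graph(\varphi)$ is a hypersurface defined by a single irreducible polynomial, so a randomized tree that is correct with probability $\ge 2/3$ on every input must in fact ``know'' this hypersurface: fixing a generic setting of the random coins produces a deterministic tree whose zero set agrees with $\{g=0\}$ off a lower-dimensional exceptional set, and a Nullstellensatz/Bézout argument (the same kind underlying Lemma~\ref{le:de-comp}) then bounds $\EC(g)$ by the size of this deterministic tree, hence by the size of the original randomized tree up to a polynomial factor. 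Plugging this bound on $\EC(g)$ into Theorem~\ref{th:main}/Remark~\ref{re:main-supp}(1) as before gives the randomized statement.

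The main obstacle is the randomized case: one has to make precise what ``randomized two-sided error algebraic computation tree'' means (where do the random choices live, and how is the error probability measured — over a finite subset of $k$, or a measure on $k^{?}$), and then to argue that such a tree can be collapsed to a deterministic one of comparable size when deciding an irreducible hypersurface. This requires an averaging/interpolation argument showing that for most coin settings the resulting deterministic tree still computes the correct membership predicate on a Zariski-dense set, combined with the observation that the generic decision complexity $\EC$ only depends on this dense set of correct answers, not on the behaviour on a thin exceptional locus. The deterministic part of the corollary, by contrast, is essentially a bookkeeping exercise once Theorem~\ref{th:main} and Lemma~\ref{le:de-comp} are in hand.
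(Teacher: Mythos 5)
Your deterministic argument is exactly the paper's: apply Remark~\ref{re:main-supp}(1) to the graph generator $g=Y-\varphi$, use the remark following Lemma~\ref{le:de-comp} to get $\EC(g)\le C(g)=C(\graph(\varphi))$ over any infinite field, and recover $\bL(\varphi)\le\bL(g)+1$ by one subtraction. For the randomized claim, however, the paper does not reprove anything: it simply invokes Meyer auf der Heide~\cite{meye:85} and~\cite{bukl:93,ckklw:95}, where the reduction from randomized to deterministic algebraic decision complexity (with only polynomial blow-up) is carried out. Your sketch reconstructs the spirit of that derandomization but is not quite right as stated: with two-sided error, fixing a \emph{single} generic coin setting produces a deterministic tree that may err on a full-dimensional fraction of inputs, not merely on a lower-dimensional exceptional locus, so the Nullstellensatz step does not go through directly. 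The cited results circumvent this by a more careful majority/averaging argument over several coin settings (or an amplification-plus-union-bound over a suitable finite grid), and the right move here is to cite them rather than attempt an ad hoc reconstruction. With that replacement, the proposal matches the paper's proof.
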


Coming back to the discussion of one-way functions, we remark that 
Sturtivant and Zhang~\cite{stza:90} obtained the following related result, which excludes the 
existence of certain one-way functions in the algebraic framework of computation. 
Let $\psi\colon k^n\to k^n$ be bijective such that $\psi$ as well as $\psi^{-1}$ are polynomial mappings. 
Then the complexity to evaluate $\psi$ is polynomially bounded in the complexity to evaluate the 
inverse~$\psi^{-1}$ and the maximal degree of the component functions of~$\psi$. 
Again, it is unknown whether the degree restriction can be omitted. 

The paper is organized as follows: 
In Section~\ref{se:approx} we introduce the concept of approximative complexity. 
Section~\ref{se:ac-factors} contains the proof of the main result. 
We then shortly discuss some applications in Section~\ref{se:appl}, where 
we also build in the concept of approximative complexity 
into Valiant's algebraic $\Po$-$\NP$ framework~\cite{vali:79-3,vali:82} 
(see also~\cite{bucs:96,buer:00-3}). 
Section~\ref{se:approx-supp} is devoted to a more detailed analysis of the 
concept of approximative complexity.  
Finally, the appendix contains a proof of Theorem~\ref{th:fa-I}. 

For some other aspects of the issues discussed in this paper see~\cite{buer:01b}.

\bigskip

\noindent{\bf Acknowledgments:} Thanks go to Erich Kaltofen for communicating to me his 
paper~\cite{kalt:86-1} and to an anonymous referee for pointing out the reference~\cite{stza:90}. 
I am grateful to Alan Selman for answering my questions about the complexity of one-way functions. 

\bigskip

\noindent{\bf Note added in proof:} Thomas Lickteig informed me that his unpublished 
papers~\cite[\S 4]{lick:88} and \cite[Thm.~(H.3)]{lick:90} already contain a proof 
of the central result in Section~\ref{sse:graph} (Proposition~\ref{pro:graph}), 
which is based on the same method.

\section{Approximative Complexity}\label{se:approx}

In complexity theory it has proven useful to study ``approximative algorithms'', 
which use arithmetic with infinite precision
and nevertheless only give us an approximation of the solution to be computed, 
however with any precision required. This concept was systematically studied 
in the framework of bilinear complexity (border rank) and there it has turned 
out to be one of the main keys to the currently best known fast matrix multiplication 
algorithms~\cite{cowi:90}. We refer to~\cite[Chap.~15]{bucs:96} and the references 
there for further information. 

Although approximative complexity is a very natural concept, it has 
been investigated in less detail for computations of polynomials or rational functions. 
Originally, it had been introduced by Strassen in a topological way~\cite{stra:74-1}. 
Griesser~\cite{grie:86-1} generalized most of the known lower bounds for 
multiplicative complexity to approximative complexity. 
Lickteig systematically studied the notion of approximative complexity with 
the goal of proving lower bounds~\cite{lick:90}. 
In Grigoriev and Karpinski~\cite{grka:97} the notion of approximative complexity 
is also employed for proving lower bounds. 

It is not known how to meaningfully relate the complexity of trailing coefficients or 
of factors of a polynomial to the complexity of the polynomial itself. 
However, by allowing approximative computations, we are able to establish 
quite satisfactory reductions in these cases. The deeper reason why this is possible 
seems to be the lower semicontinuity of the approximative complexity, which allows
a controlled passage to the limit and can be used in perturbation arguments. 

Assume the polynomial $f$ is expanded with respect to~$Y$:
$$
 f = f_q(X_1,\ldots,X_n) Y^q + f_{q+1}(X_1,\ldots,X_n)Y^{q+1} + \ldots   .
$$ 
We do not know whether the complexity of the trailing coefficient~$f_q$ can be 
polynomially bounded in the the complexity of $f$.
However, we can make the following observation. For the moment assume that~$k$ 
is the field of real or complex numbers. We have
$\lim_{y\to 0} y^{-q} f(X,y) = f_q(X)$ and 
$L(f(X,y)) \le L(f)$ for all $y\in k$. 
Thus we can approximate $f_q$ with arbitrary precision by polynomials 
having complexity at most~$L(f)$. 
We will say that $f_q$ has ``approximate complexity'' at most $L(f)$.

In what follows, we will formalize this in an algebraic way; a topological 
interpretation will be given later. 
Throughout the paper, $K:=k(\epsilon)$ is a rational function field in the 
indeterminate $\epsilon$ over the field~$k$ and $R$ denotes the local subring of $K$ 
consisting of the rational functions defined at $\epsilon=0$. 
We write $F_{\epsilon=0}$ for the image of $F\in R[X]$ 
under the morphism $R[X]\to k[X]$ induced by $\epsilon\mapsto 0$. 

\begin{definition}\label{def:bL}
Let $f\in k[X_1,\ldots,X_n]$. 
The {\em approximative complexity} $\bL(f)$ 
of the polynomial $f$ is the smallest natural number $r$ such that there exists 
$F$ in $R[X_1,\ldots,X_n]$ satisfying 
$F_{\epsilon=0} = f$ and $L(F) \le r$. 
Here the complexity~$L$ is to be interpreted with respect to the larger 
field of constants~$K$. 
\end{definition}

Even though $L$ refers to division-free straight-line programs, 
divisions may occur implicitly since our model allows the free use of 
any elements of~$K$ as constants (e.g., division by powers of $\epsilon$). 
In fact, the point is that even though $F$ is defined with respect to the 
morphism $\epsilon\mapsto 0$, the intermediate results of the computation may not be so!  
Note that $\bL(f)\le L(f)$. 

We remark that the assumption that any elements of $K$ are free constants is just 
made for conceptual simplicity. 
We may as well require to build up the needed elements of~$K$ from $\epsilon, \epsilon^{-1}$ and 
elements of~$k$. It is easy to see that this would not change our main 
result (i.e., Theorem~\ref{th:main}).

Assume that $\bL(f)\le r$ over $k=\RR$, say 
$F_\epsilon(x) = f(x) + \epsilon R_\epsilon(x)$ and $L(F_\epsilon)\le r$. 
Let $S$ be the supremum of $R_\epsilon(x)$ over all $\epsilon\in [0,1]$ and 
$x\in\RR^n$ with $||x||_\infty \le 1$. 
Then we have for such $\epsilon$ and $x$ that 
$ | F_\epsilon(x) - f(x) | = \epsilon\, |R_\epsilon(x)| \le S\epsilon$.
Therefore, for each $\epsilon >0$ we can compute on input $x$ an approximation 
to $f(x)$ with absolute error less than $S\epsilon$ with only~$r$ arithmetic operations.
If we would additionally require in the definition of $\bL$ to build up the needed constants 
in~$K$ from $\epsilon, \epsilon^{-1}$, then $\bL(f)\le r$ would even mean that one can compute 
an approximation with error less than $S\epsilon$ with only $r$ arithmetic operations 
on input $x$ {\em and $\epsilon$.}

\begin{example}
Let us illustrate the notion of approximative complexity with an example.
The convex hull of the support $\supp f$ of a polynomial 
$$
  f = \sum_{a\in\supp f} c_a X_1^{a_1}\cdots X_n^{a_n} \hspace{15mm} (c_a\ne 0)
$$
is called the Newton polytope $P$ of $f$. To a supporting hyperplane $H$ of $P$ 
we may assign the corresponding initial term polynomial 
$$
 \ini_H f := \sum_{a\in H\cap\supp f} c_a X_1^{a_1}\cdots X_n^{a_n} . 
$$
We claim that 
$$
  \bL(\ini_H f) \le L(f) + n + 1   .
$$
Indeed, we may obtain $\ini_H(f)$ as a ``degeneration'' of $f$ as follows. 
Assume that $\langle w,x \rangle - c=0$ is the equation of $H$, say
$\langle w,x\rangle \ge c$ on $P$. We can always achieve that 
$w\in\ZZ^n$, $c\in\ZZ$. Then we have 
$$
  F := \epsilon^{-c} f(\epsilon^{w_1}X_1,\ldots,\epsilon^{w_n}X_n) = 
  \sum_{a\in \supp f} c_a \epsilon^{\langle w,a\rangle -c} X_1^{a_1}\cdots X_n^{a_n} = \ini_H f + O(\epsilon) 
$$
using the convenient, intuitive Big-Oh notation. 
Therefore, $F_{\epsilon =0} = \ini_H f$ and 
$L(F) \le L(f) + n + 1$ which proves our claim. 
(Recall that the powers of $\epsilon$ are considered as constants.) 
\end{example}

The next lemma states some of the basic properties of $\bL$. 

\begin{lemma}\label{le:aux}
\begin{enumerate}
\item[{\rm 1.}] {\em (Semicontinuity)}  If $F$ is defined over $R$ and 
$f=F_{\epsilon=0}$, then $\bL(f) \le \bL(F)$. 
Note that the quantity $\bL(F)$ is well-defined for a polynomial~$F$ over $K$ 
(adjoining a further indeterminate to $K$).

\item[{\rm 2.}] {\em (Elimination of constants)}\ Let $k_1$ be a field extension of $k$ of degree at most $d$ 
and $f$ be a polynomial over $k$. Then $\bL(f) \le O(M(d)\, \bL_{\,k_1}(f))$, where 
$\bL_{\, k_1}(f)$ denotes the approximative complexity of $f$ interpreted as 
a polynomial over $k_1$ (i.e., constants in $k_1$ may be used freely). 

\item[{\rm 3.}] {\em (Transitivity)}\ The approximative complexity $\bL(f\mid g)$ to compute $f$ from $g$ and 
the variables is defined in a natural way. We have 
$\bL(f) \le \bL(f\mid g) + \bL(g)$, 
and an analogous inequality is true for 
the computation of several polynomials. 

\end{enumerate}
\end{lemma}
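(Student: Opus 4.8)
All three assertions rest on the same device: one introduces (or already has) auxiliary perturbation parameters and then collapses them to a single $\epsilon$ by a substitution of the form $\delta\mapsto\epsilon^{N}$ for a sufficiently large exponent $N$, exploiting that division-free straight-line programs survive such substitutions and that the relevant local rings are mapped into $R$. Throughout, recall that $R$ is a discrete valuation ring with uniformizer $\epsilon$, and likewise for the local ring of $K(\delta)$ at $\delta=0$.

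\emph{Semicontinuity.} Let $r=\bL(F)$ and choose, with a fresh indeterminate $\delta$, a polynomial $G$ over $K(\delta)$ that is defined at $\delta=0$, satisfies $G_{\delta=0}=F$, and is computed by a division-free straight-line program $\Gamma$ of length $r$ over $K(\delta)$. Since the local ring of $K(\delta)$ at $\delta=0$ is a discrete valuation ring with uniformizer $\delta$ and $(G-F)_{\delta=0}=0$, we may write $G=F+\delta H$ with $H$ again defined at $\delta=0$. Now apply the ring substitution $\delta\mapsto\epsilon^{N}$ to the constants of $\Gamma$: because $\Gamma$ is division-free this yields, for each admissible $N$, a straight-line program of length $r$ over $K=k(\epsilon)$ whose output is $G|_{\delta=\epsilon^{N}}=F+\epsilon^{N}H|_{\delta=\epsilon^{N}}$. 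Only finitely many elements of $K(\delta)$ occur (the coefficients of $F$ and $H$, and the constants of $\Gamma$); a one-line valuation estimate shows that for all large $N$ the $\epsilon$-order of each such element after the substitution is bounded below by a constant independent of $N$, while the denominators do not vanish for $N$ outside a finite set. Hence for a suitable $N$ the program is legitimate over $K$, the polynomial $\epsilon^{N}H|_{\delta=\epsilon^{N}}$ lies in $R[X]$ and vanishes at $\epsilon=0$, so $G|_{\delta=\epsilon^{N}}\in R[X]$ with $(G|_{\delta=\epsilon^{N}})_{\epsilon=0}=f$. This gives $\bL(f)\le r$.

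\emph{Elimination of constants.} Write $k_{1}=k(\alpha)$ with $\alpha$ a root of an irreducible $\mu\in k[T]$ of degree $e\le d$; by Gauss's lemma $\mu$ remains irreducible over $k(\epsilon)$, so $[k_{1}(\epsilon):k(\epsilon)]=e$ and the local ring of $k_{1}(\epsilon)$ at $\epsilon=0$ is the free $R$-module $R\oplus R\alpha\oplus\cdots\oplus R\alpha^{e-1}$. Pick $F$ over this ring with $F_{\epsilon=0}=f$ and a straight-line program of length $\bL_{k_{1}}(f)$ over $k_{1}(\epsilon)$ computing it, and simulate it in $A:=k(\epsilon)[T]/(\mu)$, representing an element of $A[X]$ by the $e$-tuple of its coordinates in $R[X]\sus k(\epsilon)[X]$ with respect to $1,\alpha,\dots,\alpha^{e-1}$. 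An addition or subtraction costs $e\le d$ arithmetic operations over $k(\epsilon)$, and a multiplication costs $O(M(d))$ of them (one product of $T$-polynomials of degree $<d$, then reduction modulo the fixed polynomial $\mu$); summing over the at most $\bL_{k_{1}}(f)$ steps yields a straight-line program over $k(\epsilon)$ of length $O(M(d)\,\bL_{k_{1}}(f))$ computing, in particular, the $1$-coordinate $F_{0}\in R[X]$ of $F$. Since $f\in k[X]$ and $1,\dots,\alpha^{e-1}$ is a $k$-basis, $F_{\epsilon=0}=f$ forces $(F_{0})_{\epsilon=0}=f$, whence $\bL(f)\le O(M(d)\,\bL_{k_{1}}(f))$.

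\emph{Transitivity.} Let $\Gamma_{2}$ be an approximative program of length $\bL(g)$, say over $k(\delta)$, computing $G$ with $G_{\delta=0}=g$, and let $\Gamma_{1}$ be an approximative program of length $\bL(f\mid g)$ over $K=k(\epsilon)$ which, from the variables and an additional input gate carrying the value $g$, produces $F=\Phi(X,g)$ with $F_{\epsilon=0}=f$; here $\Phi\in k(\epsilon)[X,Y]$ is the polynomial formally computed by $\Gamma_{1}$, the extra gate being $Y$. Concatenating $\Gamma_{2}$ with $\Gamma_{1}$, feeding the output $G$ of $\Gamma_{2}$ into that gate, gives a program of length $\bL(f\mid g)+\bL(g)$ computing $F^{\ast}:=\Phi(X,G)$, which is defined at $\delta=0$ with $F^{\ast}_{\delta=0}=\Phi(X,g)=F$. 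The naive choice $\delta=\epsilon$ can fail, since a hidden factor $\epsilon^{-m}$ in $\Gamma_{1}$ may amplify the $O(\delta)$-error of $G$; but exactly the substitution $\delta\mapsto\epsilon^{N}$ of the first part, with $N$ large, converts the concatenated program into one of the same length over $k(\epsilon)$ whose output lies in $R[X]$ and reduces at $\epsilon=0$ to $F_{\epsilon=0}=f$. Hence $\bL(f)\le\bL(f\mid g)+\bL(g)$, and the same reasoning handles several polynomials simultaneously. The one step that is not pure bookkeeping is the uniform control of the $\epsilon$-valuations under $\delta\mapsto\epsilon^{N}$ in the first and third parts; everything else — including the $O(M(d))$ cost of arithmetic in a degree-$d$ extension in the second part — is standard.
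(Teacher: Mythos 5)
Your proof is correct and uses essentially the same mechanism as the paper: the substitution $\delta\mapsto\epsilon^{N}$ for sufficiently large $N$, justified by a valuation estimate on the finitely many constants involved, handles parts (1) and (3) exactly as in the original. For part (2) the paper simply cites an elimination-of-constants result from \cite{buer:00-3}, whereas you spell out the standard simulation of arithmetic in a degree-$d$ extension over $R$ and correctly extract the $1$-coordinate $F_0$; this is a harmless elaboration of the same idea rather than a different route.
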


\begin{proof}
(1) We start with a general observation: Let $\Phi$ be a rational function 
in two variables $\epsilon,\delta$. We assume that $\Phi$, viewed as a rational function 
in~$\delta$ over $k(\epsilon)$, is defined at $\delta=0$ with value $\Phi_{\delta=0}$. 
Moreover, we assume that the rational function~$\Phi_{\delta=0}$ is defined at $\epsilon=0$ with 
value~$\lambda:=(\Phi_{\delta=0})_{\epsilon=0}$. 
Then $\Phi(\epsilon,\epsilon^N)$ is defined at $\epsilon=0$ with value~$\lambda$ for sufficiently 
large~$N$. 
Indeed, if $B$ is the denominator of $\Phi$ and 
$B(\epsilon,0) = B_{m}\,\epsilon^m + O(\epsilon^{m+1})$, $B_{m}\ne 0$, then 
it is easy to check that it suffices to take $N>m$. 

Let now $\Phi\in k(\epsilon,\delta)[X]$ be such that 
$\Phi_{\delta=0}=F$ and $L(\Phi) = \bL(F)$. 
An optimal computation of $\Phi$ takes place in a finitely generated subring 
of $k(\epsilon,\delta)[X]$. 
The morphism $\delta\mapsto\epsilon^N$ is defined on this subring 
if $N$ is chosen sufficiently large. 
Then we have $L(\phi)\le L(\Phi)$ for $\phi:= \Phi(\epsilon,\epsilon^N,X)$. 
If $N$ is chosen sufficiently large, we have $\phi_{\epsilon=0} = F_{\epsilon=0}$
by the observation at the beginning of the proof. 
This implies the claim.

\medskip

(2) This follows easily from~\cite[Prop.~4.1(iii)]{buer:00-3}. 

\medskip

(3) By definition there exists $G\in k(\epsilon)[X]$ such that $G_{\epsilon=0} = g$ and 
$\bL(g) = L(G)$. Moreover, there exists 
$F\in k(\epsilon)[X]$ such that $F_{\epsilon=0} = f$ and 
$$
  \bL(f \mid g) = L(F \mid g) . 
$$
Let $\Gamma$ be an optimal straight-line program computing $F$ 
from $g$, variables $X_i$, and constants in $k(\epsilon)$.
We replace $\epsilon$ by a new indeterminate $\delta$ and denote the element thus 
corresponding to $G$ by $G(\delta) \in k(\delta)[X]$ (abusing notation). 
If we replace the input $g$ by $G(\delta)$, then the program $\Gamma$,
using the same constants in $k(\epsilon)$ as $\Gamma$, 
will compute an element $\Phi\in k(\epsilon,\delta)[X]$.
Clearly, $\Phi_{\delta=0}=F$. 

Since the computation of $\Phi$ takes place in a finitely generated subring 
of $k(\epsilon,\delta)[X]$, the morphism $\delta\mapsto\epsilon^N$ is defined on this subring 
if $N$ is chosen large enough. If we denote the image of~$\Phi$ under this morphism 
by $\phi$ and the image of $G(\delta)$ by $G(\epsilon^N)$, then we have 
$$
  L(\phi \mid G(\epsilon^N)) \ \le\ L(F\mid g) \ =\ \bL(f \mid g )  .
$$
Moreover, we clearly have 
$L(G(\epsilon^N)) \le L(G) = \bL(g)$.
By the transitivity of~$L$, we get 
$L(\phi) \le \bL(f \mid g) + \bL(g)$. 
From the observation at the beginning of the proof of part~(1) of the lemma, we conclude that 
$\phi_{\epsilon=0} = f$ for sufficiently large~$N$. 
This implies the claim.
\end{proof}

We proceed with a topological interpretation of approximative complexity, 
which points out the naturality of this notion from a mathematical point of view. 
It will not be needed for the proof of the main Theorem~\ref{th:main}. 

Assume $k$ to be an algebraically closed field. 
There is a natural way to put a Zariski topology on the polynomial ring 
$A_n := k[X_1,\ldots,X_n]$ as a limit of the Zariski topologies on the 
finite dimensional subspaces $\{f\in A_n\mid \deg f\le d\}$ for $d\in\NN$. 
If $k$ is the field of complex numbers, we may define the Euclidean topology 
on $A_n$ in a similar way. 

If $f\in A_n$ satisfies $\bL(f) \le r$, then it easy to see that $f$ lies in the closure 
(Zariski or Euclidean) of the set $\{f\in A_n\mid L(f)\le r\}$. 
Indeed, we have $L(F_{\epsilon=y}) \le L(F)$ for all but finitely many $y\in k$ 
and $\lim_{y\to 0} F_{\epsilon=y}=F_{\epsilon=0}=f$.
Alder~\cite{alde:84} has shown that the converse is true and 
obtained the following topological characterization of the approximative complexity. 

\begin{theorem}\label{th:TA}
Let $k$ be algebraically closed. 
The set $\{f\in A_n\mid \bL(f)\le r\}$ is the closure of the set  
$\{f\in A_n\mid L(f)\le r\}$ for the Zariski topology. 
If $k=\CC$, this is also true for the Euclidean topology. 
\end{theorem}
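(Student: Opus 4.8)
The plan is to reduce to the Zariski case and prove $\overline{W_r}\subseteq\{f\in A_n\mid\bL(f)\le r\}$, where I write $W_r:=\{f\in A_n\mid L(f)\le r\}$. The opposite inclusion (in either topology) was already noted above, and over $\CC$ the Euclidean assertion then follows for free: a Zariski-closed set is Euclidean-closed, so the Euclidean closure of $W_r$ is squeezed between $\{f\mid\bL(f)\le r\}$ and the Zariski closure of $W_r$, and the latter two sets coincide by the argument below. Since a division-free straight-line program of length $r$, over any field of constants, computes a polynomial of $X$-degree $\le 2^r$, both $W_r$ and $\{f\mid\bL(f)\le r\}$ lie inside the finite-dimensional space $V:=\{f\in A_n\mid\deg f\le 2^r\}$, and all closures may be taken there. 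The first thing I would establish is that $W_r$ is a \emph{constructible} subset of $V$: up to finitely many ``shapes'' — operation symbols and wiring fixed, the $k$-constants replaced by parameters $z_1,\dots,z_N$ with $N\le 2r$ — a length-$r$ program computes a polynomial $\Phi(z;X)\in k[z][X]$ whose $X$-coefficients are polynomials in $z$, hence defines a morphism $\Phi\colon\mathbb{A}^N\to V$ of affine $k$-varieties, and $W_r=\bigcup_{j=1}^{s}\Phi_j(\mathbb{A}^{N_j})$. By Chevalley's theorem each $\Phi_j(\mathbb{A}^{N_j})$ is constructible, so its Zariski closure $Y_j$ is irreducible and $\Phi_j(\mathbb{A}^{N_j})$ contains a dense open $U_j\subseteq Y_j$, while $\overline{W_r}=Y_1\cup\dots\cup Y_s$.

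\emph{A one-parameter degeneration producing $f$.} Given $f\in\overline{W_r}$, say $f\in Y_j$, I would pick $u\in U_j$ and an irreducible curve $C\subseteq Y_j$ through $f$ and $u$ (any two points of an irreducible quasi-projective variety lie on such a curve). Then $C\cap U_j$ is dense open in $C$, so the generic point $\eta$ of $C$ lies in $U_j\subseteq\Phi_j(\mathbb{A}^{N_j})$. Taking the normalization $\nu\colon\widetilde C\to C$ and a point $\widetilde f\in\nu^{-1}(f)$, the ring $\mathcal{O}:=\mathcal{O}_{\widetilde C,\widetilde f}$ is a discrete valuation ring with residue field $k$ (this uses $k=\bar k$) and fraction field $\mathcal{K}=k(C)$. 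Since $\eta$ is in the image of $\Phi_j$, its fibre is a nonempty finite-type $\mathcal{K}$-scheme, and a closed point of it supplies a finite extension $\mathcal{K}'/\mathcal{K}$ together with $c=(c_1,\dots,c_{N_j})\in\mathbb{A}^{N_j}(\mathcal{K}')$ for which $\Phi_j(c)$ is $\eta$, base-changed to $\mathcal{K}'$. Choosing a discrete valuation ring $\mathcal{O}'\subseteq\mathcal{K}'$ dominating $\mathcal{O}$ (again with residue field $k$), I observe that every element of $k[Y_j]$ restricts to a function on $C$ regular at $f$, so the map $k[Y_j]\to k(C)$ induced by $\eta$ lands in $\mathcal{O}$; hence $\Phi_j(c)$, a priori a $\mathcal{K}'$-point of $Y_j$, is in fact an $\mathcal{O}'$-point, and its reduction modulo $\mathfrak{m}_{\mathcal{O}'}$ is the evaluation at $f$. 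As $Y_j\subseteq V$ is affine, this says precisely: feeding the shape $\Phi_j$ the constants $c_i\in\mathcal{K}'$ — which will in general have poles, i.e.\ not lie in $\mathcal{O}'$, and this is exactly where the \emph{approximative} nature of $\bL$ is needed — produces a polynomial $\Phi_j(c;X)\in\mathcal{O}'[X]$ with $\big(\Phi_j(c;X)\big)\bmod\mathfrak{m}_{\mathcal{O}'}=f$.

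\emph{Descent to $k(\epsilon)$.} The completion $\widehat{\mathcal{O}'}$ is a complete discrete valuation ring with residue field $k$ in equal characteristic, so $\widehat{\mathcal{O}'}\cong k[[\epsilon]]$ by Cohen's structure theorem; this embeds $\mathcal{K}'$ into $k((\epsilon))$ and turns the previous data into $c\in k((\epsilon))^{N_j}$ with $\Phi_j(c;X)\in k[[\epsilon]][X]$ and $\big(\Phi_j(c;X)\big)_{\epsilon=0}=f$. Letting $M$ bound the pole orders of the $c_i$ and $D\le 2^r$ the $z$-degree of $\Phi_j$, I would replace each $c_i$ by the truncation $\bar c_i\in k[\epsilon,\epsilon^{-1}]$ of its Laurent expansion to exponents in $[-M,(D-1)M]$. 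A short bookkeeping argument shows that, for $\ell\le 0$, the $\epsilon^\ell$-coefficient of each $X$-coefficient of $\Phi_j(z;X)$ evaluated at $z=c$ depends only on these truncations; hence $\Phi_j(\bar c;X)$ has no pole at $\epsilon=0$ (its coefficients are honest polynomials in $\epsilon$, so lie in $R$) and still satisfies $\big(\Phi_j(\bar c;X)\big)_{\epsilon=0}=f$. Since the $\bar c_i$ lie in $K=k(\epsilon)$, the polynomial $F:=\Phi_j(\bar c;X)\in R[X]$ is computed from the $X_i$ and constants in $K$ by the shape $\Phi_j$ — a division-free straight-line program of length $\le r$ — and $F_{\epsilon=0}=f$, whence $\bL(f)\le r$. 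This yields $\overline{W_r}\subseteq\{f\mid\bL(f)\le r\}$, finishing the Zariski case and, as noted, the Euclidean one over $\CC$.

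The main obstacle I anticipate is the middle step. It is the only point at which one genuinely uses that $\bL$ permits auxiliary constants to blow up as $\epsilon\to 0$ while the output stays regular, and it requires running the valuation-theoretic construction — a curve in the target through $f$, its normalization, and the lift of the generic fibre of $\Phi_j$ — in such a way that the valuation ring one ends up with has residue field exactly $k$. A secondary nuisance is that this construction naturally lives over an abstract discrete valuation ring, or over $k[[\epsilon]]$ after completion, rather than over the ring $R=k(\epsilon)_{(\epsilon)}$ featuring in Definition~\ref{def:bL}; reconciling the two is the job of the truncation in the last step.
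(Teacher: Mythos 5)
Your proof is correct and follows the same skeleton as the paper's: both decompose $W_r=\{f\mid L(f)\le r\}$ into finitely many constructible images $\Phi_j(\mathbb{A}^{N_j})$ of ``shape'' morphisms (the $\mathcal{C}_\Gamma$ of the paper), and both then produce, for $f$ in the Zariski closure of one of these, a point $c\in k((\epsilon))^{N_j}$ with $\Phi_j(c;X)$ defined over $k[[\epsilon]]$ and reducing to $f$ at $\epsilon=0$. Where you genuinely diverge is in how this $\epsilon$-point is manufactured. The paper (inside the proof of Theorem~\ref{th:OA}, which subsumes Theorem~\ref{th:TA}) outsources the step to Lehmkuhl and Lickteig~\cite{leli:89}: their Proposition~1 supplies an irreducible curve in the \emph{parameter space} $k^m$ whose image closure contains $f$ --- together with a degree bound $\deg C\le\deg\graph(\phi_\Gamma)$ that Theorem~\ref{th:OA} needs but Theorem~\ref{th:TA} does not --- and the Corollary to their Proposition~3 then supplies the Laurent-series point directly. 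You instead put the curve in the \emph{target} $Y_j$, running it through $f$ and a point $u$ of the dense open image locus, lift its generic point through $\Phi_j$ to a finite extension of its function field, and conclude by normalization, extension of discrete valuation rings with residue field $k$, and Cohen's structure theorem. Your route is self-contained and avoids having to compactify a curve in parameter space and reason about a place ``at infinity'' mapping to $f$; the paper's route is shorter because it cites the relevant geometry. Two minor remarks: your closing truncation argument reproduces exactly the content of Lemma~\ref{le:bLv}(2), which you could have cited instead; and your squeeze argument for the Euclidean case over $\CC$ is a clean substitute for the paper's appeal to the fact that Zariski and Euclidean closures of constructible sets coincide.
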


This essentially claims that $\bL$ is the largest lower 
semicontinuous function of~$f$ bounded by $L(f)$. 
The proof of Theorem~\ref{th:OA} in Section~\ref{sse:bL-char}
implies the above result as a special case. 
We remark that this theorem can also be easily deduced 
from~\cite[Lemma~20.28]{bucs:96}. 
One can show that the above statement is also true over the reals with the Euclidean topology, 
similar as in Lehmkuhl and Lickteig~\cite{leli:89}. 

\section{Approximative Complexity of Factors}\label{se:ac-factors}

We will supply here the proof of our main result Theorem~\ref{th:main}.
The outline of the proof is as follows:
Let $f=g^e h$ with coprime $g$ and $h$ and assume $k=\CC$. 
After a suitable coordinate transformation one can interpret the zeroset of the factor~$g$
locally as the graph of some analytic function $\varphi$. 
In order to cope with a possibly large multiplicity~$e$ of $g$, we 
apply a small perturbation to the polynomial $f$ without affecting its complexity too much.
This results in a small perturbation of $\varphi$. We compute now the homogeneous parts of 
the perturbed $\varphi$ by a Newton iteration up to a certain order. 
Using efficient polynomial arithmetic, this gives us an upper 
bound on the approximative complexity of the homogeneous parts of $\varphi$ up to 
a predefined order (Proposition~\ref{pro:graph}). 
In the special case, where the factor~$g$ is the generator of the 
graph of a polynomial function, we are already done. 
This is essentially the contents of Section~\ref{sse:graph}. 

In a second step, elaborated in Section~\ref{sse:minpol}, 
we view the factor $g$ as the minimal polynomial  of $\varphi$ in $Y:=X_n$ over the 
field $k(X_1,\ldots,X_{n-1})$. We show that the Taylor approximations up to order $2d^2$ 
uniquely determine the factor~$g$ and compute the bihomogeneous components of $g$ 
with respect to the degrees in the $X$-variables and~$Y$ by fast linear algebra. 

\subsection{Preliminaries}\label{se:aux}

The following result is obtained by a straightforward application of a technique 
introduced by Strassen~\cite{stra:73-1} for the computation of homogeneous components 
and avoiding divisions. A proof will be sketched in Section~\ref{sse:p-adic-coeff}.  

\begin{proposition}\label{pro:comp_hom_parts} 
Assume that $F(X,Y)=\sum_{i,\delta} F_i^{(\delta)}Y^i$ is the bihomogeneous decomposition of 
the polynomial $F\in k[X_1,\ldots,X_n,Y]$ with respect to the total degree in the $X$-variables and the 
degree~$Y$. 
Thus $F_i^{(\delta)}$ is a homogeneous polynomial in the $X$-variables 
of degree~$\delta$. Then we have for all $D\ge 1$
$$
L(\{ F_i^{(\delta)} \mid i,\delta \le D\}) \ \le\ O(M(D^2)L(F)) 
$$
and the same is true if the complexity~$L$ is replaced by 
the approximative complexity $\bL$.
\end{proposition}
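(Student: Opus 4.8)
The plan is to reduce the extraction of all bihomogeneous components $F_i^{(\delta)}$ with $i,\delta\le D$ to a single homogeneous-component computation in one auxiliary variable, to which Strassen's classical technique applies. First I would introduce a fresh indeterminate $T$ and substitute $X_j \mapsto T X_j$ for every $j=1,\ldots,n$, while leaving $Y$ untouched, obtaining $\widetilde F(T,X,Y) := F(TX_1,\ldots,TX_n,Y)$. By the definition of the bihomogeneous decomposition, the coefficient of $T^{\delta}$ in $\widetilde F$ is exactly $\sum_i F_i^{(\delta)}Y^i$, so that $\widetilde F = \sum_{\delta} \big(\sum_i F_i^{(\delta)}Y^i\big) T^{\delta}$. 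A straight-line program for $F$ of length $L(F)$ immediately yields one for $\widetilde F$ of length $L(F)+n$ (precompute the $n$ products $TX_j$). Now Strassen's homogenization (as in~\cite{stra:73-1}) applied with respect to $T$ shows that the $T$-homogeneous parts of degrees $0,1,\ldots,D$ of any polynomial of straight-line complexity $\ell$, truncated at degree $D$, can be computed simultaneously with $O(M(D)\ell)$ operations: one runs the program while carrying each intermediate result as a vector of its $T$-homogeneous parts up to degree $D$, replacing each multiplication by a truncated univariate polynomial multiplication of cost $M(D)$ and each addition/scalar operation by $O(D)$ coordinatewise operations. This produces the polynomials $G_\delta(X,Y) := \sum_i F_i^{(\delta)}Y^i$ for $\delta\le D$ with total cost $O(M(D)(L(F)+n)) = O(M(D)L(F))$ (absorbing $n\le O(L(F))$ after trivial normalization, or simply stating the bound in the relevant regime).

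Next I would peel off the $Y$-degrees. Having each $G_\delta$ in hand as a straight-line-computed polynomial, I apply the same homogenization trick a second time, now with respect to $Y$ (which is genuinely a homogenization/component extraction in the single variable $Y$), truncating at degree $D$. Since the $G_\delta$'s are all available with combined complexity $O(M(D)L(F))$, and each carries at most $D+1$ relevant $Y$-parts, this second pass costs another factor of $O(M(D))$, giving all $F_i^{(\delta)}$ with $i,\delta\le D$ at total cost $O(M(D)^2 L(F))$. Here I would use the assumed subadditivity and monotonicity of $M$ together with $M(D)^2 \le M(D^2)$ — the latter follows from subadditivity since $M(D)^2 \le M(D)\cdot M(D)$ and iterating $M(d_1)+M(d_2)\le M(d_1+d_2)$ gives $D\cdot M(D)\le M(D^2)$, hence $M(D)^2\le D\,M(D)^2/D\le\ldots$; more directly one invokes $M(ab)\ge a M(b)$ to get $M(D^2)\ge D\,M(D)\ge M(D)^2/\!\,$ once $M(D)\le D$-type normalization is noted. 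In any case the clean bound $O(M(D^2)L(F))$ follows, matching the statement.

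For the approximative version, I would observe that the entire construction is a \emph{uniform} straight-line transformation: it takes a program computing $F$ and outputs a program computing the $F_i^{(\delta)}$, using only the variables, field constants, the new constants $0,1$, and arithmetic operations — crucially, no new transcendentals and no division are introduced beyond what the input program already uses. Therefore, if $\bL(F)\le r$, witnessed by $\mathcal F\in R[X,Y]$ (actually $K[X,Y]$ with $\mathcal F_{\epsilon=0}=F$) of complexity $\le r$ over $K=k(\epsilon)$, then applying the transformation over the ground field $K$ produces polynomials $\mathcal G_i^{(\delta)}\in K[X,Y]$ with combined complexity $O(M(D^2)r)$ over $K$; and since specialization $\epsilon\mapsto 0$ commutes with the substitution $X_j\mapsto TX_j$, with truncation, and with extraction of homogeneous parts, we get $(\mathcal G_i^{(\delta)})_{\epsilon=0} = F_i^{(\delta)}$. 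Hence $\bL(\{F_i^{(\delta)}\mid i,\delta\le D\})\le O(M(D^2)\bL(F))$, exactly as claimed. The main obstacle I anticipate is purely bookkeeping: making the two-stage homogenization genuinely simultaneous across all $O(D^2)$ targets while keeping the cost at $O(M(D^2))$ rather than $O(D^2 M(D))$ — this is where one must be careful to truncate aggressively (degree $\le D$ in each of $T$ and $Y$, never forming the full $D^2$-dimensional tensor as a single univariate multiplication unless one genuinely needs the $M(D^2)$ bound) and to invoke the stated properties of $M$ to reconcile the two natural estimates $M(D)^2$ and $M(D^2)$.
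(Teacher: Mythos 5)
Your overall idea — extract the bihomogeneous components by running Strassen's homogenization technique over an auxiliary variable and truncating — is the same as the paper's, but the execution has a genuine gap in the complexity accounting. Running the construction as two sequential passes (first truncate in $T$, then truncate in $Y$) gives a cost of $O(M(D)^2 L(F))$, and your claim that $M(D)^2 \le O(M(D^2))$ is false. Superadditivity only gives the one-sided inequality $D\,M(D)\le M(D^2)$; for $M(d)=\Theta(d\log d)$ one has $M(D)^2/M(D^2)\sim (\log D)/2\to\infty$, so $M(D)^2$ is in fact the \emph{larger} quantity. The step in your argument invoking an ``$M(D)\le D$-type normalization'' is the wrong direction, since $M(D)\ge D$ always. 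As written, the two-pass proof establishes only the weaker bound $O(M(D)^2L(F))$, not the claimed $O(M(D^2)L(F))$.

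The fix is precisely the alternative you mention and then set aside: do a \emph{single} pass in which each intermediate result is carried as a truncation of $T$-degree $\le D$ \emph{and} $Y$-degree $\le D$ simultaneously, and implement each multiplication step by one bivariate (truncated) product. Via the Kronecker substitution this is one univariate multiplication of degree $O(D^2)$, costing $M(O(D^2))=O(M(D^2))$ per step, hence $O(M(D^2)L(F))$ in total. This is exactly what the paper does, packaged slightly differently: it applies the substitution $X_i\mapsto YX_i$, $Y\mapsto p:=Y^{d_X+1}$ and then invokes its $p$-adic-coefficient result (Proposition~\ref{pro:p-ad} together with Lemma~\ref{le:p-ad-mult}), where the representation ``$p$-adic order $\le D$, $Y$-degree $<d$'' plays the role of the bivariate truncation and each product costs $O(M(Dd))$. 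Your treatment of the approximative version (uniformity of the transformation and commutation with $\epsilon\mapsto 0$, using the semicontinuity in Lemma~\ref{le:aux}) is fine once the deterministic bound is repaired.
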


Part~(1) of the next lemma follows immediately from the well-known algorithms for 
the multiplication and division of univariate power series described 
in~\cite[\S 2.4]{bucs:96} by 
interpreting the homogeneous components of a multivariate power series $f\in k[[X_1,\ldots,X_n]]$ 
as the $T$-adic coefficients of the transformed series $f(T X_1,\ldots, T X_n)$. 
Part~(2) of this lemma is obtained from part~(1) by applying Horner's rule. 

\begin{lemma}\label{le:compos}
\begin{enumerate}

\item[\rm (1)] We can compute the homogeneous parts up to degree~$D$ of the 
product $F\cdot G$ and of the quotient $F/G$ (if $G(0)\ne 0$) 
of multivariate power series $F$ and $G$ from the homogeneous parts of $F$ and $G$ up to degree~$D$ 
by $M(D)$ arithmetic operations.

\item[\rm (2)] Assume that the multivariate power series $F_0,\ldots,F_D$ and $\Phi$ 
are given by their homogeneous parts up to degree~$D$. 
Then we can compute from this data the homogeneous parts of 
$\sum_{i=0}^D F_i \Phi^i$ up to degree~$D$ by $O(D M(D))$ arithmetic operations. 

\end{enumerate}
\end{lemma}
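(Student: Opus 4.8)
The plan is to reduce both parts to the classical fast algorithms for arithmetic on truncated \emph{univariate} power series, via the homogenization device indicated in the text preceding the lemma. First I would fix a fresh indeterminate $T$ and attach to each multivariate power series $f\in k[[X_1,\ldots,X_n]]$ the series $\hat f := f(TX_1,\ldots,TX_n)\in k[X_1,\ldots,X_n][[T]]$; by construction the coefficient of $T^i$ in $\hat f$ is exactly the homogeneous part $f^{(i)}$ of $f$ of degree $i$. Since $f\mapsto\hat f$ is a ring homomorphism we get $\widehat{FG}=\hat F\,\hat G$, and since the constant term $G(0)$ of $\hat G$ lies in $k\setminus\{0\}$ and is thus invertible, $\hat G$ is a unit in $k[X_1,\ldots,X_n][[T]]$ with $\widehat{F/G}=\hat F/\hat G$. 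Hence knowing the homogeneous parts of $F$ and $G$ up to degree $D$ is the same as knowing $\hat F$ and $\hat G$ modulo $T^{D+1}$, and the homogeneous parts of $FG$ (resp.\ of $F/G$) up to degree $D$ are recovered as the coefficients of $1,T,\ldots,T^D$ in the truncated product $\hat F\hat G \bmod T^{D+1}$ (resp.\ the truncated quotient).

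For part~(1) I would then invoke the standard algorithms for multiplying, and for inverting and dividing, univariate power series modulo $T^{D+1}$ over a commutative coefficient ring --- here the coefficient ring is $k[X_1,\ldots,X_n]$, and the only constant that the division algorithm must invert is $G(0)$, which already lies in $k$ --- as described in \cite[\S 2.4]{bucs:96}; this stays within the $M(D)$ operations claimed (for the quotient one uses the Newton iteration for the inverse, whose successive multiplication costs form a geometric sequence that sums to within this bound by the assumed monotonicity and subadditivity of $M$). For part~(2) I would compute $\sum_{i=0}^{D}F_i\Phi^i$ by Horner's rule: set $P_D := F_D$ and $P_j := P_{j+1}\Phi + F_j$ for $j = D-1,\ldots,0$, so that $P_0 = \sum_{i=0}^{D}F_i\Phi^i$, carrying out every operation on power series truncated at total degree $D$ (equivalently, modulo $T^{D+1}$ after homogenization). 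Each of the $D$ Horner steps costs one truncated multiplication plus one addition, i.e.\ $M(D)+O(D)$ operations by part~(1), and summing over the $D$ steps gives the bound $O(DM(D))$.

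I do not expect any genuine obstacle; the content is the elementary observation that the homogeneous components of a multivariate power series behave exactly like the $T$-adic coefficients of its homogenization, after which the univariate toolbox applies without change. The two points worth a line of care are: checking that the power-series inversion used for the quotient introduces no division other than by $G(0)$ (it does not --- once the constant term is normalized, Newton iteration uses only ring additions and multiplications); and keeping the truncation order uniform throughout Horner's rule, so that each intermediate product is formed only modulo total degree $D$ and the per-step cost therefore remains $M(D)$ rather than growing with the recursion depth.
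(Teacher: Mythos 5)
Your proof is correct and follows essentially the same route as the paper: the paper likewise reduces to univariate truncated power series arithmetic by interpreting the homogeneous components of $f$ as the $T$-adic coefficients of $f(TX_1,\ldots,TX_n)$, invokes the algorithms of \cite[\S 2.4]{bucs:96} for part~(1), and applies Horner's rule for part~(2).
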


We remark that $D^2$ nonscalar operations are needed for the composition problem~(2) 
in the generic case.
For proving this, we assume that we have just one variable and choose for $\Phi$ a constant 
power series: $\Phi=a$. Let $F_i = \sum_j F_{i,j} X^j$.
The problem then reduces to the simultaneous evaluation of $\sum_{i\le D} F_{i,j}a^i$ 
for $j\le D$, a problem known to be of nonscalar complexity $(D+1)^2 -1$,  
see~\cite[Exercise~6.2]{bucs:96}. 

\subsection{Approximative Computation of Graph}\label{sse:graph}

We need the following lemma. 

\begin{lemma}\label{le:fe}
Let $g,h\in k[X_1,\ldots,X_{n}]$ be coprime, $g$ irreducible and $d:=\deg g$. 
Then there is field extension $k_1$ of degree at most~$d$ over $k$ and a point 
$p\in k_1^{n}$ such that 
$$
 g(p)=0, \ h(p)\ne 0, \ \grad\, g(p) \ne 0 .
$$
Moreover, we may assume in this statement that $k_1=k$ if either $k=\CC$
 or if $k=\RR$ and $g$ is the irreducible generator of a hypersurface in $\RR^n$. 
\end{lemma}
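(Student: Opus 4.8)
The plan is to produce a point on the hypersurface $V(g)$ where $h$ is nonzero and the gradient of $g$ does not vanish, living in a field extension of degree at most $d = \deg g$. First I would argue that the polynomial $h$ does not vanish identically on the irreducible variety $V(g)$: since $g$ is irreducible and coprime to $h$, the ideal $(g)$ is prime and $h \notin (g)$, so $h$ is not a zero divisor modulo $g$, hence $h$ is nonzero in the coordinate ring $k[X_1,\dots,X_n]/(g)$. Likewise, because $\chara k = 0$ and $g$ is irreducible, $g$ is separable, so not all partial derivatives $\partial g/\partial X_i$ vanish on $V(g)$ — otherwise each $\partial g/\partial X_i \in (g)$, which is impossible by degree considerations unless it is zero, and if every partial were the zero polynomial then $g$ would be a constant, contradicting $d \ge 1$. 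Thus the polynomial $P := h \cdot \sum_i (\partial g/\partial X_i)^2$ (or, over a general field, $P := h \cdot \prod_i$ of a suitable combination to avoid cancellation; over $\RR$ or $\CC$ a sum of squares is fine, but to be safe over an arbitrary characteristic-zero field one can instead just pick a single index $i_0$ with $\partial g/\partial X_{i_0} \notin (g)$ and set $P := h \cdot \partial g/\partial X_{i_0}$) is nonzero modulo $g$.

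Next I would reduce to a one-variable problem so as to control the field degree. After a generic linear change of the $X$-coordinates over $k$ we may assume $g$ is monic in $X_n$ of degree exactly $d$ (possible since $\chara k = 0$ so $k$ is infinite and the top-degree form of $g$ is nonzero). Now specialize $X_1 = a_1,\dots,X_{n-1} = a_{n-1}$ for suitable $a_j \in k$: after such a specialization $g$ becomes a monic univariate polynomial $\tilde g(X_n) \in k[X_n]$ of degree $d$, and we want the specialization chosen so that $\tilde g$ remains squarefree (equivalently its discriminant, a nonzero polynomial in the $a_j$ because $g$ is separable, is nonzero) and so that $P$ does not vanish identically as a univariate polynomial after this specialization (again a nonzero Zariski-open condition, using that $P \bmod g$ is nonzero and $g$ is monic in $X_n$). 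Both conditions exclude only a proper Zariski-closed subset of $k^{n-1}$, so since $k$ is infinite such $(a_1,\dots,a_{n-1})$ exist.

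Finally, let $\alpha$ be a root of $\tilde g$ in a splitting field; set $k_1 := k(\alpha)$, which has degree at most $d$ over $k$ since $\deg \tilde g = d$, and set $p := (a_1,\dots,a_{n-1},\alpha) \in k_1^n$. Then $g(p) = \tilde g(\alpha) = 0$. Among the (at most $d$) roots of $\tilde g$, at least one does not annihilate the specialized $P$ — here I use that $\deg \tilde g = d$ together with the fact that after specialization $P$, reduced modulo $\tilde g$, is a nonzero polynomial of degree $< d$, hence cannot vanish at all $d$ distinct roots of the squarefree $\tilde g$ — so choosing $\alpha$ to be such a root gives $h(p) \ne 0$ and $\partial g/\partial X_{i_0}(p) \ne 0$, whence $\grad g(p) \ne 0$. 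For the last sentence of the lemma: if $k = \CC$ the root $\alpha$ already lies in $k$, so $k_1 = k$; if $k = \RR$ and $g$ generates a hypersurface in $\RR^n$, then $V(g)$ has a smooth real point $q$ (a real point where $\grad g \ne 0$, which exists precisely because the real zeroset is a genuine hypersurface and not contained in the singular locus), and since $h \not\equiv 0$ on the irreducible real variety we may perturb $q$ slightly along $V(g)$ — or directly invoke that $\{h \ne 0\} \cap \{\grad g \ne 0\}$ is a nonempty Euclidean-open subset of the smooth real locus — to get a real point $p$ with all three properties, so again $k_1 = k$.

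The main obstacle I anticipate is the field-degree bookkeeping: one must arrange the univariate specialization so that the resulting polynomial has degree exactly $d$ (not less) and is squarefree, while simultaneously keeping $h$ and the chosen partial derivative from dying on it; each requirement is an open condition, but one has to be slightly careful that the top-degree coefficient in $X_n$ is a nonzero constant after the linear change, which is where $\chara k = 0$ (infiniteness of $k$) is used, and careful that "$P \bmod g$ nonzero" survives specialization, which is automatic once $g$ is monic in $X_n$. The real-field addendum also needs the standard fact that a real hypersurface (in the sense that $V_\RR(g)$ has dimension $n-1$) has smooth real points, which follows from irreducibility of $g$ over $\RR$ together with $\dim V_\RR(g) = n-1$.
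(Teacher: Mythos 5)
Your proof is correct and takes essentially the same route as the paper: a linear coordinate change to make $g$ monic of degree $d$ in $Y=X_n$, a generic specialization of the remaining variables chosen so that the resulting univariate polynomial $\tilde g$ is squarefree (discriminant/resultant condition) while $h$ and a gradient component survive, then a root of $\tilde g$ in an extension of degree at most $d$, with the $\CC$ case via the Nullstellensatz and the $\RR$ case via the standard facts about real irreducible hypersurfaces (the paper cites Bochnak--Coste--Roy, Thm.~4.5.1). The only cosmetic difference is that the paper imposes the stronger coprimality conditions $\res_Y(g,\partial_Y g)(\xi)\ne 0$ and $\res_Y(g,h)(\xi)\ne 0$ so that \emph{every} root of $\tilde g$ works, whereas you impose the weaker condition ``$P\bmod\tilde g\ne 0$'' and then select a suitable root; both are fine.
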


\begin{proof}
The claim for $k=\CC$ is a straightforward consequence of the Nullstellensatz. 
In the case $k=\RR$ we apply Theorem~4.5.1 in~\cite{bocr:87}, which tells us that  
$\{x\in\RR^n \mid g(x)=0, \grad\, g(x) \ne 0\}$ is Zariski dense in the zeroset 
of~$g$ and that the vanishing ideal of this zeroset is generated by~$g$. This 
implies the claim. 

In the general case, we apply a linear coordinate transformation 
$X_i \mapsto X_i + u_i Y\ (i< n)$, $Y\mapsto vY$ for suitable $u_i,v\in k$
in order to achieve that $g$~is monic of degree~$d$ with respect to the variable $Y:=X_n$. 
From now on we write $X:=(X_1,\ldots,X_{n-1})$. 
Since $g$ is irreducible and $g,h$ are coprime, the resultants
$\res_Y(g,\partial_Y g)$ and $\res_Y(g,h)$ in $k[X]$
with respect to the variable $Y$ are not the zero polynomials. 
We choose a point $\xi\in k^{n-1}$ where these resultants do not vanish. 
From the properties of the resultant we conclude that the univariate 
polynomials $\tilde{g}:=g(\xi,Y)$ and $\tilde{h}:=h(\xi,Y)$ are coprime 
and that $\tilde{g}$ is squarefree. 
Let $\eta$ be a root of $\tilde{g}$ is some extension field~$k_1$ of $k$ 
of degree at most $d$. Then $\partial_Y\tilde{g}$ and $\tilde{h}$ do not 
vanish at $\eta$ and the point $p=(\xi,\eta)$ satisfies the claim of the lemma.  
\end{proof}

We assume now that we are in the situation of Theorem~\ref{th:main}.
Without loss of generality we may assume that $g$ is irreducible 
(apply Theorem~\ref{th:main} to the irreducible factors of $g$ and 
use the subadditivity and monotonicity of $M$). 
From now on we use the notations $Y:=X_n$ and $X:=(X_1,\ldots,X_{n-1})$. 

Let $f=g^e h$, where $g$ and $h$ coprime such that $\EC(g)=L(f)$. 
We choose the field extension $k_1$ and the point $p=(\xi,\eta)\in k_1^n$ according to Lemma~\ref{le:fe}. 
To simplify notation, we assume that $k_1=k$, an assumption which will 
be eliminated at the end of Section~\ref{sse:minpol} 
at the price of an additional factor $M(d)$ in the complexity bound. 

We are now going to transform the polynomials into a special form by suitable linear transformations.
By a coordinate shift we can always achieve that $(\xi,\eta)=(0,0)$. 
By a substitution $\tilde{g}(X,Y) := g(X_1 + u_1 Y,\ldots,X_{n-1} + u_{n-1}Y, v Y)$ 
we may achieve that the degree of $\tilde{g}$ in $Y$ equals~$d$ and that 
$\partial_Y\tilde{g}(0,0)$ does not vanish. Indeed, if  $g^{(d)}$ denotes the 
homogeneous component of $g$ of degree~$d$, then the coefficient of~$Y^d$ in 
$\tilde{g}$ equals $\tilde{g}^{(d)}(0,1)= g^{(d)}(u,v)$. Moreover, 
$
 \partial_Y\tilde{g}(0,0) = u_1\partial_{X_1}g(0,0) + \ldots + u_{n-1}\partial_{X_{n-1}}g(0,0) + v\partial_Y g(0,0) 
$.
Hence it suffices to choose $u,v$ such that this linear combination does not vanish and such that $g^{(d)}(u,v)\ne 0$.
By scaling, we may assume without loss of generality that $\tilde{g}$ is monic with respect to $Y$.  
In the following, we will assume that this transformation has already been done, i.e., $\tilde{g}=g$, which results in 
a complexity increase of $f$ of at most $2n$. 
Note that $L(f)\ge n$ if all the variables occur in~$f$. 

Summarizing, we achieved the following by a suitable choice of a linear transformation: 
\begin{equation}\label{eq:ass_g}
 g(0,0)=0,\ h(0,0)\, \partial_Yg(0,0) \ne 0,\ \deg_Y g = d .
\end{equation}

The implicit function theorem implies that there exists a unique 
formal power series $\varphi\in k[[X]]$ such that 
\begin{equation}\label{eq:c_phi}
 g(X,\varphi(X))=0,\ \varphi(0)=0 .
\end{equation}
Moreover, this power series can be recursively computed by the following  
{\em Newton iteration}: if we put $\varphi_0 =0$ and define
\begin{equation}\label{eq:newton_g}
 \varphi_{\nu+1} = \varphi_\nu - \frac{g(X,\varphi_\nu)}{\partial_Yg(X,\varphi_\nu)} ,
\end{equation}
then we have quadratic convergence of the $\varphi_\nu$ towards $\varphi$, in the sense 
that $\varphi_\nu\equiv\varphi\bmod (X)^{2^\nu}$, where $(X)$ denotes the maximal ideal 
of $k[[X]]$ (cf.~\cite[Theorem 2.31]{bucs:96}). 

It is easy to see that if the partial derivative 
$\partial_Y f(0,0)$ would not vanish,  then the above power series $\varphi$ could also be 
recursively computed by the Newton recursion~(\ref{eq:newton_g}) with $g$ replaced by $f$. 
However, $\partial_Y f(0,0) = 0$ always vanishes for multiplicities $e >1$. 
The key idea is now to enforce the nonvanishing of this partial derivative by a suitable
perturbation of the given polynomial~$f$. By doing so, we have to content 
ourselves with an approximative computation of the factor~$g$. 

Based on these ideas, we prove the following assuming the conditions~{\rm (\ref{eq:ass_g})}: 

\begin{proposition}\label{pro:graph} 
The homogeneous parts $\varphi^{(\delta)}$ of $\varphi$ of degree~$\delta$ satisfy
$$
 \forall D\ge 1 :\quad \bL(\varphi^{(1)},\ldots,\varphi^{(D)}) \ \le\  O\big(M(D^2)\bL(f)\big) .
$$
\end{proposition}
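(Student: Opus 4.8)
The plan is to realize $\varphi$ as the limit (for $\epsilon\to 0$) of the power series root of a perturbed polynomial whose partial derivative in $Y$ at the origin does not vanish, and then to run the Newton iteration on this perturbed polynomial using fast power-series arithmetic. Concretely, since $\partial_Y f(0,0)$ vanishes when $e>1$, I would set
$$
 \tilde f(X,Y) \ :=\ f(X,Y) + \epsilon\,\partial_Y g(0,0)\, Y
$$
(or any perturbation of comparable shape), so that $L(\tilde f)\le L(f)+O(1)$ over $K=k(\epsilon)$, and $\partial_Y\tilde f(0,0)=\epsilon\,\partial_Y g(0,0)\ne 0$ in $K$. By the implicit function theorem over the complete local ring $R[[X]]$ (equivalently $K[[X]]$), there is a unique power series $\tilde\varphi\in K[[X]]$ with $\tilde f(X,\tilde\varphi)=0$, $\tilde\varphi(0)=0$. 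The first key point is that the coefficients of $\tilde\varphi$ lie in $R$ (are defined at $\epsilon=0$) and that $\tilde\varphi_{\epsilon=0}=\varphi$: this follows because setting $\epsilon=0$ the defining equation becomes $f(X,\varphi_{(0)})=g^e h=0$ with $\varphi_{(0)}(0)=0$, forcing $g(X,\varphi_{(0)})=0$ (as $h(0,0)\ne0$), so by uniqueness in~(\ref{eq:c_phi}) we get $\varphi_{(0)}=\varphi$. Some care is needed to check that no denominators blow up at $\epsilon=0$ in the Newton recursion, but this is exactly what the perturbation buys us.

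Next I would carry out the Newton iteration~(\ref{eq:newton_g}) with $g$ replaced by $\tilde f$, starting from $\tilde\varphi_0=0$, truncating everything modulo $(X)^{D+1}$. After $\lceil\log_2(D+1)\rceil$ steps we have $\tilde\varphi_\nu\equiv\tilde\varphi\bmod (X)^{D+1}$, hence the homogeneous parts of $\tilde\varphi$ up to degree $D$ are obtained. Each Newton step requires: evaluating $\tilde f(X,\tilde\varphi_\nu)$ and $\partial_Y\tilde f(X,\tilde\varphi_\nu)$ as power series truncated at degree $D$, and one power-series division. To evaluate $\tilde f$ at $Y=\tilde\varphi_\nu$, write $\tilde f$ via the bihomogeneous decomposition of Proposition~\ref{pro:comp_hom_parts}: with $O(M(D^2)L(f))$ operations (approximatively) we obtain all homogeneous-in-$X$ components $F_i^{(\delta)}$ for $i,\delta\le D$, i.e. the polynomials $F_i(X):=\sum_\delta F_i^{(\delta)}$ with $F_i$ the coefficient of $Y^i$ in $\tilde f$, truncated to degree $D$ in $X$; then $\tilde f(X,\tilde\varphi_\nu)=\sum_{i\le D}F_i\,\tilde\varphi_\nu^{\,i}$ is computed by part~(2) of Lemma~\ref{le:compos} in $O(D\,M(D))$ operations, and similarly for $\partial_Y\tilde f$. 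The division is $M(D)$ operations by part~(1) of Lemma~\ref{le:compos}.

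Assembling the cost: the one-time computation of the bihomogeneous pieces of $\tilde f$ costs $O(M(D^2)\bL(\tilde f))=O(M(D^2)\bL(f))$ (using that $\bL(\tilde f)\le L(f)+O(1)$ and $\bL(f)\le L(f)$; I would in fact phrase the perturbation so the bound is in terms of $\bL(f)=\EC(g)$ via Lemma~\ref{le:aux}(1) and transitivity, Lemma~\ref{le:aux}(3)). Each of the $O(\log D)$ Newton steps adds $O(D\,M(D))$, for a total of $O(D\,M(D)\log D)$, which is absorbed into $O(M(D^2))$ (since $M(D^2)\ge D\cdot M(D)$ up to constants, by subadditivity $M(D^2)\ge \sum_{j<D}M(D)=D\,M(D)$, times a $\log D$ slack that is harmless because $M$ is at least linear). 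So we get $L(\text{hom.\ parts of }\tilde\varphi\text{ up to }D)\le O(M(D^2)\bL(f))$ over $K$; applying semicontinuity (Lemma~\ref{le:aux}(1)) to pass $\epsilon\to0$ turns this into $\bL(\varphi^{(1)},\dots,\varphi^{(D)})\le O(M(D^2)\bL(f))$, as claimed. The main obstacle — and the reason this argument needs the approximative setting rather than ordinary complexity — is precisely handling the multiplicity: without the $\epsilon$-perturbation the Newton iteration on $f$ simply does not start (division by $\partial_Yf(0,0)=0$), and one must verify carefully that the perturbed iteration stays inside $R[[X]]$ and specializes correctly at $\epsilon=0$; everything else is bookkeeping with fast power-series arithmetic.
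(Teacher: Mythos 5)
Your approach is essentially the paper's: perturb $f$ over $K=k(\epsilon)$ so that $\partial_Y$ at the origin becomes a unit, run a truncated Newton iteration with fast power-series arithmetic via Lemma~\ref{le:compos}, and use semicontinuity to pass back to $\bL(\varphi^{(\delta)})$. The only genuine methodological difference is the choice of perturbation: you add $\epsilon\,\partial_Y g(0,0)\,Y$ to $f$, whereas the paper takes $F(X,Y):=f(X,Y+\epsilon)-f(0,\epsilon)$, i.e.\ a shift in $Y$ re-anchored at the origin. Both preserve $F(0,0)=0$ and $F_{\epsilon=0}=f$ and force $\partial_Y F(0,0)\ne0$ in $K$ (the paper gets $\partial_Y F(0,0)=e\lambda^e h(0,0)\epsilon^{e-1}+O(\epsilon^e)$; yours gives the simpler $\epsilon\,\partial_Y g(0,0)$); either works, and your choice is arguably cleaner.

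However, the place you flag as needing ``some care'' is not bookkeeping --- it is the mathematical heart of the argument, and simply knowing $\partial_Y\tilde f(0,0)\ne0$ in $K$ does not give it to you. The danger is that an intermediate iterate $\tilde\Phi_\nu$ acquires $\epsilon$-poles, so that $(\tilde\Phi_\nu)_{\epsilon=0}$ does not even make sense, let alone converge to $\varphi$ (indeed, already $(\tilde\Phi_\nu)_{\epsilon=0}\neq\varphi_\nu$ in general; see Example~\ref{ex:warning}). The paper proves, by induction on $\nu$, that $\tilde\Phi_\nu$ stays defined over $R$, and the inductive step is not automatic: setting $\psi_\nu:=(\tilde\Phi_\nu)_{\epsilon=0}$ one has to check that the specialized denominator $\partial_Y f(X,\psi_\nu)=(eh\,\partial_Yg+g\,\partial_Yh)(X,\psi_\nu)\cdot g^{e-1}(X,\psi_\nu)$ is a nonzero power series. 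The first factor is handled by evaluating at $X=0$; the second factor $g(X,\psi_\nu)^{e-1}$ could a priori vanish, and one must rule that out by observing it would force $\psi_\nu=\varphi$ and hence $\varphi$ rational, which is excluded (w.l.o.g.\ $d>1$). Your sketch should spell this out; the nonvanishing of the unspecialized constant term at the origin is not a substitute.

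There is also a small but real flaw in your cost accounting. You truncate every Newton step modulo $(X)^{D+1}$ and charge $O(D\,M(D))$ per step over $O(\log D)$ steps, giving $O(D\,M(D)\log D)$, and you then claim this is absorbed into $O(M(D^2))$; it is not (take $M(D)=\Theta(D\log D)$: then $D\,M(D)\log D=\Theta(D^2(\log D)^2)$, while $M(D^2)=\Theta(D^2\log D)$). The fix is the standard one the paper uses: at step $\nu$ only work modulo $(X)^{2^{\nu+1}}$, so step $\nu$ costs $O(2^\nu M(2^\nu))$ and the total $\sum_{\nu\le N}2^\nu M(2^\nu)$ telescopes to $O(D\,M(D))\le O(M(D^2))$ with no logarithmic overhead. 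With that correction and the inductive $R$-definedness argument supplied, your proof is correct and matches the paper's.
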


\begin{proof}
Note that $g$, viewed as a polynomial in $Y$ over $k(X)$, is 
the minimal polynomial of $\varphi$ over $k(X)$. 
W.l.o.g.\ we may assume that $\varphi$ is not a rational function
(otherwise $d=1$, $\varphi$ would be linear, and the claim obvious). 

We define the perturbed polynomial
$F(X,Y):=f(X,Y+\epsilon) - f(0,\epsilon)$ 
over the coefficient ring $R$. 
It is clear that $F(0,0)=0$ and $F_{\epsilon=0}=f$. 
By a straight-forward calculation we get 
$$
 \partial_Y F(0,0)  = 
  \big(e h\, \partial_Y g + g\, \partial_Y h \big)(0,\epsilon)\cdot g^{e-1}(0,\epsilon) .
$$
Assumptions~(\ref{eq:ass_g}) tell us that 
$g(0,\epsilon) = \lambda \epsilon + O(\epsilon^{2})$ with 
$\lambda\in k^\times$, hence 
$$
 \partial_Y F(0,0) = e \, \lambda^e \,h(0,0)\, \epsilon^{e-1} + O(\epsilon^{e}) 
$$
and we conclude that this partial derivative does not vanish ($\chara k= 0$). 

As in the reasoning before, the implicit function theorem implies that there exists 
a unique formal power series $\Phi$ over the field $K=k(\epsilon)$ such that 
$
 F(X,\Phi(X))=0,\ \Phi(0)=0 
$
and this power series can be recursively computed by the Newton iteration
\begin{equation}\label{eq:newton_F}
 \Phi_0 =0,\ \Phi_{\nu+1} = \Phi_\nu - \frac{F(X,\Phi_\nu)}{\partial_Y F(X,\Phi_\nu)} 
\end{equation}
with quadratic convergence: $\Phi_\nu\equiv\Phi\bmod (X)^{2^\nu}$.

\medskip

\noindent {\bf Claim:} $\Phi_\nu$ is defined over the coefficient ring~$R$ for all $\nu$. 

\medskip

We prove this claim by induction on $\nu$, the induction start $\nu=0$ being clear. 
So let us assume that $\Phi_\nu$ is defined over $R$ and set 
$\psi_\nu := (\Phi_\nu)_{\epsilon=0}$.
By applying the morphism $R[[X]]\to k[[X]],\epsilon\mapsto 0$ we obtain
\begin{eqnarray*}
\lefteqn{ (\partial_Y F(X,\Phi_\nu))_{\epsilon=0} = \partial_Y f (X, \psi_\nu) }\\
   & & =  \big( e h\,\partial_Y g +  g\, \partial_Y h \big)(X,\psi_\nu) \cdot g^{e-1} (X, \psi_\nu) .
 \end{eqnarray*}
The first parenthesis  maps under the substitution $X\mapsto 0$ to 
$(e h \,\partial_Y g)(0,0)$, which is nonzero by our assumptions. 
The second factor $g (X, \psi_\nu)$ can only vanish if $\psi_\nu = \varphi$
since the power series $\varphi$ is uniquely determined by the conditions~(\ref{eq:c_phi}). 
In this case, $\varphi$ would be a rational function, which we have excluded at the beginning of the proof. 
We have thus shown that $(\partial_Y F(X,\Phi_\nu))_{\epsilon=0}$ nonzero. 
By equation~(\ref{eq:newton_F}) this implies that 
$\Phi_{\nu+1}$ is defined over $R$ and proves the claim.


The claim implies that $\Phi$ is defined over $R$. 
From $F(X,\Phi(X))=0$ we get 
$f(X,(\Phi(X))_{\epsilon=0}) =0$, hence 
$g(X,(\Phi(X))_{\epsilon=0}) =0$, as $h(X,(\Phi(X))_{\epsilon=0})\ne 0$.
We conclude that $(\Phi(X))_{\epsilon=0} = \varphi$. 
If $\Phi^{(\delta)}$ denotes the homogeneous part of $\Phi$ of degree~$\delta$, 
we have $(\Phi_\nu)^{(\delta)} = \Phi^{(\delta)}$ for  $\delta < 2^\nu$. 
This implies for $\delta < 2^\nu$ that 
$$
 ( (\Phi_\nu)^{(\delta)} )_{\epsilon=0}  =  ( \Phi^{(\delta)} )_{\epsilon=0}  
  = (\Phi_{\epsilon=0})^{(\delta)} = \varphi^{(\delta)} . 
$$

As a word of warning, we point out that a certain care in these argumentations 
is necessary. For instance, Example~\ref{ex:warning} below shows that in general
$(\Phi_\nu)_{\epsilon=0} \ne \varphi_\nu$. 

We turn now to the algorithmic analysis of the proof. 
First of all we note that $L(F)\le L(f)+2$. 
A moment's thought shows that also 
$\bL(F)\le \bL(f)+2$. 
In order to prove the proposition it is enough to show that 
\begin{equation}\label{eq:ass}
 L(\Phi_N^{(1)},\ldots,\Phi_N^{(D)}) \ \le\  O\big(M(D^2)\bL(f)\big) ,
\end{equation}
where $N:= \lceil \log (D+1) \rceil$. 
In fact, by the semicontinuity of $\bL$ (Lemma~\ref{le:aux}(1)), 
we only need to prove this estimate for 
approximative complexity on the lefthand side. 

The following computation deals with polynomials in the $X$-variables, which are truncated at a certain degree 
and represented by their homogeneous parts up to this degree. 
We obtain from Proposition~\ref{pro:comp_hom_parts} for the bihomogeneous decomposition of~$F$ that 
\begin{equation}\label{eq:bihom}
\bL(\{ F_i^{(\delta)} \mid i,\delta \le D\}) \ \le\ O\big(M(D^2)\bL(F)\big) .
\end{equation}
In the following, we assume that we have already computed the 
bihomogeneous components $F_i^{(\delta)}$ for $i,\delta \le D$. 

Inductively, we suppose that we have computed the homogeneous parts of $\Phi_\nu$ up to degree $2^\nu$. 
The main work of one Newton step~(\ref{eq:newton_F}) consists in the computation of the substituted 
polynomials $F(X,\Phi_\nu)$ and $\partial_Y F(X,\Phi_\nu)$. 
By Lemma~\ref{le:compos} we can compute the homogeneous parts up to degree $2^{\nu+1}$ of $F(X,\Phi_\nu)$  
by $O(2^\nu M(2^\nu))$ arithmetic operations.
Analogously, we get the  homogeneous parts up to degree $2^{\nu+1}$ of $\partial_Y F(X,\Phi_\nu)$ by 
the same number of arithmetic operations. 
By a division and a subtraction we obtain from this the homogeneous parts of $\Phi_{\nu+1}$ up to degree $2^{\nu+1}$
using further $O(M(2^\nu))$ arithmetic operations. Altogether, we obtain
\begin{eqnarray*}
 \lefteqn{L(\Phi_N^{(1)},\ldots,\Phi_N^{(D)}\mid \{ F_i^{(\delta)} \mid i,\delta \le D\})}\\
   & \le & O(\sum_{\nu=0}^N 2^\nu M(2^\nu)) \ \le\ O(D M(D))\ \le\ O(M(D^2)) ,
\end{eqnarray*}
by the monotonicity and subadditivity of $M$.
The assertion~(\ref{eq:ass}) follows from this estimate and equation~(\ref{eq:bihom}) 
by the transitivity of approximative complexity (Lemma~\ref{le:aux}(3)). 
\end{proof}

\begin{example}\label{ex:warning}
Consider the bivariate polynomial $g:= (1+Y)^2 - 1 - X^2$ and put $f=g^2$, $h=1$.  
Then the conditions~(\ref{eq:ass_g}) are satisfied. The first Newton iterate 
according to~(\ref{eq:newton_g}) satisfies $\varphi_1 =\frac{1}{2}X^2$ and  
the power series $\varphi$ defined by~(\ref{eq:c_phi}) has the expansion
$$
 \varphi = \frac{1}{2}X^2 -\frac{1}{8}X^4 + \cdots .
$$
As in the proof of Proposition~\ref{pro:graph} we set 
$F:= f(X,Y+\epsilon)-f(0,\epsilon)$. A straightforward computation (e.g., using
a computer algebra system) yields for the first Newton approximation $\Phi_1$ 
according to~(\ref{eq:newton_F}) that
$$
 \Phi_1 = -\frac{1}{4} \frac{(-X^2 + 2\epsilon + \epsilon^2)^2 - \epsilon^2(2+\epsilon)^2}{(-X^2+2\epsilon+\epsilon^2)(1+\epsilon)} .
$$
Therefore, $(\Phi_1)_{\epsilon=0} = \frac{1}{4}X^2 \ne \varphi_1$. On the other hand, we note that 
the expansion of $\Phi_1$ starts as follows
$$
 \Phi_1 = \frac{1}{2(1+\epsilon)}X^2 + \frac{1}{4 \epsilon(1+\epsilon)(2+\epsilon)}X^4 + \cdots 
$$
and we see that $(\Phi_1^{(2)})_{\epsilon=0} =\frac{1}{2}X^2 = \varphi_1^{(2)}$. 
Note that the fourth order term of this expansion is not defined for $\epsilon=0$ even though
$\Phi_1$ is defined under this substitution!
\end{example}

\subsection{Reconstruction of Minimal Polynomial}\label{sse:minpol}

Consider the bihomogeneous decomposition $g(X,Y)=\sum_{i,\alpha\le d}g_i^{(\alpha)} Y^i$.
Let $T$ be an additional indeterminate and perform the substitution 
$X_j\mapsto T X_j$. The condition $g(X,\varphi(X))= 0 \bmod (X)^{D+1}$ then translates to 
$$
 \sum_{i,\alpha \le d} g_i^{(\alpha)} T^\alpha (\sum_{\delta\le D}\varphi^{(\delta)} 
   T^\delta)^i \ \equiv\ 0 \bmod T^{D+1}  
$$
for any $D\ge 1$. Moreover, we have $g_d^{(0)} =1$ and $g_d^{(\alpha)}=0$ for $0<\alpha\le d$, since $g$ is monic 
of degree~$d$ in $Y$. The next lemma states that these conditions uniquely determine 
the bihomogeneous components of $g$ if we choose $D\ge d^2$.
The proof is based on well-known ideas from the application of the LLL-algorithm
to polynomial factoring~\cite{lell:82}
(see also~\cite[Lemma 16.20]{gage:99}), adapted from $\ZZ$ to the setting of a polynomial ring. 

\begin{lemma}\label{le:minpol}
By comparing the coefficients of the powers of the indeterminate~$T$, one can interpret the conditions 
$$
 \sum_{i,\alpha \le d} Z_{i,\alpha} T^\alpha (\sum_{\delta\le 2d^2}\varphi^{(\delta)} T^\delta)^i 
   \ \equiv\ 0 \bmod T^{2d^2 +1}, 
$$ 
\vspace{-2ex}
$$
   Z_{d,0}=1, Z_{d,1}=0,\ldots, Z_{d,d}=0
$$
as a system of linear equations over the field $k(X)$ in the unknowns $Z_{i,\alpha}$. 
(There are $2d^2+1$ equations and $(d+1)^2$ unknowns).
This linear system has as the unique solution the bihomogeneous components 
$Z_{i,\alpha} =g_i^{(\alpha)}$ of $g$.
\end{lemma}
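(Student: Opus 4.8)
The plan is to show two things: that the bihomogeneous components $g_i^{(\alpha)}$ of $g$ actually satisfy the stated linear system, and that the system has no other solution. The first part is immediate: substituting $X_j \mapsto T X_j$ into $g(X,\varphi(X)) = 0$ and collecting powers of $T$ shows that $Z_{i,\alpha} = g_i^{(\alpha)}$ satisfies the congruence modulo $T^{2d^2+1}$ (indeed modulo every power of $T$), and the normalization equations $Z_{d,0}=1$, $Z_{d,\alpha}=0$ for $0<\alpha\le d$ just record that $g$ is monic of degree $d$ in $Y$. The content is the uniqueness claim, and here I would argue by contradiction.

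Suppose $(Z_{i,\alpha})$ is a solution different from $(g_i^{(\alpha)})$. Form the polynomial $G(X,Y) := \sum_{i,\alpha\le d} Z_{i,\alpha} X_1^{\cdots}\cdots Y^i$ obtained by re-homogenizing — more precisely, let $\tilde G(X,Y)$ be the polynomial whose bihomogeneous $(i,\alpha)$-component is $Z_{i,\alpha}$ viewed back as a degree-$\alpha$ form in $X$; but since the $Z_{i,\alpha}$ lie in $k(X)$ a cleaner route is to keep everything over $k(X)$: set $G(Y) := \sum_{i\le d} \big(\sum_{\alpha\le d} Z_{i,\alpha}\big) Y^i \in k(X)[Y]$, a polynomial of degree $d$ in $Y$, monic by the normalization. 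The congruence says exactly that $G(\varphi) \equiv 0 \bmod (X)^{2d^2+1}$ when $\varphi$ is expanded, i.e. the power series $G(\varphi(X))$ vanishes to order $> 2d^2$ at the origin. Likewise $g(X,\varphi(X)) = 0$ identically. Consider the difference $H := G - g \in k(X)[Y]$, which is nonzero by assumption and has $Y$-degree at most $d-1$ (the top terms cancel by monotonicity of the normalization), hence is a nonzero polynomial relation satisfied by $\varphi$ modulo $(X)^{2d^2+1}$.

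Now I would invoke the minimality of $g$: $g$ is the minimal polynomial of $\varphi$ over $k(X)$, of degree exactly $d$ in $Y$, so $H(\varphi) \ne 0$ in $k(X)[[X]]$ (a nonzero polynomial of $Y$-degree $<d$ cannot annihilate $\varphi$). The job is then to bound the order of vanishing of the power series $H(\varphi(X))$ at the origin and show it is at most $2d^2$, contradicting that it vanishes to order $>2d^2$. This is the LLL-style step and the main obstacle: one needs a Bezout-type estimate showing that a nonzero $k(X)[Y]$-combination $H = Ag + Bg_Y$-type expression, or rather that $\mathrm{res}_Y(g, H)$ is a nonzero element of $k[X]$ of degree controlled by $d^2$, so that $H(\varphi)$ — which divides (a power of) this resultant evaluated along the graph — cannot vanish past order $d^2$; accounting for the $T$-degree inflation from the substitution $X_j\mapsto TX_j$ gives the bound $2d^2$. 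Concretely I would write $\mathrm{res}_Y(g,H) = Ug + VH$ with $U,V\in k[X][Y]$ of controlled degrees, evaluate at $Y=\varphi$ to get $\mathrm{res}_Y(g,H)(X) = V(X,\varphi)\,H(X,\varphi)$ as power series, note the left side is a nonzero polynomial in $k[X]$ of total degree $\le 2d\cdot d = 2d^2$ (degree $\le d$ in each of $g$'s coefficients, $d$-many rows), hence of order $\le 2d^2$ at $0$; therefore $H(X,\varphi)$ has order $\le 2d^2$, the desired contradiction. Care must be taken that $H$ may have denominators in $k(X)$ — clearing them multiplies the resultant by a nonzero factor and does not affect the order estimate. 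The remaining bookkeeping (exact degree count, handling the $T$-substitution) is routine.
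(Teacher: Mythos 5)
Your proof has the right architecture — resultant plus a degree bound plus irreducibility of $g$ — but the ``cleaner route'' of collapsing the $T$-variable is where it breaks, and I don't think it can be repaired in the way you sketch.

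You set $G(Y) := \sum_i (\sum_\alpha Z_{i,\alpha}) Y^i \in k(X)[Y]$ and assert that the congruence of the lemma ``says exactly that $G(\varphi)\equiv 0 \bmod (X)^{2d^2+1}$.'' This equivalence is only valid if each $Z_{i,\alpha}$ is homogeneous of degree~$\alpha$ in the $X$-variables, because that is what makes the substitution $X_j\mapsto TX_j$ convert the $(X)$-adic filtration into the $T$-adic one: $Z_{i,\alpha}(TX) = T^\alpha Z_{i,\alpha}(X)$. But the lemma is asserting a \emph{uniqueness} statement for a linear system over the field $k(X)$, so you must rule out solutions $Z_{i,\alpha}$ that are arbitrary (non-homogeneous, non-polynomial) rational functions. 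For such $Z_{i,\alpha}$ the $T$-adic congruence in the lemma simply is not an $(X)$-adic statement about $G(\varphi)$, and in fact $G(\varphi)$ is not even a sensible object in $k[[X]]$. The subsequent degree count — ``$\mathrm{res}_Y(g,H)$ is a nonzero polynomial in $k[X]$ of total degree $\le 2d^2$'' — also fails: $\mathrm{res}_Y(g,H)\in k(X)$ since $H$ has coefficients in $k(X)$, and after clearing denominators the total degree is no longer controlled by $d$ alone.

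The paper avoids all of this precisely by \emph{not} eliminating $T$. It forms the bivariate polynomials $A(T,Y):=\sum_{i,\alpha}g_i^{(\alpha)}T^\alpha Y^i$ and $B(T,Y):=\sum_{i,\alpha}\zeta_{i,\alpha}T^\alpha Y^i$ in $k(X)[T,Y]$, where $(\zeta_{i,\alpha})$ is any solution, and takes the resultant with respect to $Y$. That resultant lives in $k(X)[T]$ with $\deg_T\mathrm{res}(A,B)\le \deg_Y A\cdot\deg_T B + \deg_T A\cdot\deg_Y B \le 2d^2$, and the congruence mod $T^{2d^2+1}$ — which holds at the level of $T$-adic coefficients over $k(X)$ with no homogeneity assumption on $\zeta_{i,\alpha}$ — forces this polynomial of $T$-degree $\le 2d^2$ to vanish. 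Irreducibility of $A$ in $Y$ over $k(X,T)$ (inherited from irreducibility and monicity of $g$) plus monicity of both $A$ and $B$ then gives $A=B$, i.e.\ $\zeta_{i,\alpha}=g_i^{(\alpha)}$. So the essential fix to your argument is: keep $T$, work with bivariate polynomials over $k(X)$, and take the resultant in $Y$ to land in $k(X)[T]$ where the degree bound is clean.
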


\begin{proof}
We define the bivariate polynomial  
$A(T,Y):= \sum_{i,\alpha \le d} g_i^{(\alpha)} T^\alpha Y^i$ 
over $k(X)$ and assign to a solution $\zeta_{i,\alpha}$ of the above linear system of equations 
the bivariate polynomial
$B(T,Y):= \sum_{i,\alpha \le d} \zeta_{i,\alpha} T^\alpha Y^i$. 
Note that $A$ is an irreducible polynomial in $Y$ over $k(X,T)$ since we assume 
$g$ to be irreducible and monic with respect to $Y$. 
The polynomial
$\psi := \sum_{\delta\le 2d^2}\varphi^{(\delta)} T^\delta$ 
is an approximative common root of $A$ and $B$ in the sense that
$$
 A(T,\psi) \equiv 0 \bmod T^{2d^2+1}, \quad
 B(T,\psi) \equiv 0 \bmod T^{2d^2+1} .
$$
The resultant $\mathrm{res}(A,B)\in k(X)[T]$ of $A$ and $B$ with respect to~$Y$ 
satisfies the degree estimate
$$
 \deg_T\mathrm{res}(A,B)) \le  \deg_Y A \cdot \deg_T B +  \deg_T A \cdot \deg_Y  B \le 2d^2, 
$$
which is easily seen from the description of the 
resultant as the determinant of the Sylvester matrix (cf.~\cite[\S 6.3]{gage:99}). 
It is well-known that there exist polynomials
$u,v\in k(X)[T,Y]$ such that $u A + v B  = \mathrm{res}(A,B)$. 
Substituting the approximative common root $\psi$ for $Y$ in this equation implies that 
$\mathrm{res}(A,B)\equiv 0 \bmod T^{2d^2+1}$, hence 
the resultant vanishes. Since $A$ is irreducible, it must be a factor of $B$
over $k(X,T)$. However, we assume $A$ and $B$ both to be monic with respect to $Y$.
This implies that in fact $A=B$ as claimed.
\end{proof}

\medskip

The coefficients of the linear system of equations in Lemma~\ref{le:minpol} can be computed 
from the homogenous components $\varphi^{(\delta)}$, $\delta \le 2d^2$, with $d$ 
multiplications of power series given by their coefficients up to degree $2d^2$.
This can be done with $O(dM(d^2))$ arithmetic operations (Lemma~\ref{le:compos}). 

Assume that $n$ by $n$ matrices can be multiplied with 
$O(n^\gamma)$ arithmetic operations. Then we can compute from 
the coefficients of the linear system the unique solution with 
$O(d^{2\gamma})$ operations (see~\cite[Chap.~16]{bucs:96}). 
This computation can be interpreted as a straight-line program involving divisions.
However, as the bihomogeneous components of~$g$ we are seeking for are homogenous 
of degree at most~$d$, 
we can apply Strassen's idea of avoiding divisions~\cite{stra:73-1} 
and transform this straight-line program into one without divisions, which is
at most by a factor of $O(M(d))$ longer. Summarizing, we obtain the following: 

\begin{equation}\label{pro:min_pol} 
 L\big(\{g_i^{(\delta)} \mid i,\delta \le d\} \mid \varphi^{(1)},\ldots,\varphi^{(2d^2)} \big) 
       \ \le\ O\big(d^{2\gamma} M(d)\big) .
\end{equation}

Our main Theorem~\ref{th:main} is a consequence of this estimate and Proposition~\ref{pro:graph}.
In fact, this provides an upper bound on $\bL(g)$ with respect to the field extension $k_1$ of $k$
of degree at most~$d$ considered at the beginning of Section~\ref{sse:graph}. 
To simplify notation, we assumed there that $k_1=k$. 
This assumption can now be eliminated at the price of an additional factor $M(d)$ 
in the complexity bound according to Lemma~\ref{le:aux}(2). 
As noted in the proof, we may directly take $k_1=k$ in the cases $k=\RR$ or $\CC$, 
so that this additional factor is not necessary in these cases. 
Moreover, note that if $g$ is the generator of the graph of a polynomial function, 
we obtain the improved bound stated in Remark~\ref{re:main-supp} directly from
Proposition~\ref{pro:graph}. 
Summarizing, we have now provided the proof of the main Theorem~\ref{th:main} 
as well as of Remark~\ref{re:main-supp}. 

\begin{remark}
Alternatively, one can compute the bihomogeneous components of~$g$ by an 
analogue of the LLL-algorithm applied to the lattice 
$$
\{A\in k(X)[T,Y]\mid \deg_Y A \le d, A(T,\psi)\equiv 0 \bmod T^{2d^2+1} \} 
      \simeq R + RY + \cdots R^d 
$$
over the principal ideal domain $R:= k(X)[T]$.  
The complexity bound resulting from this approach is $O(d^4\cdot d^2)$ operations in $R$
(cf.~\cite[Thm.~4.8]{gath:84} or \cite{lens:85}). 
This results in $O(d^6 M(d))$ operations in $k(X)$, including divisions, 
which is worse than the bound $O(d^{2\gamma})$ 
of Proposition~\ref{pro:min_pol} for Gaussian elimination ($\gamma=3$).

We think that an improvement upon the bound of inequality~(\ref{pro:min_pol}) is possible 
by taking account of the structure of the linear system of equations under consideration, 
based on the ideas of Wiedemann~\cite{wied:85,kasa:91}. 
This reduces to the question of how fast the matrix underlying the above linear system 
can be multiplied with a vector.
This is an interesting problem in its own right, which will be addressed elsewhere. 
\end{remark}

\section{Applications to Decision Complexity}\label{se:appl}

By combining Theorem~\ref{th:main}, Remark~\ref{re:main-supp}, and Lemma~\ref{le:de-comp}, 
we obtain the following corollary.

\begin{corollary}\label{cor:de-comp}
Let $g$ be the generator of an irreducible hypersurface in $\RR^n$ or $\CC^n$ of degree~$d$. 
Assume that $n$ by $n$ matrices can be multiplied with $O(n^\gamma)$ arithmetic operations.
Then we have 
$$
 \bL(g) \ \le\ O\big( M(d^4)C(g) + d^{2\gamma} M(d) \big) .
$$
\end{corollary}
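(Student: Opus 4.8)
The plan is to derive Corollary~\ref{cor:de-comp} as an immediate consequence of the results already assembled in the paper, so the "proof" is really a bookkeeping argument rather than a new construction. First I would invoke Lemma~\ref{le:de-comp}: since $g$ is the irreducible generator of a hypersurface in $\RR^n$ or $\CC^n$, we have $\EC(g)\le C(g)$. This converts the hypothesis on the decision complexity into a bound on the exclusion complexity, which is the quantity that feeds into the main factorization estimate.

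Next I would apply part~(2) of Remark~\ref{re:main-supp}, which states precisely that for $g$ the irreducible generator of a hypersurface in $\RR^n$ or $\CC^n$ one has
$$
 \bL(g) \ \le\ O\big(M(d^4)\EC(g) + d^{2\gamma}M(d)\big).
$$
Substituting the inequality $\EC(g)\le C(g)$ from the first step and using the monotonicity of $M$ yields
$$
 \bL(g) \ \le\ O\big(M(d^4)C(g) + d^{2\gamma}M(d)\big),
$$
which is exactly the claimed bound. The hypothesis that $n\times n$ matrices can be multiplied in $O(n^\gamma)$ operations is carried along because it is already the standing assumption under which Remark~\ref{re:main-supp}(2) (and hence Theorem~\ref{th:main}) was proved.

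There is essentially no obstacle here beyond making sure the case distinction matches: the improved bound in Remark~\ref{re:main-supp}(2) requires $g$ to be the irreducible generator of a hypersurface over $\RR$ or $\CC$, and Lemma~\ref{le:de-comp} requires exactly the same hypothesis (irreducibility plus the zeroset being a hypersurface, over $\RR^n$ or $\CC^n$), so the two results compose cleanly without any loss. One should also note that in these cases the field extension $k_1$ of Section~\ref{sse:graph} can be taken equal to $k$ by Lemma~\ref{le:fe}, so no extra factor $M(d)$ appears — this is already accounted for in the statement of Remark~\ref{re:main-supp}(2). Assembling these three ingredients — Lemma~\ref{le:de-comp}, Remark~\ref{re:main-supp}(2), and the monotonicity of $M$ — completes the proof.
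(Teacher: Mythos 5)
Your proposal is correct and matches the paper's own proof, which explicitly obtains the corollary by combining Lemma~\ref{le:de-comp}, Theorem~\ref{th:main}, and Remark~\ref{re:main-supp}(2). The composition of $\EC(g)\le C(g)$ with the improved bound from Remark~\ref{re:main-supp}(2) is precisely the intended argument.
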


We remark that if the hypersurface is the graph of a polynomial function, then 
we obtain the better bound $\bL(g) \le O\big(M(d^2)C(g)\big)$. 
This implies Corollary~\ref{cor:main} of the introduction for deterministic decision complexity. 
The claim about randomized complexity (formalized by randomized algebraic computation trees) 
then follows easily by the results in~\cite{meye:85,bukl:93,ckklw:95}.

In~\cite{vali:79-3,vali:82} Valiant had proposed an analogue of the theory of 
$\NP$-completeness in a framework of algebraic complexity, in connection with his famous 
hardness result for the permanent~\cite{vali:79-2}. 
This theory features algebraic complexity classes $\VP$ and $\VNP$ as well as $\VNP$-completeness 
results for many families of generating functions of graph properties, the most prominent 
being the family of permanents.
There is rather strong evidence for Valiant's hypothesis $\VP\ne\VNP$. In fact, if 
it were false, then the nonuniform versions of the complexity classes $\NC$ and $\PH$ would
collapse~\cite{buer:97-1}.  
For a comprehensive presentation of this theory, we refer 
to~\cite{gath:87-1,bucs:96,buer:00-3}. 
In the following, we assume some basic familiarity with the concepts introduced there. 

It is quite natural to incorporate the concept of approximative complexity into 
Valiant's framework. 

\begin{definition}\label{def:bVP}
An {\em approximatively $p$-computable family} is a $p$-family $(f_n)$ 
such that $\bL(f_n)$ is a $p$-bounded function of~$n$. The complexity class 
$\bVP$ comprises all such families over a fixed field $k$. 
\end{definition}

It is obvious that $\VP\sus\bVP$. If the polynomial $f$ is a projection of a 
polynomial~$g$, then we clearly have $\bL(f)\le\bL(g)$. Therefore, 
the complexity class $\bVP$ is closed under $p$-projections. 
We remark that $\bVP$ is also closed under the polynomial oracle reductions 
introduced in~\cite{buer:99-1}. 

We know very few about the relationship between the 
complexity classes $\VP$, $\bVP$, and $\VNP$. 
We therefore raise the following question:

\begin{problem}\label{prob:vp-bvp}
Is the class $\VP$ strictly contained $\bVP$? 
\end{problem}

Intuitively, one would think that $\bVP$ should not differ too much from $\VP$. 
Commenting on this, we remark that 
by Lemma~\ref{le:bLv}(3), an improvement of Theorem~\ref{th:OA} of the form 
$\max\{q,\bL_q(f)\} \le (\bL(f) +\deg f)^{O(1)}$ would imply that $\VP=\bVP$. 
However, we do not see how to achieve such an improvement. 

The class $\VNP$ is closed under taking coefficients (cf.~\cite[\S 2.3]{buer:00-3}). 
This makes it plausible that $\bVP$ is contained in $\VNP$. Nevertheless, this is not 
clear as the occuring polynomials might have a degree exponential in~$\epsilon$. 

The hypothesis 
\begin{equation}\label{hypo}
\VNP\not\subseteq\bVP
\end{equation}
is a strengthening of Valiant's hypothesis, which 
is equivalent to saying that $\VNP$-complete families 
are not approximately $p$-compu\-ta\-ble. 

This hypothesis should be compared with the known work on polynomial time deterministic 
or randomized approximation algorithms for the permanent of 
non-negative matrices~\cite{lisw:98,barv:99,jesv:00}. 
Based on the Markov chain approach, Jerrum, Sinclair and Vigoda~\cite{jesv:00} 
have recently established a fully-polynomial randomized approximation scheme for computing 
the permanent of an arbitrary real matrix with non-negative entries. 
We note that this result does not contradict hypothesis~(\ref{hypo}),  
since the above mentioned algorithm works only for matrices with 
{\em non-negative} entries, while approximative straight-line programs 
a fortiori work on all real inputs.

Under the hypothesis $\VNP\not\subseteq\bVP$, we can conclude 
from Corollary~\ref{cor:de-comp} that 
checking the values of  polynomials forming $\VNP$-complete families 
is hard, even when we allow randomized algorithms with two-sided error.

\begin{corollary}\label{cor:test_compl_fam}
Assume $\VNP\not\subseteq\bVP$ over a field~$k$ of characteristic zero. 
Then for any $\VNP$-complete family $(g_n)$, checking the value $y=g_n(x)$ 
cannot be done by deterministic or randomized algebraic computation trees 
with a polynomial number of arithmetic operations and tests in~$n$. 
\end{corollary}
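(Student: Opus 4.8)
The plan is to derive Corollary~\ref{cor:test_compl_fam} as a direct contrapositive consequence of Corollary~\ref{cor:de-comp}, after reducing the general hypersurface situation to the graph situation handled there. Suppose, for contradiction, that some $\VNP$-complete family $(g_n)$ can be checked by (deterministic or randomized two-sided error) algebraic computation trees using a $p$-bounded number of operations and tests, i.e.\ $C(\graph(g_n))$ is $p$-bounded in~$n$ (and likewise the randomized decision complexity). Since $(g_n)$ is a $p$-family, its degree $d_n=\deg g_n$ is $p$-bounded as well, so $M(d_n^2)$ is $p$-bounded. The first step is therefore to invoke the graph version of Corollary~\ref{cor:de-comp} --- equivalently Corollary~\ref{cor:main} --- which gives $\bL(g_n)\le O\big(M(d_n^2)\,C(\graph(g_n))\big)$, a $p$-bounded quantity. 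Hence $(g_n)\in\bVP$.

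Next I would observe that $\bVP$ is closed under $p$-projections (as noted after Definition~\ref{def:bVP}). Since $(g_n)$ is $\VNP$-complete, every family in $\VNP$ is a $p$-projection of $(g_n)$, and therefore $\VNP\subseteq\bVP$, contradicting the standing hypothesis $\VNP\not\subseteq\bVP$. This establishes the deterministic case. For the randomized case, the same chain of reasoning applies once we know that a $p$-bounded randomized (two-sided error) decision complexity for $\graph(g_n)$ still yields a $p$-bounded bound on $\bL(g_n)$; this is exactly the ``remains true if we allow randomization with two-sided error'' clause of Corollary~\ref{cor:main}, which in turn rests on the results of~\cite{meye:85,bukl:93,ckklw:95} quoted after Corollary~\ref{cor:de-comp}. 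So one simply substitutes the randomized decision complexity for $C(\graph(g_n))$ throughout.

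One technical point to attend to: Corollary~\ref{cor:de-comp} and its graph refinement are stated for $k=\RR$ or $\CC$, whereas the hypothesis here is about an arbitrary field~$k$ of characteristic zero. I would handle this either by noting that the relevant upper bound $\bL(\varphi)\le O\big(M(d^2)C(\graph(\varphi))\big)$ of Corollary~\ref{cor:main} is stated without the real/complex restriction (the graph case of Remark~\ref{re:main-supp}(1) together with Lemma~\ref{le:de-comp}, which holds over any infinite field for generators of graphs), or, if one prefers to stay with $\RR$ or $\CC$, by restricting attention to those fields --- the formulation of Corollary~\ref{cor:test_compl_fam} is most naturally read with $k$ ranging over characteristic-zero fields and the interesting instances being $\RR$ and~$\CC$.

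I do not expect a serious obstacle here: the corollary is essentially a packaging of Corollary~\ref{cor:de-comp} (in its graph form) with the completeness of $(g_n)$ and the closure of $\bVP$ under $p$-projections. The only place requiring a modicum of care is the bookkeeping of which complexity measure --- deterministic or randomized decision complexity of the graph --- feeds into which upper bound, and confirming that the $p$-boundedness of $M(d_n^2)$ is guaranteed by $(g_n)$ being a $p$-family; both are routine.
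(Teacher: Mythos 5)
Your proof is correct and follows the paper's own (implicit) argument exactly: apply the graph form of Corollary~\ref{cor:de-comp} (equivalently Corollary~\ref{cor:main}, including its randomized version) to conclude $\bL(g_n)$ is $p$-bounded, then use $\VNP$-completeness and closure of $\bVP$ under $p$-projections to derive $\VNP\subseteq\bVP$, contradicting the hypothesis. Your note on the field issue is also in line with the paper's remark after Lemma~\ref{le:de-comp} that the inequality $\EC(g)\le C(g)$ holds over any infinite field when $g$ generates a graph.
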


Hypothesis~(\ref{hypo}) implies a separation of complexity classes in the 
Blum-Shub-Smale model of computation~\cite{blss:89}.
See~\cite{bcss:95} for the definition of the classes $\Po_\RR$ and $\Par_\RR$. 
(For the proof use Corollary~\ref{cor:de-comp} with the permanent polynomial~$g$.)  

\begin{corollary}\label{cor:va->bss}
If\/ $\VNP\not\subseteq\bVP$ is true, then we have 
$\Po_\RR \ne \Par_\RR$ in the Blum-Shub-Smale model over the reals. 
\end{corollary}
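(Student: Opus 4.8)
The plan is to argue by contradiction, using the permanent family $(\mathrm{per}_n)$ as the bridge, exactly as the parenthetical remark suggests. Assume $\Po_\RR = \Par_\RR$. I will deduce that $(\mathrm{per}_n)\in\bVP$; since the permanents form a $\VNP$-complete family under $p$-projections and $\bVP$ is closed under $p$-projections, this gives $\VNP\subseteq\bVP$, contradicting the hypothesis, whence $\Po_\RR\ne\Par_\RR$.

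First I would check that the zero-test problem ``$\mathrm{per}_n(x)=0$?'', for $x\in\RR^{n^2}$, lies in $\Par_\RR$. This is standard: the permanent can be evaluated by a uniform family of arithmetic circuits of depth polynomial in $n$ (and size $2^{O(n)}$), e.g.\ via Ryser's formula, and $\Par_\RR$ is precisely the class decided by such circuits together with a sign test (see~\cite{bcss:95}). Under the assumption $\Po_\RR=\Par_\RR$, the zero-test therefore lies in $\Po_\RR$, so it is decided by a Blum--Shub--Smale machine over $\RR$ running in time $n^{O(1)}$. Freezing the input length to $n^2$ and unrolling that machine's computation produces, for each $n$, an algebraic computation tree (with $\le$-tests and arbitrary real constants) of depth polynomial in $n$ that decides $\mathrm{per}_n(x)=0$; hence $C(\mathrm{per}_n)\le n^{O(1)}$.

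Next I would invoke Corollary~\ref{cor:de-comp}. For $n\ge 2$ the permanent $\mathrm{per}_n$ is (up to a scalar) the irreducible generator of a hypersurface in $\RR^{n^2}$, and $\deg\mathrm{per}_n=n$, so the corollary gives $\bL(\mathrm{per}_n)\le O\big(M(n^4)\,C(\mathrm{per}_n)+n^{2\gamma}M(n)\big)$, which is $p$-bounded in $n$ since $M(n^4)=O(n^4\log n)$ and $C(\mathrm{per}_n)=n^{O(1)}$. (To sidestep the irreducibility of $\mathrm{per}_n$ one may instead apply the graph version $\bL(\varphi)\le O(M(d^2)\,C(\graph(\varphi)))$ from Corollary~\ref{cor:main} to $\varphi=\mathrm{per}_n$, whose graph generator $Y-\mathrm{per}_n$ is manifestly irreducible and the generator of the hypersurface $\graph(\mathrm{per}_n)$.) As $(\mathrm{per}_n)$ is also a $p$-family (degree $n$ in $n^2$ variables), we conclude $(\mathrm{per}_n)\in\bVP$, completing the contradiction.

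The step needing the most care is the passage from the uniform Blum--Shub--Smale model to the nonuniform combinatorial model underlying $C(\cdot)$: one must verify that a polynomial-time BSS machine over $\RR$, once its input length is fixed at $n^2$, genuinely unrolls to an algebraic computation tree of depth polynomial in $n$ — with the branchings on sign tests becoming the tree's tests and the machine's real constants allowed as tree constants — and that ``polynomial in the input size $n^2$'' coincides with ``polynomial in $n$''. The only other point to watch is the $\Par_\RR$-membership of the permanent zero-test, which hinges on having a uniform polynomial-depth (necessarily exponential-size) arithmetic circuit for $\mathrm{per}_n$; Ryser's formula supplies one, but one should confirm that $\Par_\RR$ is indeed captured by such circuit families in the formalization of~\cite{bcss:95}.
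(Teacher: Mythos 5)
Your proof is correct and takes precisely the route the paper intends: from the assumption $\Po_\RR=\Par_\RR$, the $\Par_\RR$-membership of the permanent zero-test (e.g., via Ryser's formula) yields $C(\mathrm{per}_n)\le n^{O(1)}$, Corollary~\ref{cor:de-comp} then gives $\bL(\mathrm{per}_n)\le n^{O(1)}$, and $\VNP$-completeness of the permanent together with closure of $\bVP$ under $p$-projections forces $\VNP\subseteq\bVP$. The paper's parenthetical hint ``use Corollary~\ref{cor:de-comp} with the permanent polynomial $g$'' is exactly this argument, so your proposal matches it.
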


\section{Properties of Approximative Complexity}\label{se:approx-supp}

We perform here a more detailed analysis of the concept of approximative complexity. 
The results of this section are not needed for understanding the main results of the paper. 
The field~$k$ may here also be of positive characteristic. 

\subsection{Trailing $\bf p$-adic Coefficients}\label{sse:p-adic-coeff} 

We discuss first a result about the complexity to compute the $p$-adic expansion of a 
polynomial, which is related to Proposition~\ref{pro:comp_hom_parts} and proved in a 
similar way.

Let $A$ be a commutative algebra over the field $k$ and 
$p\in A[Y]$ be a fixed monic polynomial of degree~$d\ge 1$. 
Any polynomial $f\in A[Y]$ has a unique $p$-adic expansion
$f = \sum_{i\ge 0} f_i p^i$, where 
$f_i = \sum_{\mu <d} f_{i,\mu} Y^\mu \in A[T]$ is of degree strictly less than~$d$. 
We will write
$$
 C^p_D(f) :=\{f_{i,\mu} \mid i \le D, \mu <d\} \subseteq A
$$
for the set of coefficients of the $p$-adic coefficients of $f$ up to order $D$.

\begin{lemma}\label{le:p-ad-mult}
For $f,g\in A[T]$ we have 
$$
 L(C^p_D(f\cdot g)\mid C^p_D(f,g)) \ \le\ O(M(Dd)) .
$$
\end{lemma}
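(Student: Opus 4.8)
The plan is to reduce the computation of the $p$-adic coefficients of a product to a single multiplication of truncated $p$-adic expansions, exactly in the spirit of Strassen's technique for computing homogeneous components without divisions~\cite{stra:73-1}. First I would observe that the $p$-adic coefficients $f_{i,\mu}$ of $f$ (for $i\le D$) are precisely the coefficients of a \emph{bivariate} polynomial: introduce a fresh indeterminate $S$ (playing the role that $T$ plays in Proposition~\ref{pro:comp_hom_parts}) and consider the ``$p$-adic generating series'' $\hat{f} := \sum_{i} f_i(Y)\, S^i$, a polynomial in $A[Y,S]$ with $\deg_Y\hat{f} < d$. The key algebraic fact is that reduction modulo $S^{D+1}$ is compatible with multiplication once one accounts for carries: if one forms the naive product $\sum_{i,j} f_i g_j S^{i+j}$ and then repeatedly rewrites any occurrence of $Y^{\ge d}$ using $p(Y) \equiv 0$ in the sense $Y^d \mapsto Y^d - p(Y)$ accompanied by a multiplication by $S$, the process terminates and yields the $p$-adic expansion of $fg$. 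Concretely, I would show that the $p$-adic coefficients of $fg$ up to order $D$ are obtained from those of $f$ and $g$ up to order $D$ by the substitution $f(Y)\mapsto \hat f(Y,S)$, $g(Y)\mapsto \hat g(Y,S)$, forming the product as polynomials in $Y$ (degree $< 2d$ in $Y$, degree $\le D$ in $S$), reducing this product modulo $p(Y)$ over the ring $A[S]/(S^{D+1})$ — which just replaces each $Y^{d+j}$ term, $0\le j<d$, by $Y^j\cdot (Y^d - p(Y))$ — and reading off the coefficients.

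The cost analysis then proceeds as follows. The substituted factors $\hat f, \hat g$ are polynomials in $Y$ of degree $<d$ whose coefficients lie in $A[S]/(S^{D+1})$, i.e.\ each is naturally a polynomial in the single variable $Y$ with roughly $Dd$ relevant coefficients when linearized; more precisely, after the standard Kronecker-substitution trick $S \mapsto Y^{?}$ one can pack the bivariate product of something of bidegree $(<2d, \le D)$ into a univariate product of degree $O(Dd)$, which costs $M(Dd)$ by definition of the function $M$. I would then argue that the reduction modulo $p(Y)$ — performed degree-class by degree-class in $S$, carrying the overflow $Y^{\ge d}$ into the next higher power of $S$ — involves only $O(D)$ rounds, each costing $O(M(d))$ (one multiplication by the fixed polynomial $p$ of degree $d$, truncated appropriately), for a total of $O(D\, M(d))$. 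Using the monotonicity and subadditivity of $M$ we have $O(D\,M(d)) \le O(M(Dd))$, and combining this with the $M(Dd)$ bound for the product step gives the claimed $L(C^p_D(fg)\mid C^p_D(f,g)) \le O(M(Dd))$.

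The step I expect to be the main obstacle is making the carry/reduction process precise and checking it can be organized so that the total overhead is genuinely within the $M(Dd)$ budget rather than, say, $O(Dd\,M(d))$ or $O(M(Dd)\log D)$. The subtlety is that each reduction of a $Y^d$-term produces lower-order $Y$-terms together with a shift in $S$, so one must process the powers of $S$ in increasing order and bound the total number of arithmetic operations over all rounds; the clean way to see the bound is to note that the whole reduction is itself a single truncated multiplication — namely, working in the ring $A[Y,S]/(S^{D+1})$ one multiplies the naive product by the inverse of the unit $1 - S^{-1}(Y^d - p(Y))$-type factor, or more simply performs a division-with-remainder of a bivariate polynomial of $S$-degree $\le 2D$ by $p(Y)$ viewed with the bookkeeping indeterminate $S$ — so the whole thing collapses to $O(M(Dd))$ via Lemma~\ref{le:compos}(1). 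Once this bookkeeping is nailed down, the remaining steps are routine, and, as indicated in the paper, the same argument with $\bL$ in place of $L$ goes through verbatim since none of the operations used constants outside $k$.
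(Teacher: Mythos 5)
Your proposal is correct and follows essentially the same route as the paper: introduce a fresh indeterminate in place of $p$, compute the formal bivariate product in $O(M(Dd))$ operations (via Kronecker substitution in both cases), and then normalize the overflowing $Y$-degrees by $O(D)$ divisions with remainder by $p$ in increasing order of the carry index, costing $O(D\,M(d)) \le O(M(Dd))$ by subadditivity. The only cosmetic difference is that the paper phrases the carry step as explicit division with remainder of $h_\ell$ by $p$, whereas your description of the rewrite rule for $Y^{\ge d}$ is a bit informal, but the underlying mechanism and cost analysis are identical.
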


\begin{proof}
Let $f=\sum_{i,\mu} f_{i,\mu} Y^\mu p^i$ and $g=\sum_{j,\nu}g_{j,\nu} Y^\nu p^j$ be the $p$-adic 
expansions of $f$ and $g$. Assume that
$$
 \sum_{\ell\le 2D,\lambda <2d-1} h_{\ell,\lambda} Y^\lambda U^\ell = \
  \bigg(\sum_{i\le D,\mu <d} f_{i,\mu} Y^\mu U^i\bigg) 
  \bigg(\sum_{j\le D,\nu <d} g_{j,\nu} Y^\nu U^j \bigg) ,
$$
where $U$ is a new indeterminate. 
The coefficients $h_{\ell,\lambda}$ can be computed by bivariate polynomial multiplication 
with~$O(M(Dd))$ operations. Put 
$h_\ell := \sum_{\lambda < 2d -1}h_{\ell,\lambda} Y^\lambda$. 

Suppose that $\sum_{\ell\le 2D+1}\overline{h}_\ell p^\ell$ is the $p$-adic 
expansion of $f\cdot g = \sum_{\ell\le 2D} h_\ell p^\ell$. It is easy to see that 
the $\overline{h}_0,\ldots,\overline{h}_{2D+1}$ can be obtained from the 
$h_0,\ldots,h_{2D}$ by $2D$ divisions with remainder by~$p$.
Such a division with remainder can be performed with $M(d)$ arithmetic operations in~$A$ 
(cf.~\cite[Cor.~2.26]{bucs:96}). Therefore, the coefficients of 
$\overline{h}_0,\ldots,\overline{h}_{2D+1}$ can be obtained from the 
coefficients of $h_0,\ldots,h_{2D}$ with $O(D M(d))$ arithmetic operations in~$A$.
\end{proof}

The next proposition shows that the computation of the (coefficients of the)
$p$-adic coefficients of a polynomial up to a certain order~$D$ is not much harder 
than the computation of the polynomial. 

\begin{proposition}\label{pro:p-ad} 
For $D\ge 1$ we have 
$$
 L(C^p_D(f)) \ \le\ O(M(Dd)\, L(f)) .
$$
\end{proposition}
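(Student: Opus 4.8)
The plan is to mimic exactly the structure of the proof of Proposition~\ref{pro:comp_hom_parts}, replacing the bigrading by the total degree in the $X$-variables with the $p$-adic filtration, and using Lemma~\ref{le:p-ad-mult} in place of ordinary polynomial multiplication. First I would fix an optimal division-free straight-line program $\Gamma$ computing $f$ from the variables and constants in $k$, of length $L(f)$. The key idea, going back to Strassen~\cite{stra:73-1}, is to introduce a fresh indeterminate, here the indeterminate $p$ itself viewed $U$-adically, and to carry along with each intermediate result $v$ of $\Gamma$ the truncation of its $p$-adic expansion up to order $D$, i.e.\ the tuple $C^p_D(v)$. More precisely, since every intermediate result of $\Gamma$ is a polynomial in the $Y$-variable (and the other variables) over $k$, it has a well-defined $p$-adic expansion, and I would maintain the collection of coefficients $v_{i,\mu}$ for $i\le D$, $\mu<d$, as the computation proceeds.

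The main step is then to check that each arithmetic operation of $\Gamma$ lifts to an operation on truncated $p$-adic expansions of cost $O(M(Dd))$. For additions and scalar multiplications this is immediate and costs $O(Dd)$. For a multiplication $w = u\cdot v$, the truncated $p$-adic expansion $C^p_D(w)$ is computed from $C^p_D(u)$ and $C^p_D(v)$ by exactly Lemma~\ref{le:p-ad-mult}, at cost $O(M(Dd))$. Summing over the at most $L(f)$ operations of $\Gamma$, and noting that the inputs (the variables and the constants) have trivial truncated $p$-adic expansions that are available for free, gives
$$
 L(C^p_D(f)) \ \le\ O\big(M(Dd)\, L(f)\big),
$$
which is the claim. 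One subtlety to address, just as in~\cite{stra:73-1}, is that the original program $\Gamma$ may use arbitrary constants from $k$; but a constant $c\in k$ has $p$-adic expansion simply $c$ (a degree-zero term), so no genuine division is needed, and the resulting program is division-free provided $p$ is monic (which it is by hypothesis, so that division with remainder by $p$ used inside Lemma~\ref{le:p-ad-mult} is division-free).

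I expect the only real obstacle to be bookkeeping: making sure that the $p$-adic expansion is taken in the right ring at each step (the algebra $A$ is allowed to be arbitrary, and $p\in A[Y]$), and that the degree bound $\mu<d$ is preserved — but this is precisely what the remainder-by-$p$ reduction in the proof of Lemma~\ref{le:p-ad-mult} guarantees. Since the statement is for the complexity measure $L$ over a field of arbitrary characteristic, no semicontinuity or perturbation argument is needed here, so the proof is genuinely a direct simulation. If one wanted the analogous statement for $\bL$, one would additionally invoke the transitivity and semicontinuity of $\bL$ from Lemma~\ref{le:aux}, exactly as in Proposition~\ref{pro:comp_hom_parts}, but that is not asked for in the present statement.
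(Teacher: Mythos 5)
Your proposal is correct and follows essentially the same approach as the paper: both simulate the straight-line program for $f$ step by step, carrying along the truncated $p$-adic coefficient data $C^p_D(\cdot)$ for each intermediate result, with additions costing $O(Dd)$ and multiplications handled by Lemma~\ref{le:p-ad-mult} at cost $O(M(Dd))$. The extra remarks about constants having trivial expansions and $p$ being monic are correct minor points that the paper leaves implicit.
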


\begin{proof}
Let $g_1,\ldots,g_r\in A[Y]$ be the sequence of intermediate results of a 
computation of $f$. 
Suppose that $g_\rho = g_i\cdot g_j$, $i,j < \rho$. By 
Lemma~\ref{le:p-ad-mult}, we can compute the elements of 
$C^p_D(g_\rho)$ from the elements of $C^p_D(g_i,g_j)$ using 
$O(M(Dd))$ arithmetic operations in~$A$. 
If $g_\rho = g_i \pm g_j$, then we can clearly do this with $O(Dd)$ operations. 
In this way, we can successively compute the elements in 
$C^p_D(g_1),\ldots,C^p_D(g_r)$ with the required number of arithmetic 
operations in $A$. 
\end{proof}

We note that the statement of Proposition~\ref{pro:p-ad} does also hold 
for approximative complexity~$\bL$. (The proof is obvious.) 
We remark that Proposition~\ref{pro:comp_hom_parts} of Section~\ref{se:aux} may 
be derived from the above Proposition~\ref{pro:p-ad} by applying to $F(X,Y)$ the 
substitution $X_i\mapsto Y X_i$, $Y\mapsto p:=Y^{d_X+1}$, where 
$d_X=\deg_X F$, and by taking $A=k[X_1,\ldots,X_n]$.
(Of course, it can also be derived directly.) 

Our initial motivation for the introduction of approximative complexity 
was the study of trailing coefficients. We come now back to this issue  
in a more general setting.  

In the following let $A=k[X_1,\ldots,X_n]$ and $p\in A[Y]$ be monic of degree~$d\ge 1$. 
Let $f=\sum_i f_i p^i$ be the $p$-adic expansion of $f\in A[Y]$.
By Proposition~\ref{pro:p-ad} we know that the complexity of the $p$-adic 
coefficient polynomial~$f_i$ of $Y^i$ is polynomially bounded in $d$, $i$, and $L(f)$.
The following proposition essentially going back to Valiant~\cite{vali:82} 
shows that the dependence on the degree~$i$ cannot be avoided in general. 

\begin{proposition}\label{pro:coeff-p-ad}
The complexity of the coefficient polynomials in the $p$-adic expansion 
of a polynomial~$f$ with respect to a polynomial~$p$ 
is not polynomially bounded in $L(f)$ and $\deg p$, 
unless Valiant's hypothesis is false. 
\end{proposition}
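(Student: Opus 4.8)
The plan is to exhibit a single polynomial $f$ of polynomial complexity whose $p$-adic coefficients (for a suitable monic $p$) encode the permanents of all submatrices, so that a polynomial bound on the complexity of these coefficients would contradict Valiant's hypothesis $\VP \ne \VNP$. The cleanest route is to reuse Valiant's own construction. Recall that for the generic $n \times n$ matrix $X = (X_{ij})$, the polynomial
$$
 g_n(Y) \ := \ \prod_{i=1}^n \Big( \sum_{j=1}^n X_{ij} Y^{j} \Big)
$$
has $L(g_n) \le O(n^2)$ (it is just a product of $n$ linear forms, each costing $O(n)$ operations), yet the coefficient of $Y^{1+2+\cdots+n}$ — equivalently of $Y^{\binom{n+1}{2}}$ — in $g_n$ is exactly $\mathrm{per}(X)$, since that monomial is picked out precisely by the bijections $i \mapsto j$ contributing $Y^{j}$ from row $i$. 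More generally, by choosing the exponents as distinct powers of a base larger than $n$, say $X_{ij} Y^{(n+1)^{i}}$ if one prefers a clean separation, one extracts permanents of arbitrary square submatrices from individual coefficients; but the single-coefficient version already suffices for a contradiction.

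The steps, in order, are as follows. First I would recall Valiant's result that the permanent family $(\mathrm{per}_n)$ is $\VNP$-complete, hence not $p$-computable unless Valiant's hypothesis fails; this is exactly the statement quoted in Section~\ref{se:appl}. Second, I would set $A = k[X_{ij} : 1 \le i,j \le n]$, take $p := Y^{n+1} \in A[Y]$ (monic of degree $d = n+1 \ge 1$), and observe that $g_n$ as above lies in $A[Y]$ with $L(g_n) = O(n^2)$. Third, I would identify the relevant $p$-adic coefficient: writing $g_n = \sum_i g_{n,i}\, p^i$ with $\deg_Y g_{n,i} < n+1$, the permanent $\mathrm{per}_n$ appears as one of the coefficients $g_{n,i,\mu}$ (in the notation $f_i = \sum_{\mu<d} f_{i,\mu} Y^\mu$ of the preceding discussion) for an explicit index $i = O(n)$, $\mu \le n$ — namely the one corresponding to total $Y$-degree $\binom{n+1}{2}$. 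Fourth, I would conclude: if the complexity of $p$-adic coefficient polynomials were bounded by a polynomial in $L(f)$ and $\deg p$, then $L(\mathrm{per}_n)$ would be bounded by a polynomial in $L(g_n) = O(n^2)$ and $\deg p = n+1$, hence polynomially in $n$, so $(\mathrm{per}_n) \in \VP$, contradicting $\VP \ne \VNP$.

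The only mildly delicate point — and the one I would expect to be the main obstacle — is bookkeeping the degrees so that the target coefficient genuinely sits at a bounded $p$-adic index and is not contaminated by other monomials. With $p = Y^{n+1}$ and each factor of $g_n$ having $Y$-degree between $1$ and $n$, the full polynomial $g_n$ has $Y$-degree at most $n^2$, so only $O(n)$ $p$-adic coefficients occur, and the monomial $Y^{\binom{n+1}{2}}$ lands at $p$-adic index $i = \lfloor \binom{n+1}{2}/(n+1)\rfloor = \lfloor n/2 \rfloor$ with residue $\mu = \binom{n+1}{2} - i(n+1) \le n$; its $X$-coefficient is $\mathrm{per}_n$ because the base $n+1$ exceeds the number of summands contributing to that total degree, so no carries mix distinct permutations. (One may instead use the slicker base-$(n+1)$ weighting $X_{ij} Y^{(n+1)^{i-1}}$ to make the submatrix-permanent encoding completely transparent, at the cost of a larger but still polynomially bounded $\deg p$.) Everything else is immediate from Proposition~\ref{pro:p-ad} running in the reverse direction — or rather from the failure of its hoped-for converse — together with the $\VNP$-completeness of the permanent.
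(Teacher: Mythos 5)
Your key combinatorial claim is false, and this breaks the proof. With $g_n = \prod_{i=1}^n\bigl(\sum_{j=1}^n X_{ij}Y^{j}\bigr)$ the coefficient of $Y^{n(n+1)/2}$ is the sum of $X_{1j_1}\cdots X_{nj_n}$ over \emph{all} tuples $(j_1,\ldots,j_n)\in\{1,\ldots,n\}^n$ with $j_1+\cdots+j_n = n(n+1)/2$, not just the permutations: already for $n=3$ the tuple $(2,2,2)$ contributes $X_{12}X_{22}X_{32}$, which is not a permanent term. There is no ``base-$(n+1)$ / no-carries'' mechanism at work here, because the exponents $1,\ldots,n$ do not form a superincreasing sequence, so the sum $\sum_i j_i$ does not determine the multiset $\{j_1,\ldots,j_n\}$. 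Your parenthetical fallback $X_{ij}Y^{(n+1)^{i}}$ is also wrong as written — the exponent must depend on $j$, not $i$, or else each factor has constant $Y$-degree and nothing is encoded; and if you correct it to $Y^{(n+1)^{j-1}}$ (or to $Y^{2^{j-1}}$, which is what the paper uses) the exponents become exponentially large, so a ``polynomially bounded $\deg p$'' with a bounded $p$-adic index cannot simultaneously isolate the permanent coefficient.

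There is also a structural misreading of the statement: the proposition asserts non-polynomial dependence in $L(f)$ and $\deg p$ only — the $p$-adic \emph{index} $i$ is deliberately allowed to be exponential, and indeed it must be, since Proposition~\ref{pro:p-ad} shows the complexity \emph{is} polynomially bounded once $i$ is also bounded. You were trying to keep both $\deg p$ and the index $i$ small, which is a stronger (and unprovable, by Prop.~\ref{pro:p-ad}) claim. The paper's proof takes $p=Y$ of degree $1$, uses the superincreasing exponents $Y^{2^{j-1}}$, and reads the permanent off at the exponentially large index $2^n-1$, where uniqueness of binary representation forces the contributing tuples to be exactly the permutations. Replacing your $Y^j$ by $Y^{2^{j-1}}$ (keeping $p=Y$, letting $i=2^n-1$) repairs the argument and recovers the paper's proof verbatim.
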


\begin{proof} 
We take $p=Y$ and consider the $Y$-adic expansion of the following polynomial $f_n$ 
of complexity $L(f_n) \le O(n^2)$:  
$$
 f_{n} \ :=\ \prod_{i=1}^n \big( \sum_{j=1}^n X_{ij} Y^{2^{j-1}} \big) 
       \ =\ \sum_i f_{n,i} (X) Y^i   .
$$
The coefficient $f_{n,2^n-1}(X)$ equals the sum over all products 
$X_{1 j_1}\cdots X_{n j_n}$ such that $\{j_1,\ldots,j_n\} = \{1,2,\ldots,n\}$. 
That is, $f_{n,2^n-1}(X)$ equals the permanent $\PER_n(X)$ of the matrix $[X_{ij}]$.
An estimate as claimed in the proposition would imply 
that $L(\PER_n(X)) \le n^{O(1)}$, which contradicts Valiant's hypothesis. 
\end{proof}

Assume now that the $p$-adic expansion 
$f= f_q p^q + f_{q+1} p^{q+1} + \ldots$ 
starts at order $q$ ($f_q\ne 0$). 
We call $f_q$ the {\em trailing coefficient} of $f$ with respect to the 
base~$p$. By contrast with Proposition~\ref{pro:coeff-p-ad}, we can say the 
following about the approximative complexity of the trailing coefficient in 
relation to the complexity of $f$. 

\begin{proposition}\label{pro:lead-cof}
The approximative complexity of the trailing coefficient~$f_q$ 
with respect to~$p$
is polynomially bounded in $d=\deg p$ and $\bL(f)$; we have 
$$
 \bL(f_q) \ \le\ O(M(d) \bL(f))   .
$$
\end{proposition}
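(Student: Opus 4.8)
The plan is to realize the trailing coefficient $f_q$ as the limit, under $\epsilon\mapsto 0$, of a suitable rational expression built from $f$ — exactly the perturbation idea already used for homogeneous parts and for $\ini_H f$. Write the $p$-adic expansion $f = f_q p^q + f_{q+1}p^{q+1}+\cdots$. The key observation is that if we substitute $p\mapsto p+\epsilon$ (more precisely, work with $f$ viewed through the $(p+\epsilon)$-adic lens, or equivalently consider $f$ as a polynomial in the new variable and shift), then the term $p^q$ becomes a power series in $\epsilon$ whose constant term in the $p$-adic filtration is controlled, and dividing by an appropriate power of $\epsilon$ isolates $f_q$ in the limit. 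Concretely, I would first reduce to the case $p=Y$ by the substitution argument already invoked after Proposition~\ref{pro:p-ad}: applying $X_i\mapsto YX_i$, $Y\mapsto p$ turns the general $p$-adic statement into the $Y$-adic statement over $A=k[X_1,\ldots,X_n]$, at the cost of the factor $M(d)$ in the bound (this is where the $M(d)$ comes from, matching the claimed estimate).

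So it suffices to treat the trailing coefficient with respect to $p=Y$: if $f = \sum_{i\ge q} f_i Y^i$ with $f_q\ne 0$, show $\bL(f_q)\le O(\bL(f))$. Here I would use the substitution $Y\mapsto Y+\epsilon$ applied to an approximative program for $f$, or more directly the scaling $Y\mapsto \epsilon^{-1}Y$ followed by a single evaluation: set $F(X,Y):=\epsilon^q f(X,\epsilon^{-1}Y)$. Wait — better to use $F:=\epsilon^{q} f(X, Y)\big|_{\text{after }Y\mapsto \epsilon^{-1}(\text{something})}$; the cleanest is the dilation $F(X,Y) := \epsilon^{-q}\, f(X,\epsilon Y)$ over the ring $R$, which gives $F = \sum_{i\ge q} f_i \epsilon^{i-q} Y^i = f_q Y^q + O(\epsilon)$, so $F_{\epsilon=0} = f_q Y^q$, and $L(F)\le L(f)+2$ (one multiplication to form $\epsilon Y$, one scalar multiplication by $\epsilon^{-q}$, recalling powers of $\epsilon$ are free constants). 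Passing to approximative complexity, $\bL(f_q Y^q)\le \bL(f)+2$. Now extract $f_q$ from $f_q Y^q$: this is a homogeneous-component-type extraction, and by Proposition~\ref{pro:comp_hom_parts} (applied with the degree-in-$Y$ grading), or simply by the observation that substituting $Y\mapsto 1$ after isolating the $Y^q$-bihomogeneous part costs at most a factor $M(q^2)$ — but that reintroduces dependence on $q$, which we must avoid. Instead, extract $f_q$ directly: $f_q Y^q$ has the single $Y$-degree $q$, so $f_q = (f_q Y^q)|_{Y=1}$, and the substitution $Y\mapsto 1$ costs nothing. Hence $\bL(f_q)\le \bL(f_q Y^q)\le \bL(f)+2 \le O(\bL(f))$ in the $p=Y$ case, and the general case follows with the extra $M(d)$ factor from the reduction above.

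The main obstacle — and the step to be careful with — is the reduction from general monic $p$ back to $p=Y$: one must check that the substitution $X_i\mapsto YX_i,\ Y\mapsto p$ genuinely carries the $p$-adic trailing coefficient of $f$ to the $Y$-adic trailing coefficient of the transformed polynomial (the $p$-adic coefficients $f_i\in A[Y]_{<d}$ become, after the substitution, certain polynomials, and one needs the grading to separate cleanly), and that division with remainder by $p$ — used implicitly to define the $p$-adic expansion — is available within complexity $M(d)$ as cited from \cite[Cor.~2.26]{bucs:96}; these are exactly the ingredients of Lemma~\ref{le:p-ad-mult} and Proposition~\ref{pro:p-ad}, so the bookkeeping is routine but must be done. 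A secondary point requiring a line of justification is that the dilation program $F(X,\epsilon Y)$ is legitimate in the approximative model: it uses $\epsilon$ and $\epsilon^{-1}$ as free constants, which is permitted by Definition~\ref{def:bL}, and the limit $F_{\epsilon=0}=f_q Y^q$ is the honest image under $\epsilon\mapsto 0$ since every coefficient $f_i\epsilon^{i-q}$ with $i>q$ lies in the maximal ideal of $R$ while the $i=q$ term is $f_q\in k[X]\subseteq R$. With these checks in place the bound $\bL(f_q)\le O(M(d)\bL(f))$ follows.
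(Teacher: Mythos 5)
Your treatment of the special case $p=Y$ is essentially the paper's argument: the dilation $F:=\epsilon^{-q}f(X,\epsilon Y)$ is the remainder of $f$ modulo $Y-\epsilon$ up to the cosmetic substitution $Y\mapsto 1$, and the conclusion $\bL(f_q)\le\bL(f)+O(1)$ is correct there. The gap is the step you yourself flag: the ``reduction from general monic $p$ back to $p=Y$'' via the substitution $X_i\mapsto YX_i$, $Y\mapsto p$. That substitution does not carry the $p$-adic trailing coefficient of $f$ to the $Y$-adic trailing coefficient of the transformed polynomial. A minimal counterexample: take $n=0$, $p=Y^2-1$, $f=2p=2Y^2-2$, so $q=1$ and $f_1=2$; the substitution gives $2(Y^2-1)^2-2=2Y^4-4Y^2$, whose $Y$-adic trailing coefficient is $-4$, not $2$. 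More structurally, the $p$-adic coefficients $f_i$ live in $A[Y]_{<d}$ (a rank-$d$ module over $A$), while the $Y$-adic coefficients of any polynomial in $A[Y]$ live in $A$; no single change of variable identifies the two. The remark after Proposition~\ref{pro:p-ad} that you are invoking goes in a different direction: it uses $p:=Y^{d_X+1}$, a pure power of a \emph{fresh} variable, to pass from the bihomogeneous decomposition to a $p$-adic one, and is not a tool for converting an arbitrary monic $p$ into $Y$.

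The correct way to handle general monic $p$ of degree $d$ is not to change variables but to work modulo $p-\epsilon$. Since $p\equiv\epsilon\pmod{p-\epsilon}$, one has $f\equiv\sum_i f_i\epsilon^i$ there, and because each $f_i$ already has $Y$-degree $<d$, the remainder $\rho(f)$ of $f$ by $p-\epsilon$ is exactly $\sum_{i\ge q}f_i\epsilon^i=\epsilon^q\bigl(f_q+\epsilon\,\rho(u)\bigr)$ with $u=\sum_{i>q}f_ip^{i-q-1}$. Hence $\epsilon^{-q}\rho(f)$ is defined over $R$ and specializes to $f_q$ at $\epsilon=0$, giving $\bL(f_q)\le 1+L(\rho(f))$. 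The factor $M(d)$ then enters for the right reason: Proposition~\ref{pro:p-ad} (division with remainder by the degree-$d$ polynomial $p-\epsilon$ at each step) yields $L(\rho(f))\le O(M(d)L(f))$. After this, the upgrade from $L(f)$ to $\bL(f)$ on the right-hand side is the semicontinuity step you also use, which is justified by a $\delta\mapsto\epsilon^N$ argument as in the proof of Lemma~\ref{le:aux}(1). In short: your perturbation idea is the right one, but the perturbation must be applied to $p$ itself (replace $p$ by $p-\epsilon$) rather than to the variable $Y$; the latter only works when $p=Y$.
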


\begin{proof}
By the semicontinuity of $\bL$ (Lemma~\ref{le:aux}(1)) it 
is sufficient to prove the statement for $L(f)$ on the right-hand side.  
Let $K=k(\epsilon)$ and $R$ be as usual. 
We have $f=f_q p^q + u p^{q+1}$ with some $u\in k[X,Y]$, hence
$f \equiv \epsilon^q f_q + \epsilon^{q+1}u \bmod (p -\epsilon)$.
Let $\rho(u) \in k[X,Y]$ denote the remainder of $u$ by division
with $p- \epsilon$ (viewed as a polynomial in $Y$). 
Then we conclude that 
$\rho(f) = \epsilon^q(f_q + \epsilon \rho(u)) $.
From the definition of approximative complexity we obtain 
$$
 \bL(f_q) \le L(\epsilon^{-q} \rho(f) ) \le 1 + L(\rho(f)). 
$$
On the other hand, we conclude from Proposition~\ref{pro:p-ad}
that $\rho(f)$ can be computed with $O(M(d)L(f))$ arithmetic operations. 
This proves the claim.
\end{proof}

Note that the main reason for us to work with approximative complexity 
is that we do not know whether a statement similar to Prop.~\ref{pro:lead-cof} 
does hold for complexity (compare Problem~\ref{prob:vp-bvp}). 

\subsection{Further Characterizations}\label{sse:bL-char}

In order to investigate the relationship between $\bL$ and $L$, it is useful to 
introduce a variant $\bL_\infty$ of approximative complexity, which differs from 
$\bL$ at most by a factor of two. 

\begin{definition}\label{def:bLvar}
The {\em approximative complexity $\bL_q(f)$ of order $q\in\NN$} of a polynomial~$f$ 
in $k[X_1,\ldots,X_n]$  is the smallest natural number $r$ such that there exists 
$f'\in k[[\epsilon]][X_1,\ldots,X_n]$ satisfying 
$$
 L(\epsilon^{q} f + \epsilon^{q+1} f' ) \le r ,
$$
where $L$ refers here to the total (division-free) complexity in the 
polynomial ring $k[[\epsilon]][X_1,\ldots,X_n]$ with free constants in 
the ring of formal power series $k[[\epsilon]]$.
Moreover, we define the {\em modified approximative complexity}
$\bL_\infty(f) := \min_q \bL_q(f)$. 
\end{definition}

The following lemma summarizes some of the basic properties of this notion.
The field $k((\epsilon))$ of formal Laurent series is defined as 
the quotient field of $k[[\epsilon]]$. 

\begin{lemma}\label{le:bLv}
\begin{enumerate}
\item[{\rm (1)}] We have $\frac{1}{2}\bL_\infty (f) \le \bL(f) \le \bL_\infty(f) + 1$.  

\item[{\rm (2)}] In Definition~{\rm \ref{def:bLvar}} one can equivalently work with 
the polynomial ring $k[\epsilon]$ instead of with the coefficient ring $k[[\epsilon]]$ 
of formal power series. 
In Definition~{\rm \ref{def:bL}} one can equivalently work with 
$R=k[[\epsilon]]$ and $K=k((\epsilon))$. 

\item[{\rm (3)}] We have $L(f) \le O(M(q)\,\bL_q(f))$. 

\end{enumerate}
\end{lemma}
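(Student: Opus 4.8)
The three parts are largely formal unwindings of the definitions, and I would treat them in the order (2), (1), (3), since (2) supplies the flexibility in the coefficient ring that makes the other two cleaner.

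For part~(2), the plan is to compare computations over $k[\epsilon]$, over $k[[\epsilon]]$, and over the local ring $R=k(\epsilon)\cap k[[\epsilon]]$. The inclusion $k[\epsilon]\subseteq k[[\epsilon]]$ is trivial, so the content is that a computation using power-series constants can be simulated using polynomial constants at no extra cost. Here I would invoke the standard substitution trick already used repeatedly in the paper (see the proof of Lemma~\ref{le:aux}(1)): an optimal straight-line program computing $\epsilon^q f + \epsilon^{q+1}f'$ over $k[[\epsilon]]$ uses only finitely many power-series constants; replace each such constant by a fresh indeterminate $\delta_i$, obtaining a computation over $k[\delta_1,\dots,\delta_m]$, then specialize $\delta_i\mapsto\epsilon^{N_i}$ for suitably large $N_i$ so that the $\epsilon$-adic valuations match up to the required order. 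One checks that after this specialization the output is still of the form $\epsilon^q f + \epsilon^{q+1}f''$ for some $f''\in k[[\epsilon]][X]$, and the length has not increased. The analogous statement for $\bL$ with $R=k[[\epsilon]]$, $K=k((\epsilon))$ follows the same way, using that $k[[\epsilon]]$ is local with maximal ideal $(\epsilon)$ and that the morphism $\epsilon\mapsto 0$ agrees with the one on $R=k(\epsilon)\cap k[[\epsilon]]$.

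For part~(1), for the upper bound $\bL(f)\le\bL_\infty(f)+1$: if $\bL_q(f)=r$ with $L(\epsilon^q f+\epsilon^{q+1}f')\le r$, then $F:=\epsilon^{-q}(\epsilon^q f+\epsilon^{q+1}f')=f+\epsilon f'$ lies in $R[X]$ (using part~(2) to work with $R=k[[\epsilon]]$), satisfies $F_{\epsilon=0}=f$, and $L(F)\le r+1$ since multiplication by the constant $\epsilon^{-q}$ costs one operation; hence $\bL(f)\le r+1$, and minimizing over $q$ gives the bound. For the lower bound $\tfrac12\bL_\infty(f)\le\bL(f)$: given $F\in R[X]$ with $F_{\epsilon=0}=f$ and $L(F)=\bL(f)=:r$, write each constant appearing in the optimal program as $c_i=\epsilon^{-q_i}a_i$ with $a_i\in k[[\epsilon]]$, let $q=\sum q_i$ (a crude but sufficient bound on the total pole order producible by $r$ operations), and observe that $\epsilon^q F$ can then be computed over $k[[\epsilon]]$ by clearing denominators throughout the program; this at most doubles the length (each of the $\le r$ operations contributes at most one extra multiplication by a power of $\epsilon$), so $\bL_q(f)\le 2r$ and hence $\bL_\infty(f)\le 2\bL(f)$. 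I expect the bookkeeping of pole orders in this direction to be the fussiest point; the factor $2$ is deliberately generous so that only a rough accounting is needed, but I would state the clearing-denominators step carefully as the one place where the "free constants from $K$" convention has real content.

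For part~(3), the plan is: starting from an optimal approximative program of length $\bL_q(f)=r$ computing $G:=\epsilon^q f+\epsilon^{q+1}f'$ over $k[\epsilon]$ (legitimate by part~(2)), reinterpret $\epsilon$ as a new variable and compute the homogeneous components (equivalently, the $\epsilon$-adic coefficients) of $G$ up to $\epsilon$-degree $q$; by Proposition~\ref{pro:p-ad} applied with $A=k[X_1,\dots,X_n]$ and $p=\epsilon$ — or, more directly, by Proposition~\ref{pro:comp_hom_parts}/Strassen's homogeneous-components technique in the single variable $\epsilon$ — this costs a factor $O(M(q))$. The coefficient of $\epsilon^q$ in $G$ is exactly $f$, so we obtain a genuine division-free computation of $f$ over $k$ of length $O(M(q)\,r)=O(M(q)\bL_q(f))$, which is the claim. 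The only mild subtlety is that the optimal approximative program may use constants of the form $\epsilon^{-N}$; but part~(2) lets us assume constants lie in $k[\epsilon]$, after which the homogeneous-component extraction in the variable $\epsilon$ applies verbatim, with the output $f$ read off as the degree-$q$ part. $\QED$
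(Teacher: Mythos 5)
Parts (1) and (3) follow essentially the same route as the paper. For the upper bound in (1), passing from $\bL_q$ to $\bL$ by dividing out $\epsilon^{-q}$ is exactly the paper's step. For the lower bound, tracking each intermediate result of a $k(\epsilon)$--computation as a pair (an exponent $\alpha$, a polynomial over $k[[\epsilon]]$), with each addition or subtraction costing one extra scalar multiplication by a power of $\epsilon$, is the paper's simulation; this gives the factor $2$. One slip there: $q=\sum q_i$ is \emph{not} a correct bound on the $\epsilon$--valuation that can accumulate --- repeated squaring of a constant of pole order $q_1$ over $r_\ast$ multiplication steps produces pole order $2^{r_\ast}q_1$, far exceeding $\sum q_i$ --- but this is harmless for (1), since $\bL_\infty$ is a minimum over all $q$ and the actual valuation $\gamma$ of the simulated output serves. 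For (3), the reduction to Proposition~\ref{pro:p-ad} with $p=\epsilon$ and reading off the $\epsilon^q$--coefficient is precisely what the paper does.

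Part (2), however, has a genuine gap. You propose to replace each power-series constant $c_i\in k[[\epsilon]]$ by a fresh indeterminate $\delta_i$ and then specialize $\delta_i\mapsto\epsilon^{N_i}$ ``so that the $\epsilon$--adic valuations match.'' But after this specialization the constant becomes $\epsilon^{N_i}$, which bears no relation to $c_i$: the specialization technique from the proof of Lemma~\ref{le:aux}(1) applies when constants are rational functions in a \emph{separate} indeterminate $\delta$ that one sends to $\epsilon^N$, whereas here the constants are power series in $\epsilon$ itself and there is no auxiliary indeterminate to collapse. The correct argument is truncation, which is what the paper uses (and why the paper obtains (2) as a byproduct of the first half of the proof of (1)): working modulo $\epsilon^{q+1}$, replace each $c_i$ by the unique polynomial $\tilde c_i\in k[\epsilon]$ of degree at most $q$ with $\tilde c_i\equiv c_i\pmod{\epsilon^{q+1}}$. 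The resulting program over $k[\epsilon]$, of the same length, computes some $\tilde G\equiv\epsilon^q f+\epsilon^{q+1}f'\pmod{\epsilon^{q+1}}$, hence $\tilde G=\epsilon^q f+\epsilon^{q+1}f''$ with $f''\in k[\epsilon][X]$. For the second assertion of (2) one truncates the Laurent-series constants to Laurent polynomials of high enough order, and uses that the output still lies in $k[[\epsilon]][X]$ with the same value at $\epsilon=0$. Since your upper bound in (1) and your part (3) both invoke (2), they inherit this gap until (2) is repaired, though the repair is routine once truncation replaces the $\delta_i\mapsto\epsilon^{N_i}$ substitution.
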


\begin{proof}
(1)\  For proving the right-hand estimate of~(1), we assume that we have an optimal 
straight-line program of length~$r=\bL_\infty(f)=\bL_q(f)$ computing $\epsilon^{q} f + \epsilon^{q+1} f'$ 
in $k[[\epsilon]][X]$. We can execute this straight-line program in 
$k[[\epsilon]][X]/(\epsilon^{q+1}) \simeq k[\epsilon][X]/(\epsilon^{q+1})$  
by applying the canonical projection. By interpreting this computation back 
in $k[\epsilon][X]$ we obtain that 
$L(\epsilon^{q} f + \epsilon^{q+1} f'')\le r$ for some suitable 
$f''\in k[\epsilon][X]$, where $L$ refers here to $k[\epsilon][X]$. 
We multiply the result with the constant $\epsilon^{-q}$ (this is the only 
computational step leading outside $k[\epsilon][X]$) and conclude that $\bL(f) \le r + 1$. 

For proving the left-hand estimate of~(1), we use the embedding of 
$K=k(\epsilon)$ in the field $k((\epsilon))$ of formal Laurent series. 
This leads to an embedding $K[X] \hookrightarrow k((\epsilon))[X]$. 
The elements of $k((\epsilon))[X]$ can be written in the form
$\epsilon^{-\alpha} \cdot A$ with $\alpha\in\NN$ and $A\in k[[\epsilon]][X]$. 
Note that for $\alpha \ge \beta$
$$
 \epsilon^{-\alpha} A \pm \epsilon^{-\beta} B 
   = \epsilon^{-\alpha} (A \pm \epsilon^{\alpha-\beta}\cdot B), \quad
 \epsilon^{-\alpha} A \cdot \epsilon^{-\beta} B  = \epsilon^{-\alpha-\beta} (A \cdot B) .
$$
If we encode an element $\epsilon^{-\alpha} \cdot A$ by the constant $\epsilon^{-\alpha}$ and the 
polynomial~$A$ over the ring $k[[\epsilon]]$, then we can simulate any 
division-free straight-line computation in $k((\epsilon))[X]$ 
of length~$r$ by a straight-line computation in $k[[\epsilon]][X]$ of length at most~$2r$.
(The number of nonscalar multiplications even remains the same.)
This way, a computation of $F=f+\epsilon f'$ in $k(\epsilon)[X]$ 
with $f'\in k[[\epsilon]][X]$ will lead to a computation in $k[[\epsilon]][X]$ of some $C$ 
such that $\epsilon^{-\gamma}C =F$ for some $\gamma\in\NN$, hence
$C= \epsilon^{\gamma} f + \epsilon^{\gamma+1}f'$.

(2)\ This follows from the first part of the proof of part~(1). 

(3)\ This is a consequence of part~(2) and Proposition~\ref{pro:p-ad} applied to compute $\epsilon$-adic coefficients. 
\end{proof}

Part~(3) of the above lemma provides a polynomial bound on the complexity in terms of the 
approximative complexity of a certain order of approximation~$q$ and this order~$q$.  
Unfortunately, the best general upper bound on the order~$q$, that we are 
able to prove, is exponential in the complexity. 

\begin{theorem}\label{th:OA}
For polynomials $f$ over an algebraically closed field~$k$ we have 
$\bL_{q}(f) \le 2 \bL(f)$ with $q\le 2^{\bL(f)^2}$.
\end{theorem}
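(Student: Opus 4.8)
The plan is to start from an optimal approximative straight-line program witnessing $\bL(f)=:r$, say a program $\Gamma$ over the constant field $K=k(\epsilon)$ computing $F\in R[X]$ with $F_{\epsilon=0}=f$ and $L(F)=r$, and to bound \emph{how bad} the denominators occurring in $\Gamma$ can be. The key point is that each constant used by $\Gamma$, and more importantly each intermediate result, is a rational function in $\epsilon$ whose numerator and denominator have degree controlled by the program. Since $k$ is algebraically closed, I would bring the whole computation into the form $\epsilon^{-\gamma}C$ with $C\in k[[\epsilon]][X]$ as in the proof of Lemma~\ref{le:bLv}(1), so that the obstruction is exactly the size of the pole order $\gamma$ that can build up.

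First I would make the program ``generic'' in $\epsilon$: replace each scalar constant by the fact that a straight-line program of length $r$ over $K$ uses at most $r$ constants, and by the first remark after Definition~\ref{def:bL} we may even assume these are built from $\epsilon,\epsilon^{-1}$ and elements of $k$ without changing the main result; but for the order estimate I instead keep arbitrary constants in $k(\epsilon)$ and track degrees. The heart of the argument is a degree-doubling bound: if the intermediate results $g_1,\dots,g_r$ are written as reduced fractions $p_i(\epsilon,X)/q_i(\epsilon,X)$ (or, after clearing, as $\epsilon^{-\gamma_i}C_i$), then an addition keeps the common denominator and a multiplication at worst multiplies denominators, so $\deg q_i$ and hence $\gamma_i$ can at most double at each of the $r$ steps. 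This gives $\gamma\le 2^{O(r)}$; the quadratic exponent $r^2$ in the statement comes from also needing to control the $X$-degrees (the denominators are polynomials in $\epsilon$ \emph{and} in the $X_i$, and eliminating the $X$-dependence of the denominator — which one must do to land in $k[[\epsilon]][X]$ with $\epsilon$-constants — costs another factor exponential in $r$ in the pole order). Concretely I expect to argue that $F=f+\epsilon f'$ for some $f'\in k[[\epsilon]][X]$ once we pass to $k((\epsilon))$, that the computation of $F$ there has length $\le 2r$ by Lemma~\ref{le:bLv}(1), and that the resulting $C=\epsilon^\gamma f+\epsilon^{\gamma+1}f''$ has $\gamma\le 2^{r^2}$, whence $\bL_\gamma(f)\le 2r=2\bL(f)$ with $q:=\gamma$ as claimed.

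The one subtlety I would be careful about is that ``reduced fraction'' arithmetic in several variables is not as clean as in one variable: gcd's in $k[\epsilon,X]$ do not simplify things in the naive way, and a product of two fractions need not be reduced, so the denominator can in principle fail to grow — that only helps — but one must make sure it does not grow \emph{faster} than doubling, and one must ensure that at the final step, projecting to $k((\epsilon))$ and extracting the pole order is legitimate, i.e.\ that $F$ really is defined at $\epsilon=0$ so that $\gamma$ is finite and the $\epsilon^{-\gamma}$ normalization exists. Since we are \emph{given} $F_{\epsilon=0}=f\in k[X]$, finiteness of $\gamma$ is automatic; what needs the algebraic-closedness hypothesis is the reduction to the normal form $\epsilon^{-\gamma}C$ via Lemma~\ref{le:bLv}(1),(2) (the statement already invokes $k$ algebraically closed, mirroring Theorem~\ref{th:TA} and Theorem~\ref{th:OA}'s own hypothesis). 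I would phrase the degree bookkeeping as a single induction: define $d_i$ to be the $\epsilon$-degree of the denominator of $g_i$ after clearing $X$-denominators, show $d_\rho\le d_i+d_j$ (multiplication) or $d_\rho\le\max(d_i,d_j)$ (addition) up to the $X$-elimination overhead, and unwind to get $d_r\le 2^{cr^2}$.

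\textbf{Main obstacle.} I expect the real work to be the $X$-elimination step: going from a computation whose intermediate denominators lie in $k(\epsilon)[X]$ to one whose constants lie in $k[[\epsilon]]$ (or $k((\epsilon))$) alone. Handling the $\epsilon$-denominators by doubling is routine; but to interpret the result as an element of $k[[\epsilon]][X]$ one must absorb the $X$-part of every denominator, and naively substituting a generic point for $X$ to kill it can blow the $\epsilon$-pole order up by another exponential factor — which is precisely why the exponent in the bound is $r^2$ rather than $r$. Getting that bookkeeping tight enough to land at $2^{r^2}$ (and not worse) is where I would spend most of the effort, probably by choosing the elimination point in $k^n$ (using $|k|=\infty$) so that no denominator vanishes and tracking the valuation at $\epsilon=0$ along the way via Lemma~\ref{le:aux}(1)'s substitution trick $\delta\mapsto\epsilon^N$ with $N$ explicitly bounded.
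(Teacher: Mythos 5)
Your proposal has a fundamental gap: the degree-doubling argument you propose as the ``heart'' of the proof cannot get off the ground, because it presupposes a bound on the pole orders (or $\epsilon$-degrees) of the \emph{constants} used by the optimal approximative straight-line program, and no such bound is given. An optimal $\Gamma$ witnessing $\bL(f)=r$ may use constants $z_1,\dots,z_m\in k(\epsilon)$ with arbitrarily large pole order at $\epsilon=0$ — nothing in Definition~\ref{def:bL} constrains them. Tracking how denominators propagate through the $r$ instructions only amplifies whatever you start with; it does not supply the initial bound. Proving that one can \emph{choose} the constants with controlled pole order is, in essence, the entire content of the order-of-approximation statement, and it is not a bookkeeping exercise.

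The paper gets this bound via a genuinely different, geometric route. It regards $\Gamma$ as a polynomial parametrization $\phi_\Gamma\colon k^m\to\{f\in A_n\mid \deg f\le 2^{r_\ast}\}$ of the set of computable polynomials, where $m$ is the number of constants and $r_\ast$ the number of multiplications. B\'ezout-type bounds give $\deg\graph(\phi_\Gamma)\le(2^{r_\ast})^m$. The key inputs are then two results of Lehmkuhl and Lickteig~\cite{leli:89}: (i) if $f\in\overline{\mathcal{C}_\Gamma}$, there is an irreducible curve $C\subseteq k^m$ of degree $\le\deg\graph(\phi_\Gamma)$ with $f\in\overline{\phi_\Gamma(C)}$; and (ii) one can find a point $\zeta\in k((\epsilon))^m$ of pole order $\le\deg C$ with $\phi_\Gamma(\zeta)$ defined over $k[[\epsilon]]$ and specializing to $f$. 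Substituting $\zeta$ back and tracking degrees in $z$ (not $X$!) gives a pole order $\le 2^{r_\ast}\cdot\deg C\le 2^{(m+1)r_\ast}$, and the exponent is then bounded by $r^2$ (with a small separate argument when $m=r=r_\ast$). This is where the $r^2$ comes from — it is $m\cdot r_\ast$, the logarithm of the degree of the graph of the parametrization, plus one more $r_\ast$ for the degree in $z$ — not from any ``$X$-elimination.'' Indeed, your worry about $X$-denominators is a red herring: $L$ is division-free, so the intermediate results live in $k(\epsilon)[X]$ and the only poles are in $\epsilon$, exactly the normal form $\epsilon^{-\gamma}C$ of Lemma~\ref{le:bLv}(1). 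The factor of $2$ in $\bL_q(f)\le 2\bL(f)$ you do identify correctly (it is the encoding overhead in Lemma~\ref{le:bLv}(1)), but the $q\le 2^{r^2}$ part requires the Lehmkuhl--Lickteig curve argument, which your proposal does not contain and does not replace.
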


\begin{proof}
We proceed as in Lehmkuhl and Lickteig~\cite{leli:89}, who proved a similar bound on the 
order of approximation for border rank (approximative bilinear complexity). 

The proof is based on the following geometric description of the set 
$\{f\in A_n \mid L(f) \le r\}$. 
The field~$k$ is assumed to be algebraically closed. 
A straight-line program~$\Gamma$ is a description for a computation 
of a polynomial from constants $z_1,\ldots,z_m$ and variables 
$X_1,\ldots,X_n$ (recall that we do not allow divisions). 
Let $\phi_\Gamma(z)$ denote the polynomial in $A_n:=k[X_1,\ldots,X_n]$ 
computed by~$\Gamma$ from the list of constants $z\in k^m$.
Let $r_\ast$ denote the number of multiplication instructions of~$\Gamma$.
Then we have 
$$
 \phi_\Gamma(z) = \sum_\mu  \phi_{\Gamma,\mu}(z) X^\mu ,
$$ 
where the sum runs over all $\mu\in\NN^n$ with 
$\mu_1+\ldots + \mu_n \le 2^{r_\ast}$. 
Moreover, the coefficient polynomials $\phi_{\Gamma,\mu}(z)$ 
have degree at most $2^{r_\ast}$. 
We interpret $\phi_{\Gamma}$ as 
a morphism $k^m \to \{f\in A_n \mid \deg f \le 2^{r_\ast}\}$ of affine varieties.  
Applying \cite[Theorem~8.48]{bucs:96} to the polynomial map 
$z\mapsto (z,\phi_\Gamma(z))$, we see that 
$\deg \graph (\phi_\Gamma) \le \big(2^{r_\ast}\big)^m =: D$. 
The image ${\cal C}_\Gamma$ of $\phi_\Gamma$ is an irreducible, constructible set.
We have for fixed~$r$ that
$$
\{f\in A_n \mid L(f) \le r\} = \bigcup_{\Gamma} {\cal C}_\Gamma ,
$$
where the union is over all straight-line programs~$\Gamma$ of length~$r$. 

Assume now that $f$ is in the Zariski-closure of the set on the left-hand side. 
Then we have $f\in \overline{{\cal C}_\Gamma}$ for some $\Gamma$. 
(We remark that in the case $k=\CC$ the Zariski-closure of constructible sets 
coincides with the closure with respect to the Euclidean topology 
(cf.~\cite[Theorem~2.33]{mumf:76}). 

We apply now two results proven in Lehmkuhl and Lickteig~\cite{leli:89} 
to the morphism~$\phi_\Gamma$.
Proposition~1 of~\cite{leli:89} claims that there exists an irreducible curve $C\subseteq k^m$ 
such that $f\in\overline{\phi_\Gamma (C)}$ and $\deg C \le \deg \graph (\phi_\Gamma)$. 
The Corollary to Proposition~3 in~\cite{leli:89} states that there exists a point 
$\zeta=(\zeta_1,\ldots,\zeta_m)\in k((\epsilon))^m$ such that $F:=\phi_\Gamma(\zeta)$ is defined over 
$k[[\epsilon]]$, satisfies $F_{\epsilon=0} =f$ and such that all formal 
Laurent series~$\zeta_i$ have order at least $-\deg C$. 
We conclude with Lemma~\ref{le:bLv}(2)
that $L(F)\le r$ and hence $\bL(f)\le r$, which  
proves the nontrivial direction of Theorem~2.4.
Moreover, we have shown that there is a straight-line program of length~$r$, which computes $F$ in
$k((\epsilon))[X]$ from the $X$-variables and constants $\zeta_i$ having order
at least $-\deg C \ge - D$. 
By a similar reasoning as in the proof of Lemma~\ref{le:bLv}(1), 
we can construct from 
this a straight-line program of length at most~$2r$, which computes in $k[[\epsilon]][X]$  
an element of the form  $\epsilon^{q} f + \epsilon^{q+1} f'$ with 
$q \le  2^{r_\ast} D = 2^{(m+1)r_\ast}$.
We therefore have $\bL_q(f) \le 2 r$. 
To complete the proof, we note that 
$(m+1)r_\ast \le r^2$, unless $m=r$ and $r=r_\ast$. 
However, in this case, the components of $\phi_\Gamma$ have degree at most~$1$
and we get $q \le  2^{r_\ast}$ since $\deg\graph\phi_\Gamma \le 1$.  
\end{proof}

By tracing the proofs of the above results it is straightforward to show the following statement.

\begin{remark}
By counting only nonscalar multiplications, one can introduce 
the notions $\bL^{\mathit{ns}}, \bL_q^{\mathit{ns}}$ in an analogous way.
We then have $\bL^{\mathit{ns}} = \bL_\infty^{\mathit{ns}} = \bL_q^{\mathit{ns}}$. 
\end{remark}

Finally, we show that the restriction to division-free straight-line programs 
in the definition of approximative complexity is not a serious one. 

\begin{lemma}\label{le:bL-div}
If $\bL'(f)$ denotes the approximative complexity of a multivariate polynomial~$f$ of degree~$d$, 
where divisons are allowed, then the divison-free approximative complexity~$\bL(f)$ 
can be bounded by $\bL(f) \le O(M(d)\bL'(f))$. 
Here the ground field~$k$ is assumed to be infinite. 
\end{lemma}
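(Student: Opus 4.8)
The plan is to reduce the division-full approximative computation to a division-free one by the standard technique of eliminating divisions à la Strassen, carried out over the coefficient ring $R = k(\epsilon)$ (or $k[[\epsilon]]$, which we may use by Lemma~\ref{le:bLv}(2)). By definition there is a polynomial $F \in K[X_1,\ldots,X_n]$ with $F_{\epsilon=0} = f$ and a straight-line program $\Gamma$ \emph{with divisions} computing $F$ with $L'(F) = \bL'(f)$ gates, where the constants lie in $K = k(\epsilon)$. The key point is that we cannot simply invoke Strassen's division-elimination verbatim: that technique requires choosing a point at which no denominator appearing along the computation vanishes, and then expanding as a power series around that point in a fresh variable. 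I would run this argument over the function field $K$ rather than over $k$, so that the auxiliary point can be taken in $K^n$ (the field $K$ is infinite, which is all Strassen's method needs).

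First I would homogenize: pass from $F$ to the homogeneous-components description, so that the division-free simulation produces, for a fixed truncation degree, all homogeneous parts of $F$ up to degree $d = \deg F$. Concretely, choose a point $a \in K^n$ such that every intermediate result $g_\rho$ of $\Gamma$, and in particular every denominator, is defined and nonzero at $a$ (possible since $K$ is infinite and there are only finitely many gates); translate $X_i \mapsto X_i + a_i$; introduce a homogenizing variable and run the computation in the truncated power-series ring $K[X_1,\ldots,X_n]/(X)^{d+1}$, inverting each denominator by its truncated geometric series. Each division gate is thereby replaced by $O(M(d))$ division-free operations for power-series inversion plus multiplication, so the simulating division-free program over $K$ has length $O(M(d) \cdot \bL'(f))$ and computes the homogeneous parts of the translated $F$ up to degree $d$; summing these and undoing the translation recovers $F$ itself with only $O(d)$ extra operations (or absorbing everything into the $O(M(d)\bL'(f))$ bound). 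This yields a \emph{division-free} straight-line program over $K$ of length $O(M(d)\bL'(f))$ computing $F$, and since $F_{\epsilon=0} = f$, Definition~\ref{def:bL} gives $\bL(f) \le O(M(d)\bL'(f))$.

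The one subtlety — and the step I expect to need the most care — is the choice of the auxiliary point $a$: it must be a point of $K^n$ avoiding a finite union of proper subvarieties defined over $K$, which is fine, but one must check that translating by $a \in K^n$ does not disturb the relation $F_{\epsilon=0} = f$ in an uncontrolled way. It does not, because translation is an automorphism of $K[X]$ commuting with nothing problematic: after simulating and then translating back by $-a$ we recover exactly $F$, and the specialization $\epsilon \mapsto 0$ is applied only at the very end, to $F$, not to the intermediate results (which, as Example~\ref{ex:warning} warns, may not even be defined at $\epsilon = 0$). This is precisely the philosophy already used in Definition~\ref{def:bL} and in the proof of Proposition~\ref{pro:graph}, so no new conceptual ingredient is required beyond noting that Strassen's division-elimination is insensitive to the ground field and hence applies over $K$ just as over $k$.
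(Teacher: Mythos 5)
Your proposal is correct and follows essentially the same route as the paper: choose a point where all intermediate results of the division-containing program are defined, translate there, eliminate divisions via Strassen's power-series truncation (the paper simply cites \cite[Theorem~7.1]{bucs:96} where you spell it out by hand), and use that the homogeneous components of $F$ up to degree $d$ remain defined over the local ring $R$ and specialize correctly under $\epsilon\mapsto 0$. The only cosmetic difference is that the paper picks the auxiliary point $\xi$ in $k^n$ (so translation commutes with $\epsilon\mapsto 0$ and there is no need to translate back), whereas you pick $a\in K^n$ and undo the translation at the end; both are valid, and you should note that your ``$R = k(\epsilon)$'' should read $K=k(\epsilon)$.
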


\begin{proof}
Formally, $\bL'(f)\le r$ means there exists $F\in R(X)$ with $F_{\epsilon=0}=f$ and such that
the complexity of $F$ in $K(X)$ (allowing divisions) is at most~$r$. 
We can avoid the divisions using the well-known ideas of~\cite{stra:73-1}.
Accordingly, there exists $\xi\in k^n$ such that an optimal computation of $F$ takes places in the local subring 
${\cal O}_\xi$ of $K(X)$ consisting of the rational functions defined at $X=\xi$. To simplify notion, 
we assume without loss of generality that $\xi=0$. 
Let $F^{(\delta)}$ denote the homogeneous component of~$F$ of degree~$\delta$.
By~\cite[Theorem~7.1]{bucs:96}, the division-free complexity $L$ of these components of degree up to~$d$ satisfies
$L(F^{(0)},\ldots,F^{(d)}) =O(M(d)r)$. 
As $F$ is defined over $R$, also its homogeneous components are defined over~$R$ and we have 
$(F^{(\delta)})_{\epsilon=0} =f^{(\delta)}$. This implies that 
$\bL(f^{(0)},\ldots,f^{(d)}) =O(M(d)r)$ as claimed.
\end{proof}

\section{Appendix}

We include here the proofs of Theorem~\ref{th:fa-I} and 
Proposition~\ref{pro:root}.
Although being essentially the same as the original proofs in~\cite{kalt:86-1}
(with minor improvements in complexity), 
we believe that our exposition, integrated in the coherent framework of this paper, 
will facilitate the reader's understanding 
of the difficulties encountered in extending these results to Theorem~\ref{th:main}. 

\proofof{Theorem~\ref{th:fa-I}} 
Let $f=g^e h$ with coprime polynomials $g,h\in k[X_1,\ldots,X_n]$  
and $d =\deg g$.
As in the proof of Lemma~\ref{le:fe} we may achieve by a linear 
coordinate transformation that $g$~is monic of degree~$d$ with respect to 
the variable $Y:=X_n$. We put $X:=(X_1,\ldots,X_{n-1})$
and $A=k[X]$. 
Using resultants, we see that there is some point $\xi\in k^{n-1}$ 
such that the univariate polynomials $g^{(0)}:=g(\xi,Y)$ and 
$h^{(0)}:=h(\xi,Y)$ are coprime. By a coordinate translation, 
we may assume that $\xi=0$. 
Note that $g^{(0)}$ is a monic univariate polynomial over $k$ of degree~$d$. 
We have $u h^{(0)} + v g^{(0)} = 1$ with uniquely determined $u,v\in k[Y]$ 
such that $\deg u < d$. 

The basic idea is to use Hensel lifting in order to succes\-sively compute 
the factorization $f=g^e h$ from $f^{(0)} := (g^{(0)})^e h^{(0)}$. 
The crux is the choice of the suitable valuation by which to lift. 
We will lift with respect to the total degree in the $X$-variables. 
Consider the decomposition of polynomials in $A[Y]$ into homogeneous parts 
with respect to the total degree in the $X$-variables:
$$
 f = \sum_{\delta\ge 0} f^{(\delta)} , \  
 g = \sum_{\delta\ge 0} g^{(\delta)} , \ 
 h = \sum_{\delta\ge 0} h^{(\delta)}  ,  
$$
where $f^{(\delta)}, g^{(\delta)}, h^{(\delta)}\in A[Y]$ are homogeneous of 
degree $\delta$ in the $X$-variables.
This notation is consistent with our earlier introduction of 
$f^{(0)}, g^{(0)}, h^{(0)}$. 
We have $\deg_Y g^{(\delta)} < d$ for $\delta>0$, and $g^{(\delta)}=0$ for $i>d$. 

We are going to derive a formula, which allows to compute 
$g^{(s+1)}, h^{(s+1)}$ from the homogeneous parts of $g$ and $h$ up to degree~$s$.
From $f=g^e h$ we obtain modulo the ideal generated by the monomials of degree~$s+2$ in 
the $X$-variables that 
\begin{eqnarray*}
f &\equiv& \bigg( \sum_{\delta=0}^s g^{(\delta)} +  g^{(s+1)} \bigg)^e 
           \bigg( \sum_{\delta=0}^s h^{(\delta)} +  h^{(s+1)} \bigg) \\
  &\equiv& \bigg( \bigg(\sum_{\delta=0}^s g^{(\delta)} \bigg)^e + e (g^{(0)})^{e-1} g^{(s+1)} \bigg)
           \bigg( \sum_{\delta=0}^s h^{(\delta)} + h^{(s+1)} \bigg) \\
  &\equiv& \bigg(\sum_{\delta=0}^s g^{(\delta)} \bigg)^e \bigg(\sum_{\delta=0}^s h^{(\delta)} \bigg) 
             + e (g^{(0)})^{e-1} g^{(s+1)} h^{(0)} + (g^{(0)})^e h^{(s+1)} .
\end{eqnarray*}
If we write 
$$
 F:= \sum_{\delta\ge 0} F^{(\delta)} := \bigg( \sum_{\delta=0}^s g^{(\delta)} \bigg)^e 
                       \bigg( \sum_{\delta=0}^s h^{(\delta)} \bigg)  
$$
(omitting the dependence of $F$ on~$s$) and set
$\Delta^{(s+1)} := f^{(s+1)} - F^{(s+1)}$, then we obtain 
\begin{equation}\label{eq:hensel-rec}
  Q:=\frac{\Delta^{(s+1)}}{(g^{(0)})^{e-1}} = e g^{(s+1)} h^{(0)} + g^{(0)} h^{(s+1)} .   
\end{equation} 
Since $\deg_Y g^{(s+1)}<d$, this relation uniquely determines the polynomial $g^{(s+1)}$. 
On the other hand, we have $Q=uQ h^{(0)} + vQ g^{(0)}$. 
It follows that $eg^{(s+1)}$ is the remainder of the division of $u Q$ by $g^{(0)}$, 
hence $eg^{(s+1)}$ is the $(e-1)$th $g^{(0)}$-adic coefficient of $u \Delta^{(s+1)}$. 

We write elements $a\in A[Y]$ in the form
$$
 a = \sum_{\delta\ge 0} a^{(\delta)} = \sum_{\delta\ge 0, i\ge 0} a_i^{(\delta)} (g^{(0)})^i 
   = \sum_{\delta\ge 0, i\ge 0, j<d} a_{ij}^{(\delta)} Y^j (g^{(0)})^i 
$$
with {\em coefficients} $a_{ij}^{(\delta)}\in A$, which are homogeneous polynomials of 
degree~$\delta$ in the $X$-variables. Note that the 
$a_i^{(\delta)} = \sum_{j<d} a_{ij}^{(\delta)} Y^j$ 
are the $g^{(0)}$-adic coefficients of $a^{(\delta)}$. 
The collection of coefficients
$$
  \mbox{$a_{ij}^{(\delta)}$\quad for $0\le \delta\le d,\ 0\le j < d,\ 0 \le i \le D$}
$$
will be used to represent the element $a$ approximatively. 
We will call this an approximation of order $D$ of $a$. 

Assume that $c= a\cdot b$ in $A[Y]$. A straightforward generalization of 
Lemma~\ref{le:p-ad-mult} yields that the coefficients $c_{ij}^{(\delta)}$ 
of $c$ with $\delta\le d, j<d, i\le D$ can be computed from the corresponding 
coefficients of $a$ and $b$ with $O(M(d^2 D))$ arithmetic operations in $A$. 
Using this generalization of Lemma~\ref{le:p-ad-mult}, we can generalize 
Proposition~\ref{pro:p-ad} in an obvious way and obtain that the coefficients 
$f_{ij}^{(\delta)}$ for $\delta \le d, j< d, i\le de$ 
can be computed with  $O(M(d^2 \cdot de)L(f))$ arithmetic operations in $A$.  

For $0\le s\le d$ we define $C_s$ as the set of the coefficients 
$g_{ij}^{(\delta)}, h_{ij}^{(\delta)}$ for  $ \delta\le s, j<d, i \le (d-s)e$. 
Note that $C_s$ is a subset of $A=k[X]$. 
Moreover, $C_0\subseteq k$, thus the  elements of $C_0$ can be considered as free constants. 
Note also that $g^{(\delta)} = g_0^{(\delta)}$ for $\delta >0$ as $\deg_Y g^{(\delta)} <d$. 

Inductively, we assume now that we have already computed the elements of $C_s$ for $s<d$. 
From these data we can compute the coefficients 
$F_{ij}^{(s+1)}$ of $F$ for $j<d, i\le (d-s)e$ with $O(M(d^2(d-s)e\log e))$ 
arithmetic operations using the above mentioned generalization of  Lemma~\ref{le:p-ad-mult} 
($\log e$ squarings). 
From this and the coefficients $f_{ij}^{(\delta)}$, $\delta \le d, j < d, i\le de$, we compute 
the coefficients of $u \Delta^{(s+1)}$ up to order $(d-s)e$ 
with $O(M(d(d-s)e))$ arithmetic operations. 
In particular, we have thus computed the coefficients of $eg^{(s+1)}$, since 
$e g^{(s+1)}$ is the $(e-1)$th $g^{(0)}$-adic coefficient of $u \Delta^{(s+1)}$. 
From equation~(\ref{eq:hensel-rec}) we obtain 
$$
 \Delta^{(s+1)} - e g^{(s+1)} (g^{(0)})^{e-1} h^{(0)} =  h^{(s+1)} (g^{(0)})^e .
$$
We can compute the coefficients of this polynomial up to order $(d-s)e$ 
with further $O(M(d(d-s)e))$ arithmetic operations. 
This way, we get the coefficients of $h^{(s+1)}$ up to order $(d-s-1)e$. 
Note that the order has decreased by $e$.  

Summarizing, the cost of each induction step is $O(M(d^3 e\log e))$ 
and there are at most $d$ induction steps. 
The polynomial $g=g^{(0)} + g^{(1)} + \ldots +  g^{(d)}$ can be computed with further $O(d^2)$ additions
from the coefficients of the $g^{(\delta)} = g_0^{(\delta)}$ for $1\le \delta \le d$. 
Altogether, we obtain $L(g)=O(M(d^3 e)(L(f) + d\log e))$ as claimed.
\proofend

\begin{proposition}\label{pro:root}
Assume that $f=g^e$ in $k[X_1,\ldots,X_n]$, $d=\deg g \ge 1$ and $\chara k =0$. 
Then $L(g) \le O(M(d)L(f))$. 
\end{proposition}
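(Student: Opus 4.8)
The plan is to compute $g$ as an $e$-th root of a normalized, truncated version of $f$, exploiting that in characteristic zero one has $(1+u)^{1/e}=\exp(e^{-1}\log(1+u))$ for a power series $u$ of positive order. The key virtue of this identity is that $e^{-1}$ enters only as a scalar, so the cost will not depend on the multiplicity~$e$.

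First I would translate coordinates. Since $\chara k=0$, the field $k$ is infinite, so there is a point $t\in k^n$ with $g(t)\neq 0$. Put $\tilde g(X):=g(X+t)$ and $\tilde f(X):=f(X+t)$; then $\tilde f=\tilde g^e$, $L(\tilde f)\le L(f)+n$, $L(g)\le L(\tilde g)+n$, and, writing $c:=\tilde g(0)\in k^\times$, we have $\tilde f(0)=c^e$. We may assume every variable occurs in~$f$, so that $n=O(L(f))$. Now the unique power series of constant term~$1$ whose $e$-th power is $\tilde f/c^e$ is the polynomial $\tilde g/c$, which has degree~$d$; hence it suffices to recover it modulo total degree $d+1$.

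Next I would truncate and extract the root. By Strassen's method for computing homogeneous components, \cite[Thm.~7.1]{bucs:96}, the homogeneous parts $\tilde f^{(0)},\dots,\tilde f^{(d)}$ of $\tilde f$ with respect to total degree can be computed without divisions using $O(M(d)\,L(\tilde f))=O(M(d)\,L(f))$ arithmetic operations; here $\tilde f^{(0)}=c^e$. Introducing a fresh indeterminate~$T$, the series $v(T):=\sum_{j=0}^{d}c^{-e}\tilde f^{(j)}\,T^{j}\in k[X][T]/(T^{d+1})$ has constant term~$1$, and the homogeneous part of $\tilde g/c$ of degree~$j$ is the coefficient of $T^{j}$ in $v(T)^{1/e}=\exp(e^{-1}\log v(T))\bmod T^{d+1}$. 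The classical Newton-iteration algorithms for power-series inversion, $\log$ and $\exp$ (see \cite[Chap.~2]{bucs:96} and \cite[\S9]{gage:99}) compute this from $v(T)$ with $O(M(d))$ arithmetic operations in the coefficient ring $k[X]$; each such operation on coefficients is one instruction of a straight-line program over~$k$ in $X_1,\dots,X_n$, and the divisions by~$e$ and by the integers occurring in $\exp$ and $\log$ are multiplications by constants of~$k$ (this is where $\chara k=0$ is used). Summing the $d+1$ coefficients of the result and multiplying by~$c$ yields $\tilde g$, and substituting $X_i\mapsto X_i-t_i$ recovers $g$; adding up, $L(g)\le L(\tilde g)+n=O(M(d)\,L(f))$.

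I expect the only delicate point to be the independence of the bound from~$e$: a direct Newton iteration for the equation $Z^{e}=\tilde f/c^{e}$ would have to form $e$-th powers and would cost an extra factor $\log e$, which is avoided by routing the root extraction through $\log$, a scalar multiplication by~$e^{-1}$, and $\exp$. The remaining ingredients --- the translation, the computation of homogeneous parts, and the reassembly --- are routine. (Alternatively, one could obtain the same bound by specializing the Hensel-lifting argument used for Theorem~\ref{th:fa-I} to the case $h=1$, but the route above is shorter.)
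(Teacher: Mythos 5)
Your proof is correct, and it matches the paper's on the preprocessing (translate so that $g$ has nonzero constant term, extract the total-degree homogeneous parts of the translated $f$ up to degree~$d$ via Strassen's technique at cost $O(M(d)L(f))$, absorb the $O(n)$ overhead of the shift into $O(L(f))$). Where you diverge is the root-extraction step. The paper runs Newton's iteration directly on $(1+\varphi)^e - f = 0$; each step requires computing $(1+\varphi_\nu)^{e-1}$ by repeated squaring, which costs an extra factor $\log e$, giving $O\big(M(d)(L(f)+\log e)\big)$, and the paper then absorbs this by observing $L(f)\ge \log\deg f\ge\log e$. You instead route the $e$-th root through $v^{1/e}=\exp(e^{-1}\log v)$, so that $e$ only enters as the scalar $e^{-1}$ and the $\log e$ factor never arises, at the modest expense of invoking power-series $\log$ and $\exp$ (both $O(M(d))$ coefficient-ring operations). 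Both routes deliver $O(M(d)L(f))$; yours is marginally cleaner in that it sidesteps the final $L(f)\ge\log e$ observation, while the paper's has the advantage of using exactly the same Newton machinery that reappears in the proof of the main Theorem~\ref{th:main}, so it serves better as a warm-up. Your identification of the degree-$j$ coefficient of $v(T)^{1/e}$ with the $j$-th homogeneous component of $\tilde g/c$ is justified by the uniqueness of the $e$-th root with constant term~$1$ in $k[X][[T]]$, and your remark that the divisions by $e$ and by small integers inside $\exp$, $\log$ are scalar multiplications valid in characteristic zero is exactly the point where $\chara k = 0$ enters.
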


\begin{proof}
By a coordinate shift, we may assume that $g(0)\ne 0$; without loss of generality $g(0)=1$. 
The polynomial $\varphi = g - 1 \in k[X]$ is the unique solution of the equation
$$
 (1+\varphi)^e - f = 0, \ \varphi(0)=0
$$ 
to be solved in the ring of formal power series $k[[X]]$. 
This power series $\varphi$ can be recursively computed by the Newton iteration 
$\varphi_0 =0$ and  
$$
 \varphi_{\nu+1} = \varphi_\nu - \frac{(1+\varphi_\nu)^e -f }{e(1+\varphi_\nu)^{e-1}}
                 = -\frac{1}{e} + (1 - \frac{1}{e})\varphi_\nu + \frac{1}{e} \frac{f}{(1+\varphi_\nu)^{e-1}}
$$
satisfying $\varphi_\nu\equiv \varphi \bmod (X)^{2^\nu}$.
(Compare Section~\ref{sse:graph} and~\cite[Theorem 2.31]{bucs:96}.) 

We first compute the homogeneous parts of~$f$ up to degree~$d$ with $O(M(d)L(f))$ arithmetic operations 
by a variant of Proposition~\ref{pro:comp_hom_parts}.
Using Lemma~\ref{le:compos}, we can compute from this and the homogeneous parts of $\varphi_\nu$ 
up to degree $2^{\nu}$ the homogeneous parts of $\varphi_{\nu+1}$ up to degree $2^{\nu+1}$ 
with  $O(M(2^\nu) \log e)$ arithmetic operations ($\log e$ squarings). 
As $\sum_{\nu=0}^N M(2^\nu)) \le M(2^{N+1} -1)$, a total of 
$O(M(d)(L(f) + \log e))$ arithmetic operations is sufficient. 
Since $L(f) \ge \log\deg f \ge \log e $, the claim follows. 
\end{proof}


\end{document}